\documentclass[aps,prx,reprint,superscriptaddress,longbibliography,nofootinbib,floatfix]{revtex4-2}

\usepackage{amsmath,amssymb,amsfonts,amsthm,mathtools}
\usepackage{mathrsfs}
\usepackage{bm}
\usepackage{graphicx}
\usepackage{array}[=2016-10-06]
\usepackage{microtype}
\usepackage{hyperref}
\hypersetup{hidelinks}
\usepackage{xcolor}

\newenvironment{revision}{}{\par}
\usepackage{etoolbox}
\usepackage{tikz}
\usetikzlibrary{arrows.meta}

\graphicspath{{./}{./PRX_Quantum_numerical_codes/teleportation_experimental_model/}}

\DeclareMathOperator{\id}{id}
\DeclareMathOperator{\Tr}{Tr}
\DeclareMathOperator{\Lie}{Lie}

\newcommand{\cH}{\mathcal H}
\newcommand{\ket}[1]{\lvert #1\rangle}
\newcommand{\bra}[1]{\langle #1\rvert}
\newcommand{\ketbra}[1]{\lvert #1\rangle\!\langle #1\rvert}
\newcommand{\norm}[1]{\lVert #1\rVert}

\theoremstyle{plain}
\newtheorem{theorem}{Theorem}
\newtheorem{proposition}[theorem]{Proposition}
\newtheorem{lemma}[theorem]{Lemma}
\newtheorem{corollary}[theorem]{Corollary}
\theoremstyle{definition}
\newtheorem{definition}[theorem]{Definition}
\theoremstyle{remark}

\begin{document}

\title{Two-body control of noiseless subsystems with mixed \texorpdfstring{$SU(3)$}{SU(3)} representations lifts teleportation above the classical fidelity limit}

\author{Otavio A. D. Molitor}
\email[Corresponding author: ]{dmolitor.oa@protonmail.com}
\affiliation{International Centre for Theory of Quantum Technologies (ICTQT), University of Gda\'nsk, 80-308 Gda\'nsk, Poland}

\author{Sergii Strelchuk}
\email{sergii.strelchuk@cs.ox.ac.uk}
\affiliation{Department of Computer Science, University of Oxford, Parks Road, Oxford OX1 3QG, United Kingdom}

\author{Micha\l{} Studzi\'nski}
\email{michal.studzinski@ug.edu.pl}
\affiliation{International Centre for Theory of Quantum Technologies (ICTQT), University of Gda\'nsk, 80-308 Gda\'nsk, Poland}

\date{\today}

\begin{abstract}
Collective noise can leave degrees of freedom unaffected, allowing them
to store quantum information. Processing this information also requires
physically available logical gates. We consider $r$ fundamental qutrits
transforming under $U\in SU(3)$ and $s$ dual qutrits transforming under
its complex conjugate $\overline U$, with collective action
$R_{r,s}(U)=U^{\otimes r}\otimes\overline U^{\otimes s}$.
Using the image of the walled Brauer algebra and Lie-algebra arguments,
we show that adjacent exchanges within each qutrit type and one
singlet-projector interaction between the two types generate any
independently chosen family of special unitary gates on the noiseless subsystems by a finite pulse sequence, for every $r,s\geq1$.
This establishes semi-universality of this system. We illustrate the result with a numerically optimised sequence of 24 pulses for three fundamental qutrits and one dual qutrit. We then use the encoding and available
logical corrections for qutrit teleportation under correlated Weyl noise. Under explicit assumptions on channel composition and which trials are retained, and allowing for baseline implementation errors, we identify noise ranges in which encoded teleportation exceeds the
classical fidelity bound of $1/2$, while the unencoded teleportation channel is entanglement breaking. We analyse fidelity on accepted trials separately from the transmission costs of the additional physical qutrits.
\end{abstract}

\maketitle

\section{Introduction}
\label{sec:introduction}

Noise does not always lead to negative outcomes of computation. Collective noise, that is, noise that acts equally on all subsystems of a given quantum system can induce a separation into degrees of freedom that are and are not affected by it. The Hilbert space decomposes into
symmetry sectors, each with a gauge factor affected by the noise and a
multiplicity factor on which the noise acts as the identity. These
multiplicity factors are noiseless subsystems, and encoding quantum
information in them is a form of passive quantum error correction
\cite{ZanardiRasetti1997,LidarChuangWhaley1998,KnillLaflammeViola2000}.
In a decoherence-free subspace (DFS), every encoded state is unchanged
by the noise. A noiseless subsystem allows the gauge state to change
while the logical state remains unchanged.
When collective errors dominate, this form of passive error
correction can reduce the demands on active error correction. Its
usefulness depends, however, on being able to manipulate the encoded
information as well as preserve it. The combination of passive suppression and active correction has
been demonstrated in a different setting: stabilised cat qubits suppress
bit flips, while an outer repetition code corrects phase
errors~\cite{PuttermanEtAl2025}.

Noiseless encodings already have several uses in quantum communication.
Without a shared reference frame, an unknown collective rotation can make
a single transmitted qubit completely uninformative to the receiver.
Several qubits can nevertheless carry quantum information faithfully if
it is stored in a noiseless subsystem~\cite{BartlettRudolphSpekkens2003}.
DFS encodings have also been proposed for teleporting logical
states~\cite{WeiDengZhangFeng2007}. In the photonic experiment of
Prevedel et al., teleportation was performed through measurements on a
cluster state, with the encoding protecting the information against
collective phase noise~\cite{PrevedelEtAl2007}. Other photonic experiments
distributed entanglement through noisy optical fibres and improved the
distribution efficiency in the presence of loss
\cite{YamamotoEtAl2008,IkutaEtAl2011}. Passive encodings can thus protect
not only a transmitted state, but also its entanglement with a distant
system.

Using a noiseless encoding for quantum information processing also
requires physically available logical gates. Symmetry-preserving
interactions may generate only a restricted set of logical
operations~\cite{Marvian2022}. The aim is to find interactions that respect the collective symmetry and provide sufficiently general logical
control while acting on only a few physical systems at a time. For qubits, two-body exchange interactions achieve universal quantum computation with suitable encodings~\cite{BaconKempeLidarWhaley2000,DiVincenzoEtAl2000,
KempeBaconLidarWhaley2001,KempeBaconDiVincenzoWhaley2001}. Exchange-only logical gates have been demonstrated with encoded spin qubits in silicon~\cite{WeinsteinEtAl2023}, with recent processors
integrating cryogenic digital control~\cite{HRLQuantumTeam2026}.

We now consider qutrits, the simplest systems beyond qubits. Independent logical control means that a gate on one logical subsystem can be chosen without fixing the gates applied to the others. This is a stronger
requirement than performing arbitrary gates in a single chosen encoding, as in exchange-only constructions for encoded qudits~\cite{VanMeter2021}. In general, for qudits transforming under the same $U\in SU(d)$, symmetry-preserving interactions between pairs of subsystems do not provide independent logical control and can limit the logical gates available within a sector due to additional conservation laws~\cite{MarvianLiuHulse2021,Marin2007}. In the case of qutrits, three-body interactions are sufficient for independent logical control, although restrictions on relative phases between sectors remain~\cite{HulseLiuMarvian2024}. This property is called semi-universality: special-unitary gates can be chosen independently on the logical subsystems, without requiring arbitrary independent relative phases between sectors. Hulse \textit{et al.}~\cite{HulseLiuMarvian2024} give two-body constructions using six ancillas for
semi-universality on a fundamental-qutrit register and nine for full
invariant universality, with the ancillary state restored at the end~\cite[Corollary~3]{HulseLiuMarvian2024}. These counts are sufficient,
not asserted to be minimal.

We ask whether interactions between pairs of subsystems can provide independent logical control throughout a mixed register, without additional initialised ancillas, by combining qutrits with different transformation
rules under the same collective noise. We answer this question for systems in which some qutrits transform
under \(U\in SU(3)\), known as fundamental qutrits, and others, the dual qutrits, under its complex conjugate
\(\overline U\).
Writing \(V\cong\mathbb C^3\) for the
first representation and \(V^*\) for the second, a system with \(r\)
qutrits of the first type and \(s\) of the second has global Hilbert space
\(\cH_{r,s}=V^{\otimes r}\otimes(V^*)^{\otimes s}\), with mixed-tensor representation of the noise:
\begin{equation}
  R_{r,s}(U)=U^{\otimes r}\otimes\overline U^{\otimes s},
  \qquad U\in SU(3).
  \label{eq:collective-action-intro}
\end{equation}
The same \(U\) determines the transformation of every qutrit, so the noise remains collective even though the two types respond differently. The existence of two kinds of qutrits makes an additional interaction available. A pair
containing one qutrit of each type has a maximally entangled singlet that is unchanged by \(U\otimes\overline U\). An interaction proportional
to the projector onto this singlet respects the collective symmetry from Eq.~\eqref{eq:collective-action-intro}, as do exchange interactions between qutrits of the same type. Our main theorem shows that these interactions provide independent
logical control over every multiplicity sector for every \(r,s\geq1\). More precisely, any logical unitary with determinant one can be engineered from the three kinds of physical interactions independently on each
noiseless subsystem. Within a chosen sector, this
includes every logical gate up to an overall phase. This ensures semi-universality as defined in~\cite{HulseLiuMarvian2024} for the set of physically available gates. The result and its proof overview appear in
Sec.~\ref{sec:global-semiuniversality}. We illustrate it with four qutrits,
three fundamental and one dual, in Sec.~\ref{subsec:explicit-sector-control}.
There we numerically optimise a sequence of 24 pulses implementing a
cyclic shift on one logical qutrit while leaving the remaining logical
subsystems unchanged, to numerical precision. The theorem establishes
the existence of finite sequences for arbitrary independent special-unitary
targets, without supplying an efficient general synthesis algorithm.

% \revxxiv{Noise correlations can assist communication through other mechanisms.
% In Pirandola's model, two systems can retain their mutual entanglement after
% transmission through a correlated environment even though each local
% channel, considered separately, is entanglement
% breaking~\cite{Pirandola2013Reactivation}. Correlated environments can
% also produce memory effects that improve teleportation and, in suitable
% limits, permit perfect recovery~\cite{LaineBreuerPiilo2014}. Fidelity
% improvements have been studied for both qubits and qutrits subject to
% correlated errors~\cite{GuoEtAl2020,XuEtAl2022}. These mechanisms rely on
% different assumptions. They nevertheless show why the effect of noise on
% individual systems can underestimate the communication possibilities when
% correlations are present.}
% \revxxv{Correlation-assisted bosonic communication has also been analysed
% theoretically and demonstrated in a proof-of-principle
% experiment~\cite{PirandolaEtAl2021}, using a relay architecture rather
% than the noiseless encoding considered here.}

We apply our framework to noisy qutrit teleportation. Qutrit teleportation
has already been demonstrated with photonic systems
\cite{LuoEtAl2019,HuEtAl2020}. Before entanglement distribution, one member
of an entangled qutrit pair is encoded into a logical qutrit stored in
four physical qutrits, three fundamental and one dual. The collective
part of the noise leaves this logical qutrit, including its entanglement
with the other member of the pair, unchanged. The control theorem
supplies the logical corrections required by the Bell measurement.
These gates act within the chosen sector and do not cause leakage.
To compare encoded and unencoded teleportation, we use a common noise
model built from Weyl errors, the qutrit analogues of Pauli errors.
The model includes shared random transformations of the form in
Eq.~\eqref{eq:collective-action-intro}, together with Weyl labels chosen
independently for different physical qutrits. The conjugate Weyl operator
acts on the dual qutrit. Encoding removes the effect of shared
transformations, while independently chosen errors can disturb the
logical state. Preparation, encoding and recovery can introduce further
imperfections. The comparison asks when protection against collective
noise outweighs these additional costs.

We identify conditions under which the encoded protocol exceeds the
classical teleportation fidelity bound of \(1/2\) throughout a finite
interval of noise strengths for which unencoded teleportation is
entanglement breaking. This bound is the highest average fidelity obtainable by measuring an unknown, uniformly distributed pure qutrit
state and using the classical result to prepare an output state. An entanglement-breaking channel has the stronger property that it
destroys any entanglement between the transmitted qutrit and an external system. In the interval we identify, the encoded protocol
retains the ability to transmit entanglement after the unencoded
protocol has lost it. The advantage can persist when noise is only partly collective
and the encoded protocol has a lower fidelity in the absence of
transmission noise. We derive the required conditions on the noise correlations and on the performance of the additional operations. The comparison uses a model of the complete protocol and assumptions
about which trials are retained, specified in
Sec.~\ref{sec:baseline-calibration}. 

The paper is organised as follows. Section~\ref{sec:noiseless-control}
introduces noiseless subsystems, symmetry-preserving gates and the
available pair interactions. Section~\ref{sec:global-semiuniversality}
gives Theorem~\ref{thm:global-semiuniversality}, outlines its proof and
presents a logical qutrit encoded in four physical qutrits, together
with a numerical pulse sequence implementing a logical $X$ gate.
Appendices~\ref{app:mathematical-framework}--\ref{app:propagation-base}
give the mathematical framework and complete proof.
Section~\ref{sec:teleportation-model} compares encoded and unencoded
qutrit teleportation, including fidelity and transmission costs.
Appendix~\ref{app:teleportation-analysis} gives the channel derivations
and resource analysis. Section~\ref{sec:discussion} discusses the
results and future directions.

% The control theorem assumes tunable interactions and access to both
% qutrit types. Section~\ref{sec:heralded-cost} discusses a five-qutrit
% realisation using only ordinary qutrits and the additional requirements
% of state preparation, encoding, Bell measurement and recovery.
% We report fidelity on accepted trials separately from the probability
% that a trial produces an output, since using more physical qutrits
% increases sensitivity to loss. The resulting fidelity advantage
% establishes a regime in which encoded teleportation can transmit
% entanglement, while a comparison of communication rates at fixed
% physical resources requires these additional costs to be included.

\section{Noiseless qutrits and physically accessible interactions}
\label{sec:noiseless-control}

Consider a fixed symmetry sector with a gauge subsystem and a logical
subsystem. For a product state, collective noise acts as
\begin{equation}
 \tau_{\mathrm G}\otimes\rho_{\mathrm L}
 \longmapsto
 [\varrho(U)\tau_{\mathrm G}\varrho(U)^\dagger]
 \otimes\rho_{\mathrm L},
\end{equation}
where $\varrho(U)$ represents $U$ on the gauge subsystem. Tracing out
the gauge subsystem returns $\rho_{\mathrm L}$, even when the rotation
is unknown or randomly chosen. The logical reduced state remains
unchanged even when the gauge and logical subsystems are correlated,
because the noise acts only on the gauge factor. The same statement
holds when the logical subsystem is entangled with an external system.
The product form above is illustrative and does not imply a product
state of the physical qutrits. This separation of gauge and logical
information is part of the operator quantum error-correction
framework~\cite{KribsLaflammePoulin2005}.

Mixed Schur--Weyl duality gives the decomposition of the full Hilbert space:
\begin{equation}
 \cH_{r,s}
 \cong
 \bigoplus_{(p,q)\in\Lambda_{r,s}}
 U_{p,q}\otimes W_{r,s}^{p,q}.
 \label{eq:gauge-logical-decomposition}
\end{equation}
Here $U_{p,q}$ carries an irreducible representation of $SU(3)$,
specified by the nonnegative integers $(p,q)$, known as Dynkin labels.
The space $U_{p,q}$ is the gauge factor. The dimension of
$W_{r,s}^{p,q}$ counts how many copies of $U_{p,q}$ occur in
$\cH_{r,s}$. Collective transformations act identically on these
copies and as the identity on the multiplicity space $W_{r,s}^{p,q}$,
where logical information is encoded. A sector with
$\dim W_{r,s}^{p,q}=3$, for example, contains a logical qutrit.
The set $\Lambda_{r,s}$ contains the labels present in the decomposition.

Collective transformations and symmetry-preserving gates $X$ act on complementary factors:
\begin{align}
 R_{r,s}(U)&=\bigoplus_{(p,q)\in\Lambda_{r,s}}
 \varrho_{p,q}(U)\otimes I_{W_{r,s}^{p,q}},
 \label{eq:noiseless-action-master}\\
 X&=\bigoplus_{(p,q)\in\Lambda_{r,s}}
 I_{U_{p,q}}\otimes X_{p,q},
 \label{eq:invariant-logical-action-main}
\end{align}
where $\varrho_{p,q}(U)$ is the matrix representing $U$ on the
gauge subsystem. Equation~\eqref{eq:invariant-logical-action-main}
follows from Schur's lemma for every unitary $X$ satisfying
$[R_{r,s}(U),X]=0$ for all $U\in SU(3)$. Such a gate acts on the
logical factor and leaves the gauge state unchanged. We ask whether
interactions between pairs of physical qutrits can implement the
logical gates independently in different sectors.

For the full collective $SU(3)$ action, a decoherence-free subspace
occurs when the gauge representation is one-dimensional. This happens
only in the trivial sector $(p,q)=(0,0)$, when that sector is present.
In the other sectors, the logical subsystem is noiseless while the
gauge subsystem can evolve. Appendix~\ref{app:mathematical-framework}
develops the logical block description.
Appendix~\ref{app:branching-multiplicity} constructs the multiplicity
spaces and explains how they change when either type of qutrit is added.

We now describe the physically available interactions. For two fundamental qutrits we assign the exchange
Hamiltonian \(S_{ij}\), a dimensionless operator that swaps states of fundamental qutrits.
On the other hand, the dimensionless exchange Hamiltonian \(S_{\bar i\bar j}\) swaps states of dual qutrits (bars over indices indicate sites of dual qutrits). Formally:
\begin{equation}
    S_{ij}|a\rangle_i|b\rangle_j = |b\rangle_i|a\rangle_j,\qquad \forall a,b \in \{0,1,2\}.
    \label{eq:exchangeFundamental}
\end{equation}
\begin{equation}
    S_{\bar{i}\bar{j}}|\overline{a}\rangle_{\overline{i}}|\overline{b}\rangle_{\overline{j}} = |\overline{b}\rangle_{\overline{i}}|\overline{a}\rangle_{\overline{j}}, \qquad \forall a,b \in \{0,1,2\}.
    \label{eq:exchangeDual}
\end{equation}
Exchanges provide the invariant pair interaction within each qutrit
type. A pair containing one fundamental and one dual qutrit admits a
different invariant interaction. Such a pair has the normalised singlet:
\begin{equation}
\begin{aligned}
 \ket{\Omega_{i\bar j}}
 &=
 \frac{1}{\sqrt{3}}
 \sum_{a=0}^{2}|a\rangle_i|\bar a\rangle_{\bar j},\\
 P_{\Omega_{i\bar j}}
 &:=
 |\Omega_{i\bar j}\rangle\langle\Omega_{i\bar j}|,
 \qquad
 K_{i\bar j}:=3P_{\Omega_{i\bar j}}.
\end{aligned}
\label{eq:singlet}
\end{equation}
The vectors $\ket a$ and $\ket{\bar a}$ denote the corresponding basis
states in $V$ and $V^*$. The singlet is unchanged by the collective action,
$(U\otimes\overline U)\ket{\Omega_{i\bar j}}
=\ket{\Omega_{i\bar j}}$, so $K_{i\bar j}$ respects the same symmetry.
The pair space decomposes into a singlet and an octet,
$\mathbb C^3\otimes(\mathbb C^3)^*\cong\boldsymbol{1}\oplus\boldsymbol{8}$.
A pulse $e^{-itK_{i\bar j}}$ multiplies the singlet component by $e^{-3it}$
and leaves the octet component unchanged. It thus changes the relative
phase between these components without measurement or postselection.

Up to a real scale factor and an additive identity term, these exchanges
and singlet interactions exhaust the invariant pair Hamiltonians.
Their finite pulse products generate the physical control group.
Each pulse has the form $e^{-itH}$, where $H$ is an available interaction
and $t\in\mathbb R$ is a continuously adjustable parameter. The model
assumes tunable interactions. Their physical realisation, including
access to the conjugate transformation, depends on the implementation.

\section{Independent logical control}
\label{sec:global-semiuniversality}

The interactions in Sec.~\ref{sec:noiseless-control} implement any
special-unitary gate on a chosen logical subsystem, independently of
the gates on the other logical subsystems. This holds whenever both
qutrit types are present.

\begin{theorem}[Control using two-body interactions]
\label{thm:global-semiuniversality}
For every \(r,s\geq1\) and every independently chosen family
\(X_{p,q}\in SU(W_{r,s}^{p,q})\), with
\((p,q)\in\Lambda_{r,s}\), there is a finite sequence of $N$ exchange
and singlet pulses such that, in the decomposition of
Eq.~\eqref{eq:gauge-logical-decomposition},
\begin{equation}
 e^{-it_NH_N}\cdots e^{-it_1H_1}
 =\bigoplus_{(p,q)\in\Lambda_{r,s}}
 I_{U_{p,q}}\otimes X_{p,q}.
 \label{eq:global-independent-control}
\end{equation}
    Here each \(H_j\) is an allowed two-body interaction (see Eqs.~\eqref{eq:exchangeFundamental},\eqref{eq:exchangeDual} and \eqref{eq:singlet}) and \(t_j\) is a real parameter. The group \(SU(W_{r,s}^{p,q})\) contains all unitaries acting on \(W_{r,s}^{p,q}\) with determinant one.
\end{theorem}

Each pulse preserves the collective symmetry, so the sequence performs
the chosen logical gates without moving information out of its sector.
Within a chosen sector, this includes every logical unitary up to an
overall phase. The theorem does not establish arbitrary independent
control of relative phases between sectors. The independent choice of
special-unitary logical gates is called
semi-universality~\cite{HulseLiuMarvian2024}. When $r=s=1$, both logical
spaces are one-dimensional and the target is trivial.

Access to an interaction on every pair of qutrits is unnecessary.
Order the sites as $1,\ldots,r,\bar1,\ldots,\bar s$.
The $r+s-1$ adjacent interactions suffice: the exchanges $S_{a,a+1}$
for $1\leq a<r$, the singlet interaction $K_{r\bar1}$ between the
two types, and the exchanges $S_{\bar b,\overline{b+1}}$ for $1\leq b<s$.
Exchange pulses implement permutations within each type, up to a
scalar phase. Applying such a permutation before an interaction and
its inverse afterwards gives the corresponding interaction on any
allowed pair. The phases cancel in this conjugation.
Appendix~\ref{app:mathematical-framework} proves that the two sets of
interactions generate the same pulse group. This is a sufficient set,
without a claim of minimality. Implementing distant interactions
requires additional exchange pulses, and the theorem does not provide
an efficient general synthesis procedure.

\subsection{How the proof works}
\label{subsec:proof-assembly}

Generating the full associative commutant does not settle the control
problem. Pulse sequences are determined by the generated Lie algebra,
which can impose additional restrictions~\cite{LastresPollmannMoudgalya2026}.
The proof has three steps:
\begin{itemize}
\item[(I)] We extend control when a qutrit of either type is added.
The branching rules in Appendix~\ref{app:branching-multiplicity} express
each enlarged logical space as a direct sum of logical spaces from the
original system. Independent control on the original system supplies
gates on each contribution, with the identity on the other contributions
within that enlarged space. A sole contribution already supplies full
control of the enlarged logical space. If a contribution of dimension
at least three is a proper subspace, irreducibility ensures that some
interaction couples it to its complement. The argument isolates one new
direction and extends control to it. Repetition gives full special-unitary control within the
enlarged logical space (Lemmas~\ref{lem:propagation-irreducible-extension}
and \ref{lem:regular-sector-controllability}). Enlarged logical spaces
whose contributions all have dimension at most two require a separate
analysis (Lemma~\ref{lem:thin-sector-controllability}).

\item[(II)] Full control within each logical space does not by itself
imply that the gates in different sectors can be chosen independently.
For example, the group
\begin{equation*}
 \{L\oplus L:L\in SU(3)\}
\end{equation*}
gives every gate on either logical qutrit but cannot apply $L\oplus I$
for nontrivial $L$. Such links can occur only between logical spaces of
the same dimension, at least two
(Lemma~\ref{lem:propagation-semiuniversality-criterion}). In the induction
step, a gate inherited from the smaller system acts nonscalarly on one
enlarged logical space and as the identity on the other.
Lemma~\ref{lem:branching-separation} establishes this separation.
The four-qutrit starting case $(3,1)$ instead uses a short trace
comparison to establish independence of its two logical qutrits.

\item[(III)] Finally, Appendix~\ref{app:propagation-base} proves independent
logical control for the small systems $(r,s)=(1,1),(2,1),(2,2),(3,1)$.
Starting from $(3,1)$, Theorem~\ref{thm:mixed-propagation} extends the
result to every $r\geq3,s\geq1$.
\end{itemize}

Appendix~\ref{app:branching-multiplicity} derives the branching rules,
and Appendix~\ref{app:sectorwise-separation} proves the extension and
separation steps. Appendix~\ref{app:propagation-base} proves the initial
cases and carries out the induction. Exact Lie-algebra dimensions give
independent checks. Appendix~\ref{app:mathematical-framework} connects
the algebraic arguments to exact finite pulse sequences.

\subsection{A logical qutrit encoded in four physical qutrits}
\label{subsec:controllable-logical-qutrit}

We select the specific case of three fundamental qutrits and one dual qutrit, which have the global Hilbert space \(\cH_{3,1} = (\mathbb{C}^3)^{\otimes 3} \otimes (\mathbb{C}^3)^*\). The isotypic decomposition of this Hilbert space reads:
\begin{equation}
\begin{aligned}
 \cH_{3,1}\cong{}& U_{3,1}
 \oplus\bigl(U_{2,0}\otimes\mathbb C^3\bigr)\\
 &\oplus\bigl(U_{1,2}\otimes\mathbb C^2\bigr)
 \oplus\bigl(U_{0,1}\otimes\mathbb C^3\bigr).
\end{aligned}
 \label{eq:H31-logical-qutrit-decomposition}
\end{equation}
The gauge dimensions are \(24,6,15,3\) and the logical dimensions are \(1,3,2,3\), respectively. Indeed, the dimension of the full Hilbert space is 
\begin{equation*}
 3^4=81=24\cdot1+6\cdot3+15\cdot2+3\cdot3.
\end{equation*}
The logical spaces $W_{3,1}^{2,0}$ and $W_{3,1}^{0,1}$ both have
dimension three and each can encode a logical qutrit. They belong to
different sectors of the direct sum and are not simultaneously available
tensor factors. We choose \(W_{3,1}^{2,0}\) for
teleportation and for the pulse construction below. The whole sector has dimension \(18=6\cdot3\), with a six-dimensional gauge factor and a three-dimensional logical factor.

On the selected sector, the collective transformation $R_{3,1}(U)$ acts as
\begin{equation}
 R_{3,1}(U)\big|_{U_{2,0}\otimes W_{3,1}^{2,0}}=\varrho_{2,0}(U)\otimes I_3,
 \label{eq:logical-qutrit-complementary-actions}
\end{equation}
and the chosen logical gate $T_L$ acts as
\begin{equation}
    T_L\big|_{U_{2,0}\otimes W_{3,1}^{2,0}}=I_6\otimes L,\quad L\in SU(3).
\end{equation}
Theorem~\ref{thm:global-semiuniversality} allows the required logical gates to be performed within the chosen sector while leaving the other sectors unchanged.
For the considered system, the three two-body interactions
\(S_{12}\), \(S_{23}\), and \(K_{3\bar1}\) generate pulses implementing any $SU(3)$ gate on the logical subsystem.

To connect this result to teleportation, choose a logical basis
\(\{\ket{a}_{\mathrm L}\}_{a=0}^{2}\)
and write \(\omega=e^{2\pi i/3}\). The logical shift and phase operators are
\begin{equation}
\begin{aligned}
 X_{\mathrm L}\ket{j}_{\mathrm L}
 &=
 \ket{(j+1)\bmod3}_{\mathrm L},\\
 Z_{\mathrm L}\ket{j}_{\mathrm L}
 &=
 \omega^j\ket{j}_{\mathrm L},
 \qquad j\in\{0,1,2\}.
\label{eq:logicalGates}
\end{aligned}
\end{equation}
Both belong to $SU(3)$. From these operators one can construct logical Weyl operators as
\(D_g^{\mathrm L}:=X_{\mathrm L}^{a}Z_{\mathrm L}^{b}\),
with $g=(a,b)\in\mathbb Z_3^2$. These operators and their inverses
also belong to $SU(3)$. The control theorem thus supplies all logical
Weyl corrections required by qutrit teleportation, without causing
leakage from the chosen sector. 

Teleportation in this encoding requires control of the chosen logical
qutrit. Theorem~\ref{thm:global-semiuniversality} gives the stronger
property that its gates can be chosen independently of those on every
other logical subsystem. The following pulse sequence illustrates this
independence. State preparation, encoding, Bell measurement and recovery
are additional requirements of the teleportation protocol, discussed in
Sec.~\ref{sec:teleportation-model}.

\subsection{A pulse sequence for independent logical control}
\label{subsec:explicit-sector-control}

Both \((\mathbb{C}^3)^{\otimes4}\) and \((\mathbb{C}^3)^{\otimes3}\otimes(\mathbb{C}^3)^*\) have two multiplicity spaces
of dimension three. Their control properties differ. On four fundamental
qutrits, two-body invariant gates allow arbitrary special-unitary
operations within either space, but the two actions are linked by complex
conjugation, up to a fixed basis change and scalar phases~\cite[Sec.~IV~B, Eq.~(41)]{HulseLiuMarvian2024}. Hence, a cyclic
shift in one space cannot be paired with the identity in the other using only these gates on the four carriers. The gate set is thus not semi-universal. The mixed register has no such
restriction, since Theorem~\ref{thm:global-semiuniversality} guarantees independently chosen gates on all logical subsystems. We use this independence to apply a cyclic shift to \(W_{3,1}^{2,0}\), with the identity on \(W_{3,1}^{0,1}\) and on every remaining logical subsystem.

We numerically optimise a sequence of 24 pulses generated by \(S_{12}\), \(S_{23}\), and \(K_{3\bar1}\). Order the sectors $(p,q)$ as \((3,1),(2,0),(1,2),(0,1)\), and let
\(U_{\rm pulse}\) denote the direct sum of the induced unitaries on their logical spaces. The target logical gate is
\begin{equation}
 T_X
 =
 1\oplus X_3\oplus I_2\oplus I_3,
 \label{eq:selective-qutrit-target}
\end{equation}
where \(X_3\) is the matrix of the logical shift
\(X_{\mathrm L}\) from Eq.~\eqref{eq:logicalGates}, in the basis
specified in Appendix~\ref{subsec:explicit-pulses-31}.
The identity blocks ensure that the other logical subsystems remain unchanged, including the second logical qutrit \(W_{3,1}^{0,1}\).

We use the operator norm $\norm{\cdot}_{\rm op}$, defined for an operator $A$ as
\(\norm{A}_{\rm op}:=\max_{\norm{\psi}_2=1}\norm{A\psi}_2\),
where \(\norm{\cdot}_2\) is the Euclidean vector norm.
For \(A=U_{\rm pulse}-T_X\), this measures the largest state-vector error between the engineered pulse sequence and the target over all normalised inputs. High-precision evaluation of these blocks gives
\begin{equation}
 \norm{U_{\rm pulse}-T_X}_{\rm op}
 \approx 2.6\times10^{-15}.
 \label{eq:selective-pulse-error}
\end{equation}
This error is evaluated without adjusting an overall phase or separate phases
in different sectors. Appendix~\ref{subsec:explicit-pulses-31}
gives the logical bases, the reduced interaction matrices, and all pulse parameters. In this appendix we verify the reduced matrices directly from the physical interactions and reproduce the numerical pulse check.

This pulse sequence illustrates independent logical control within the ideal model of pair interactions on the two qutrit types. The quoted error of Eq.~\eqref{eq:selective-pulse-error} is obtained by numerical evaluation, rather than an analytic identity, and the sequence
has not been shown to minimise the number of pulses or tolerate control errors. The next section uses the same encoding to compare teleportation fidelity and transmission probability with and without our logical encoding.

\section{Recovering entanglement transmission}
\label{sec:teleportation-model}

We now compare encoded and unencoded teleportation under the same model of correlated noise. The unencoded protocol transmits one physical qutrit, while the encoded protocol stores one logical qutrit in the multiplicity subsystem
\(W_{3,1}^{2,0}\) of three fundamental and one dual qutrit. The encoded protocol thus transmits four physical qutrits. We identify conditions under which the encoding preserves the ability to transmit entanglement when the unencoded protocol has lost it.

\subsection{Teleportation protocol and recovery}

As the teleportation resource, we consider the maximally entangled state of two qutrits
\(\ket{\Phi_3}=3^{-1/2}\sum_{j=0}^{2}\,\ket j_A\ket j_B\). Alice retains subsystem \(A\), while the system holding \(B\) is sent to Bob through the noisy channel.
Only \(B\) is affected by this noise. In the unencoded protocol, \(B\) is a single physical qutrit. In the encoded protocol, \(B\) is first encoded into the multiplicity subsystem
\(W_{3,1}^{2,0}\) of three fundamental and one dual qutrit, with the gauge subsystem \(U_{2,0}\) prepared in a fixed state. These four physical qutrits then pass through the noisy channel.

After entanglement distribution, Alice performs a generalised Bell measurement on the unknown input and \(A\), then sends the measurement outcome to Bob~\cite{Bennett1993}. The unencoded protocol is the standard qutrit teleportation protocol. In the encoded protocol, Bob applies the corresponding logical Weyl correction, implemented by a pulse sequence supplied by Theorem~\ref{thm:global-semiuniversality}, and recovers a physical qutrit carrying the logical state. Applying the logical correction before ideal recovery has the same effect as applying the corresponding correction to the recovered qutrit. The logical Weyl correction can be performed while the information remains encoded.

On the selected sector, recovery discards the gauge subsystem and returns the logical state. Any component outside this sector, including leakage induced by noise, is replaced by the maximally mixed qutrit \(I_3/3\). This choice makes recovery trace preserving. The encoding and recovery maps, including the recovery action on states outside the selected sector, are given in Appendix~\ref{subsec:finite-weyl-reduction}.

We discard trials with a signalled physical loss or another detected failure. The remaining trials are accepted, and fidelity between input and output is averaged over these trials. The probability that a trial is accepted is reported separately. The assumptions needed to describe the accepted outputs by a quantum channel are given below.

\subsection{Shared and independently chosen Weyl errors}
\label{sec:weyl-noise-model}

Let \(\omega=e^{2\pi i/3}\), and define the physical qutrit
shift and phase operators, for \(j\in\{0,1,2\}\), by
\begin{equation}
 X\ket j=\ket{(j+1)\bmod3},
 \qquad
 Z\ket j=\omega^j\ket j.
\end{equation}
Define
\begin{equation}
    D_g:=X^aZ^b,\qquad g=(a,b)\in\mathbb Z_3^2.
\end{equation}
These are the nine Weyl operators, the qutrit analogues of Pauli
operators. Since \(\det X=\det Z=1\), each \(D_g\) belongs to
\(SU(3)\). Averaging uniformly over the Weyl operators completely
randomises a single qutrit:
\begin{equation*}
 \frac{1}{9}\sum_{g\in\mathbb Z_3^2}
 D_g\rho D_g^\dagger
 =
 \frac{I_3}{3}.
\end{equation*}

A shared error is a collective error that uses the same Weyl label
for all four qutrits, with the conjugate operator on the dual qutrit:
\begin{equation}
 R_{3,1}(D_g)
 =
 D_g^{\otimes3}\otimes\overline D_g.
\end{equation}
Let \(\mathcal T_{\rm col}\) average uniformly over the nine
choices of this shared label. As a departure from collective noise, we take \(\mathcal T_{\rm ind}\) to choose a label independently and uniformly for each of the four qutrits, using the conjugate operator on the dual qutrit. The noise channel is
\begin{equation}
 \mathcal N^{\rm Weyl}_{\varepsilon,\mu}
 =
 (1-\varepsilon)\id
 +\varepsilon\mu\,\mathcal T_{\rm col}
 +\varepsilon(1-\mu)\,\mathcal T_{\rm ind}.
 \label{eq:weyl-noise-channel}
\end{equation}
Here $\varepsilon,\mu\in[0,1]$, and $\id$ is the identity channel.
The parameter $\varepsilon$ is the probability of selecting
randomised noise. Within this branch, the labels are shared
with probability $\mu$ and chosen independently with probability
$1-\mu$. Since the uniform distribution includes the identity,
$\varepsilon$ is not the probability of a nonidentity error. Independence in \(\mathcal T_{\rm ind}\) refers to the Weyl labels. At \(\mu=0\), the channel is generally not a product channel. 

Table~\ref{tab:weyl-branch-actions} gives the action of each
branch. The encoded column includes recovery. Appendix~\ref{app:teleportation-analysis} gives a detailed derivation. A shared transformation leaves the encoded logical state unchanged because it has the collective form in Eq.~\eqref{eq:collective-action-intro}. Averaging over independently chosen labels completely randomises the four physical qutrits, after which recovery returns \(I_3/3\). For ideal encoding and recovery, only this branch changes the logical state. The unencoded qutrit is sensitive to both shared and independent labels. We write \(\Delta_3(\rho):=I_3/3\).

\begin{table}[tbp]
\caption{Ideal transmission channels for the three noise branches.
The encoded column includes recovery. Here \(\Delta_3\) is the completely depolarising qutrit channel.}
\label{tab:weyl-branch-actions}
\setlength{\tabcolsep}{3pt}
\begin{ruledtabular}
\begin{tabular}{l c c c}
Branch & Probability & Unencoded & Encoded\\ \hline
No randomisation & \(1-\varepsilon\) & \(\id\) & \(\id\)\\
Shared label & \(\varepsilon\mu\) & \(\Delta_3\) & \(\id\)\\
Independent labels & \(\varepsilon(1-\mu)\) & \(\Delta_3\) & \(\Delta_3\)
\end{tabular}
\end{ruledtabular}
\end{table}

Consider the depolarising channel of strength $\xi$:
\begin{equation}
 \mathcal D_\xi^{(3)}(\rho)
 =
 (1-\xi)\rho+\xi\frac{I_3}{3}.
\end{equation}
The weighted sum of the rows in Table~\ref{tab:weyl-branch-actions} gives
\(\mathcal D_\varepsilon^{(3)}\) for the unencoded protocol and
\(\mathcal D_{(1-\mu)\varepsilon}^{(3)}\) for the encoded protocol. These are the effective noise channels. Ideal teleportation gives these same channels because depolarising channels commute with the Weyl corrections~\cite{BowenBose2001}.

\subsection{Baseline performance and the channel model}
\label{sec:baseline-calibration}

The additional operations required by encoding can reduce fidelity before transmission noise is applied. For each protocol
\(\alpha\in\{{\rm u},{\rm enc}\}\), let
\(\mathcal E_\alpha^{(0)}\) be the quantum channel for the complete protocol on
accepted trials at \(\varepsilon=0\), and let
\(F_\alpha^{(0)}\) be its average fidelity. This baseline includes preparation, Bell measurement and correction for both protocols. For the encoded protocol, it also includes gauge preparation, encoding and recovery.

At fixed noise parameters, we assume that each protocol has the same positive probability of retaining a trial for every combination of input, noise branch and Weyl labels. This probability can differ between protocols and vary with the noise parameters. Dividing the map for accepted outcomes by this common probability gives a completely positive, trace-preserving (CPTP) channel, while leaving the sampled input distribution and the noise weights in Eq.~\eqref{eq:weyl-noise-channel} unchanged.

We separately assume that the complete channels on accepted trials have the following form:
\begin{equation}
\begin{aligned}
 \Lambda_{\rm u}
 &=\mathcal E_{\rm u}^{(0)}\circ\mathcal D_\varepsilon^{(3)},\\
 \Lambda_{\rm enc}
 &=\mathcal E_{\rm enc}^{(0)}
   \circ\mathcal D_{(1-\mu)\varepsilon}^{(3)},
\end{aligned}
 \label{eq:teleportation-channels-with-baseline}
\end{equation}
where each baseline channel \(\mathcal E_\alpha^{(0)}\) is fixed as the noise parameters vary and is independent of the noise branch and Weyl labels. This assumption applies to the complete protocol, including its Bell measurement and corrections. It does not follow from the retention condition or the baseline fidelity. Appendix~\ref{subsec:experimental-calibration-protocol} gives two channels with the same fidelity curve and different entanglement properties.

Let \(F_\alpha\) denote the average fidelity over all accepted trials, with pure qutrit inputs sampled uniformly with respect to Haar measure. When the depolarising part replaces the input by \(I_3/3\), the baseline channel produces a fixed output state. Its average overlap with the uniformly sampled inputs is \(1/3\), irrespective of the output state. Since fidelity with a pure input is linear in the output state, Eq.~\eqref{eq:teleportation-channels-with-baseline} gives
\begin{equation}
\begin{aligned}
 F_{\rm u}-\tfrac13
 &=(1-\varepsilon)(F_{\rm u}^{(0)}-\tfrac13),\\
 F_{\rm enc}-\tfrac13
 &=[1-(1-\mu)\varepsilon](F_{\rm enc}^{(0)}-\tfrac13).
\end{aligned}
 \label{eq:weyl-accepted-fidelities}
\end{equation}
In Appendix~\ref{subsec:weyl-threshold-derivation} we derive these
formulas. The baseline channels need not be unital or isotropic, and the complete channels need not be depolarising. Appendix~\ref{subsec:branch-resolved-data} gives the average fidelity when the probability of retaining a trial depends on the input, noise branch or Weyl labels. Appendix~\ref{subsec:experimental-calibration-protocol} explains how to test both the trial selection and the channel model.

\subsection{Entanglement transmission comparison}
\label{sec:entanglement-comparison}

An entanglement-breaking (EB) channel destroys entanglement between its input and every reference system~\cite{HorodeckiShorRuskai2003}.
For uniformly distributed
pure qutrit inputs, an EB channel has average fidelity at most
\(1/2\), exactly the classical fidelity bound obtained by measuring one copy
of the input and preparing an output from the measurement result \cite{Horodecki1999}. Exceeding this bound shows that a
channel can preserve entanglement. A fidelity at or below \(1/2\), however, does not establish that the channel is EB.

The qutrit
depolarising channel becomes entanglement breaking at
\begin{equation}
 \varepsilon_{\rm dep}^{\rm EB}=\frac34.
 \label{eq:depolarising-entanglement-breaking-boundary}
\end{equation}
A subsequent CPTP channel cannot restore entanglement, so
Eq.~\eqref{eq:teleportation-channels-with-baseline} guarantees
that \(\Lambda_{\rm u}\) is entanglement breaking whenever
\(\varepsilon\geq3/4\). Baseline errors may make the channel entanglement breaking at smaller values of \(\varepsilon\).
Appendix~\ref{subsec:depolarising-entanglement-breaking} derives this depolarising boundary.

It suffices to check the encoded fidelity at \(\varepsilon=3/4\), where the unencoded channel is already entanglement breaking. If the encoded fidelity is above \(1/2\), continuity gives a nonempty interval beyond this point in which only the encoded channel can preserve entanglement. This fidelity criterion applies while the encoded fidelity exceeds \(1/2\). It does not exclude an entanglement advantage at lower fidelities.
This sufficient comparison does not depend on
\(F_{\rm u}^{(0)}\), since any subsequent baseline channel leaves the unencoded channel entanglement breaking. This criterion establishes the comparison while the encoded
fidelity remains above $1/2$.

\begin{proposition}[Recovering entanglement transmission]
\label{prop:encoded-teleportation-entanglement}
Assume the noise model and the conditions on retained trials and
channel composition in Secs.~\ref{sec:weyl-noise-model}
and \ref{sec:baseline-calibration}. Let \(0\leq\mu<1\). If
\begin{equation}
 F_{\rm enc}^{(0)}>\frac{1+\mu}{1+3\mu},
 \label{eq:encoded-entanglement-condition}
\end{equation}
then the unencoded teleportation channel is entanglement breaking,
while the encoded channel has average fidelity above \(1/2\)
and is not entanglement breaking, for every
\begin{equation}
 \varepsilon\in
 [3/4,\varepsilon_{\rm enc}^{\rm MP})\cap[0,1],
 \label{eq:encoded-entanglement-interval}
\end{equation}
where the superscript \({\rm MP}\) refers to the measure-and-prepare
fidelity bound and
\begin{equation}
 \varepsilon_{\rm enc}^{\rm MP}
 =\frac{F_{\rm enc}^{(0)}-\tfrac12}
 {(1-\mu)(F_{\rm enc}^{(0)}-\tfrac13)}.
 \label{eq:encoded-threshold-master}
\end{equation}
\end{proposition}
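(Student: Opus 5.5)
The proof combines the fidelity formula Eq.~\eqref{eq:weyl-accepted-fidelities} with the entanglement-breaking boundary Eq.~\eqref{eq:depolarising-entanglement-breaking-boundary}.

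\emph{Unencoded channel.} For every \(\varepsilon\geq3/4\), \(\mathcal D^{(3)}_\varepsilon\) is entanglement breaking. By Eq.~\eqref{eq:teleportation-channels-with-baseline}, \(\Lambda_{\rm u}=\mathcal E^{(0)}_{\rm u}\circ\mathcal D^{(3)}_\varepsilon\) is this channel followed by a CPTP map, which cannot restore entanglement. Hence \(\Lambda_{\rm u}\) is entanglement breaking on the whole interval, whatever \(F^{(0)}_{\rm u}\) is. This requires only the depolarising boundary derived in the appendix, together with closure of entanglement-breaking channels under post-composition.

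\emph{Encoded fidelity.} Eq.~\eqref{eq:weyl-accepted-fidelities} gives \(F_{\rm enc}-\tfrac12=[1-(1-\mu)\varepsilon](F^{(0)}_{\rm enc}-\tfrac13)-\tfrac16\). This is affine and nonincreasing in \(\varepsilon\), since \(F^{(0)}_{\rm enc}>\tfrac13\) under the hypothesis. The hypothesis gives \(\tfrac{1+\mu}{1+3\mu}\geq\tfrac12\), so \(F^{(0)}_{\rm enc}>\tfrac12>\tfrac13\). Solving \(F_{\rm enc}>\tfrac12\) gives \(\varepsilon<\varepsilon^{\rm MP}_{\rm enc}\) from Eq.~\eqref{eq:encoded-threshold-master}, using \(\mu<1\) to divide. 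It remains to check that the interval is nonempty, i.e.\ \(\varepsilon^{\rm MP}_{\rm enc}>3/4\). Write \(F=F^{(0)}_{\rm enc}\). Cross-multiplying the positive denominators, the inequality becomes \(4(F-\tfrac12)>3(1-\mu)(F-\tfrac13)\). This is equivalent to \(F(1+3\mu)>1+\mu\), which is exactly Eq.~\eqref{eq:encoded-entanglement-condition}. The intersection with \([0,1]\) handles the case \(\varepsilon^{\rm MP}_{\rm enc}>1\).

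\emph{Encoded channel is not entanglement breaking.} Invoke the cited result~\cite{Horodecki1999}: an EB channel, being measure-and-prepare, has average fidelity at most \(1/2\) on Haar-uniform pure qutrit inputs. The contrapositive shows that \(F_{\rm enc}>1/2\) implies \(\Lambda_{\rm enc}\) is not EB.

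The main point requiring care is not computational. It is making sure that the quantity bounded by the classical limit is the fidelity of the single CPTP channel \(\Lambda_{\rm enc}\). This is guaranteed by the retention assumption, which ensures that normalising accepted trials yields a channel without reweighting inputs. The remaining algebra is routine.
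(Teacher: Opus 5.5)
Your proposal is correct and follows essentially the same route as the paper. Both arguments get the unencoded claim from the depolarising entanglement-breaking boundary at \(\varepsilon=3/4\) together with closure under post-composition, and both get the encoded claim from the affine fidelity formula, where the hypothesis is exactly \(F_{\rm enc}|_{\varepsilon=3/4}>1/2\) (equivalently \(\varepsilon_{\rm enc}^{\rm MP}>3/4\)), combined with the measure-and-prepare bound of \(1/2\). Your explicit check that the hypothesis forces \(F_{\rm enc}^{(0)}>1/2>1/3\) fills in a small detail the paper leaves implicit: it makes the fidelity line strictly decreasing and justifies the division defining \(\varepsilon_{\rm enc}^{\rm MP}\).
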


\begin{proof}
The unencoded channel is entanglement breaking for
\(\varepsilon\geq3/4\), as above. For the encoded channel,
Eq.~\eqref{eq:weyl-accepted-fidelities} gives
\begin{equation*}
 F_{\rm enc}\big|_{\varepsilon=3/4}-\frac12
 =\frac{(1+3\mu)F_{\rm enc}^{(0)}-(1+\mu)}{4}.
\end{equation*}
This quantity is positive under
condition~\eqref{eq:encoded-entanglement-condition}, which gives
\(\varepsilon_{\rm enc}^{\rm MP}>3/4\).
The encoded fidelity decreases linearly with \(\varepsilon\)
and remains above \(1/2\) until
\(\varepsilon=\varepsilon_{\rm enc}^{\rm MP}\). An entanglement-breaking
qutrit channel cannot exceed this bound.
\end{proof}

For \(\mu=1\), collective noise leaves the logical qutrit
unchanged, so \(F_{\rm enc}=F_{\rm enc}^{(0)}\) for every
\(\varepsilon\). If \(F_{\rm enc}^{(0)}>1/2\), the same
entanglement comparison holds throughout \([3/4,1]\).

As an example, take
\begin{equation*}
 F_{\rm u}^{(0)}=\frac{29}{30},\qquad
 F_{\rm enc}^{(0)}=\frac{13}{15},\qquad
 \mu=0.20.
\end{equation*}
The unencoded channel is EB for
\(\varepsilon\geq3/4\), whereas the fidelity of the encoded protocol remains
above \(1/2\) for \(\varepsilon<55/64\simeq0.8594\).
The encoded protocol can thus transmit entanglement for the noise range
\(\varepsilon\in[3/4,55/64)\), where the unencoded channel is already entanglement breaking and destroys entanglement between the input and any reference system.
Figure~\ref{fig:experimental-teleportation} shows this interval. The parameters are chosen to illustrate the result and are not measurements from a device.

\begin{figure}[!tbp]
\centering
\includegraphics[width=\columnwidth]
{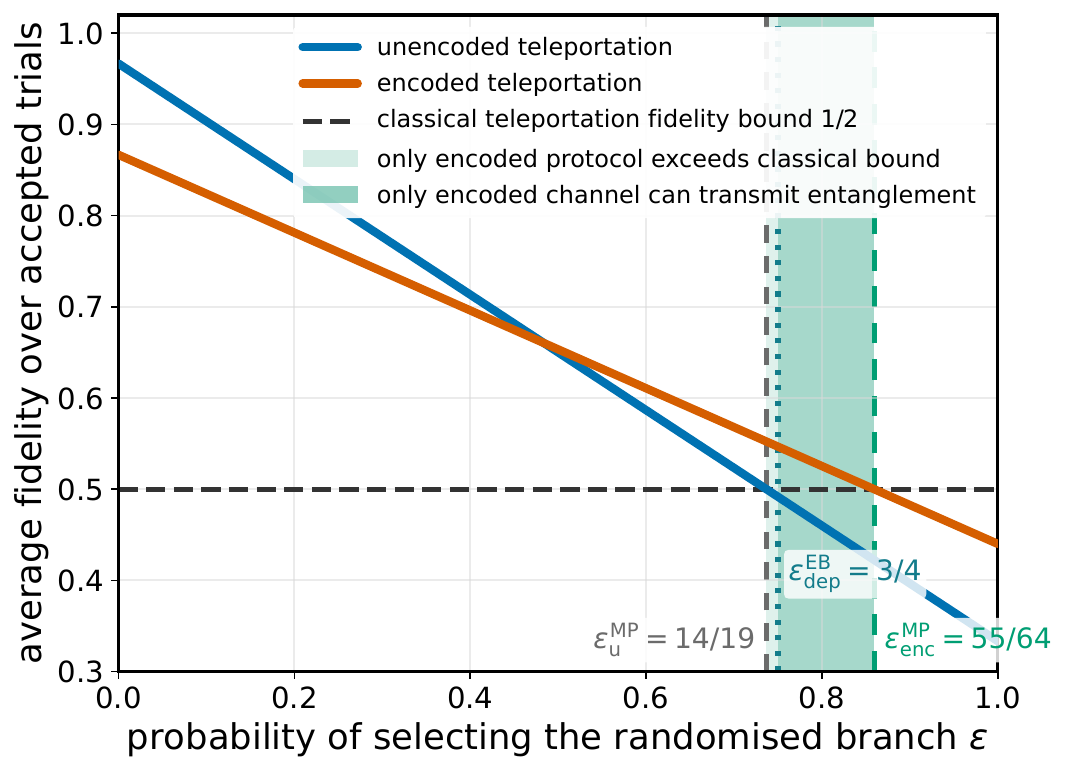}
\caption{Recovery of entanglement transmission through encoding.
In the darker interval \([3/4,55/64)\), the unencoded
teleportation channel is entanglement breaking, while the encoded
channel exceeds the classical fidelity bound and can preserve
entanglement. The curves use \(F_{\rm u}^{(0)}=29/30\),
\(F_{\rm enc}^{(0)}=13/15\), and \(\mu=0.20\).
The lighter shading begins at \(14/19\), where the unencoded
fidelity reaches \(1/2\).
Fidelities are averaged over accepted trials under the assumptions
of Sec.~\ref{sec:baseline-calibration}.}
\label{fig:experimental-teleportation}
\end{figure}

\subsection{Fidelity crossover and the classical bound}

The entanglement comparison in the previous subsection answers a different question from when encoding first improves fidelity. Subtracting the two curves gives
\begin{equation}
 F_{\rm enc}-F_{\rm u}
 ={}\varepsilon\mu(F_{\rm enc}^{(0)}-\tfrac13)
 -(1-\varepsilon)(F_{\rm u}^{(0)}-F_{\rm enc}^{(0)}).
 \label{eq:fidelity-gain-minus-cost}
\end{equation}
When \(1/3<F_{\rm enc}^{(0)}<F_{\rm u}^{(0)}\), the first
term is the gain from protection against collective noise. The subtracted term is the extra
baseline fidelity cost of encoding, weighted by the probability of the branch without randomisation. Encoding improves fidelity when this gain exceeds the baseline cost. For the example above, \(F_{\rm enc}-F_{\rm u}=(31\varepsilon-15)/150\), so the curves cross at
\(\varepsilon_\times=15/31\simeq0.4839\). Only the encoded
protocol has fidelity above \(1/2\) for \(\varepsilon\in[14/19,55/64)\).
Table~\ref{tab:reference-teleportation-thresholds} lists these four
noise strengths and their meanings.

An improvement in fidelity need not take either protocol above
the classical fidelity bound. Appendix~\ref{subsec:weyl-threshold-derivation}
gives the crossover formulas and the conditions for an interval
in which only the encoded fidelity exceeds the bound.

\subsection{Transmission probability and resources}
\label{sec:heralded-cost}

The encoding uses four physical qutrits when both fundamental and dual transformations are directly available. If every
physical qutrit is fundamental and transforms under \(U\), the conjugate action
can be represented within the antisymmetric subspace of two qutrits, using the isomorphism \(V^*\cong\wedge^2V\). This
implementation of \(\cH_{3,1}\) contains five physical qutrits instead of four, and the singlet interaction acts on three of them, as shown in
Eq.~\eqref{eq:encoded-singlet-interaction}. To reproduce the noise model,
the two qutrits in the antisymmetric pair must share a Weyl label
even in the branch with independently chosen labels. Appendix~\ref{subsec:experimental-calibration-protocol} gives this construction.

If each transmitted physical qutrit arrives independently with
probability \(\eta\), the probabilities that the complete system arrives
for the unencoded protocol and the encoded implementations with four
and five qutrits are, respectively,
\begin{equation}
 P_{\rm tr}^{\rm u}=\eta,\qquad
 P_{\rm tr}^{\rm enc,mixed}=\eta^4,\qquad
 P_{\rm tr}^{\rm enc,V}=\eta^5.
 \label{eq:link-arrival-probabilities}
\end{equation}
Figure~\ref{fig:teleportation-transmission} shows the cost of using additional qutrits. Related transmission costs and
ways of reducing them have been studied in photonic DFS
experiments~\cite{YamamotoEtAl2008,IkutaEtAl2011}.

\begin{figure}[!tbp]
\centering
\includegraphics[width=\columnwidth]
{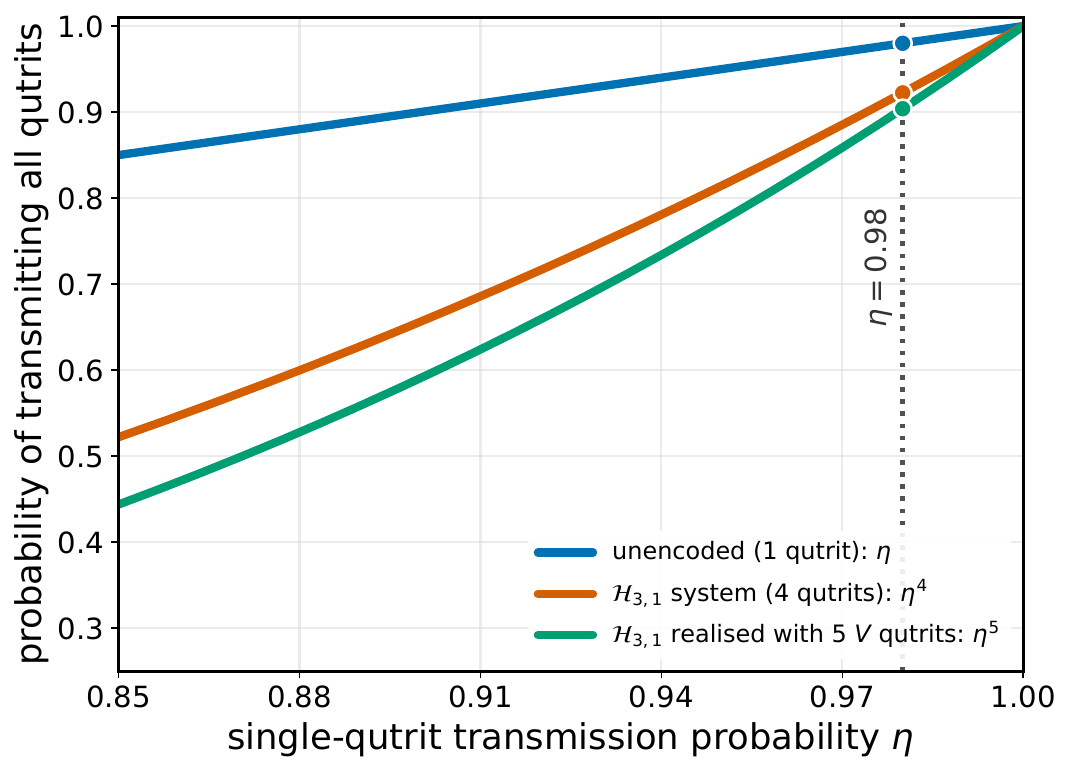}
\caption{Probability that every required qutrit arrives
when each transmitted qutrit arrives independently with probability
\(\eta\). The unencoded protocol uses one physical qutrit.
The encoded protocol uses four with direct access to both
transformation rules, or five when an antisymmetric pair
represents a dual qutrit. At \(\eta=0.98\), the
probabilities are \(0.98\), \(0.922368\), and
\(0.903921\), respectively. Preparation, detection, and processing failures are excluded. These transmission
probabilities are separate from the fidelities in Fig.~\ref{fig:experimental-teleportation}.}
\label{fig:teleportation-transmission}
\end{figure}

When loss is independent of the input, noise branch, Weyl labels and processing errors, discarding trials with a detected loss leaves the fidelity formulas unchanged. Other detected failures further reduce the fraction of accepted trials, so the transmission probabilities in Eq.~\eqref{eq:link-arrival-probabilities} are upper bounds on the total acceptance probabilities. These probabilities are measured separately from fidelity on accepted trials.
Probabilistic Bell measurements in photonic qutrit teleportation already require this distinction in the absence of transmission
noise~\cite{LuoEtAl2019,HuEtAl2020}.

The encoding scheme requires no additional shared entanglement. In both protocols, the Bell measurement has nine possible outcomes, communicated using two classical trits. Appendix~\ref{app:resource-accounting}
gives the full resource comparison. The fidelity advantage does not by itself establish an advantage in communication rate or channel capacity at fixed resources. Such a comparison
would also need to include repeated unencoded transmission with entanglement distillation~\cite{BennettPurification1996}
or an active error-correcting code.

\subsection{Relation to physical noise}
\label{sec:relation-physical-noise}

The Weyl noise model of Eq.~\eqref{eq:weyl-noise-channel} separates the effects of collective and independently chosen errors while keeping the teleportation channels exactly
calculable. Experiments on qutrit control and gate characterisation provide
methods for applying and characterising randomised gates \cite{MorvanEtAl2021,GossEtAl2022,GossEtAl2024}.
Experiments with decoherence-free encodings have likewise
examined shared and separately chosen transformations
\cite{AltepeterEtAl2004,YamamotoEtAl2008,IkutaEtAl2011}.
Universal qudit control has also been demonstrated with trapped
ions~\cite{RingbauerEtAl2022}. Qutrit and ququart protection has been
achieved through Heisenberg--Weyl dynamical
decoupling~\cite{TripathiEtAl2025} and, in bosonic encodings, through
error correction beyond break-even~\cite{BrockEtAl2025}. These experiments
provide control and protection methods, not a verification of the uniform Weyl mixture assumed here.

The noiseless encoding also protects against collective noise beyond the uniform Weyl distribution. Every collective transformation in
Eq.~\eqref{eq:collective-action-intro} acts only on the gauge
subsystem, irrespective of how \(U\) is sampled.
Spatial noise correlations have been measured in silicon spin qubits~\cite{YonedaEtAl2023}, although correlations alone do not imply
identical collective rotations.
Shared control fields or a common optical transformation can produce correlated rotations, although a particular implementation
may generate only a subset of \(SU(3)\). The corresponding actions \(U\) and \(\overline U\), and the singlet interaction
needed for logical gates, must still be realised physically.

Noiseless-subsystem protection has also been experimentally engineered on superconducting processors by combining dynamical decoupling and
error-detection postselection~\cite{QuirozEtAl2024}. The resulting protection relies on the applied control and postselection,
rather than on native collective noise alone.

The quantitative thresholds in this section rely on the Weyl mixture and the assumptions in Sec.~\ref{sec:baseline-calibration}.
They do not describe arbitrary device noise, such as local dephasing, energy relaxation or unequal rotations. Appendix~\ref{subsec:collective-extensions} discusses other collective processes, including continuous Markovian dynamics. It distinguishes collective processes that leave the logical state unchanged from comparisons based on measured decay rates, and bounds the fidelity loss from additional logical noise.

\section{Discussion and conclusions}
\label{sec:discussion}

\begin{revision}
Theorem~\ref{thm:global-semiuniversality} establishes independent logical
control for systems containing qutrits that transform under \(U\)
and others that transform under \(\overline U\), for all \(r,s\geq1\). Exchanges between qutrits
of the same type and the interaction through the invariant singlet of a
mixed pair suffice to choose a logical unitary with determinant one on every noiseless subsystem independently. All these interactions preserve the collective symmetry, so logical gates can be performed while the
information remains encoded in the logical subsystem. Within one sector, this includes every logical gate up to an overall phase. Arbitrary relative phases between sectors are not part of the result. The theorem thus extends the use of these noiseless subsystems from storage to logical computation.

Teleportation illustrates a communication application of the encoding and logical control.
Proposition~\ref{prop:encoded-teleportation-entanglement} identifies a
range of noise strengths for which the unencoded teleportation channel
is entanglement breaking, while the encoded protocol exceeds the
classical teleportation fidelity bound of \(1/2\). The encoded protocol
then remains capable of transmitting entanglement after the unencoded
protocol has lost this ability. The advantage can persist when the
noise is only partly collective and the additional encoding and recovery
operations lower the initial fidelity. This comparison follows under the assumptions about channel composition and retained trials in
Sec.~\ref{sec:baseline-calibration}. Encoding takes place before exposure
to the noise, allowing the logical information to avoid its collective
component.

The benefits of noiseless encodings for communication are already
established, including transmission without a shared reference frame
and teleportation in the presence of collective noise
\cite{BartlettRudolphSpekkens2003,PrevedelEtAl2007}.
Earlier work also shows that correlations in an environment can preserve
entanglement and improve teleportation
\cite{Pirandola2013Reactivation,LaineBreuerPiilo2014}.
Our contribution combines a general theorem on independent logical
control with an explicit comparison of encoded and unencoded qutrit
teleportation. The comparison relates the required noise correlations
to the performance of the complete protocol, including imperfections
in its additional operations.

A physical implementation requires preparation, encoding, Bell
measurement and recovery, together with access to the assumed
transformations and interactions. The teleportation encoding uses four
physical qutrits when the conjugate transformation \(\overline U\) is directly available.
Representing it within the antisymmetric subspace of two qutrits that both transform under \(U\), using \(V^*\cong\wedge^2V\), gives five physical qutrits and an interaction acting on three
of them. These requirements, and the resulting sensitivity to loss,
are discussed in Sec.~\ref{sec:heralded-cost}. Fidelity on accepted
trials describes the quality of the retained outputs. The probability
of retaining a trial determines how often an output is obtained.
Both are needed to assess an implementation.

Further work should develop preparation and recovery circuits for
specific devices, find shorter logical pulse sequences and assess their robustness. It is also natural to ask whether symmetry-based dynamical decoupling~\cite{ReadSerranoMartin2026} can suppress departures from
collective noise while preserving the logical gates available here.
Further experimental work should test the channel factorisation, check whether
discarding failed trials biases the input or noise distribution, and
determine how departures from collective noise change the entanglement advantage. Comparing communication rates or capacities at fixed physical resources requires
a separate analysis of these costs and alternative protocols. Such
work would establish when useful logical control and passive suppression of collective errors can be combined in a practical communication device.
\end{revision}

\begin{acknowledgments}
O.A.D.M. and M.S. acknowledge support from the IRA Programme, project
no.~FENG.02.01-IP.05-0006/23, financed by the FENG programme 2021--2027,
Priority FENG.02, Measure FENG.02.01, with the support of the FNP.  S.S.
acknowledges support from the Royal Society University Research Fellowship. The authors formulated the original theorem and its proof. They were helped by the generative AI (ChatGPT and Claude) to spot and rectify holes in the proof, to conduct the computation, code development, and writing of the manuscript. The authors bear complete responsibility for the findings and the manuscript.
\end{acknowledgments}
\appendix

Appendices~\ref{app:mathematical-framework}--\ref{app:propagation-base}
derive the branching rules and prove Theorem~\ref{thm:global-semiuniversality}
using analytic proofs for small systems followed by induction. Exact rank calculations and
the pulse example provide independent checks.
Appendix~\ref{app:teleportation-analysis} derives the effective teleportation
channels and fidelity thresholds, then explains how to average over
retained trials, test the noise model, compare physical resources,
and account for other noise processes.
\section{From invariant interactions to independent logical gates}
\label{app:mathematical-framework}

Three algebraic descriptions serve different purposes in the proof of Theorem~\ref{thm:global-semiuniversality}.
Sums and products of the two-body interactions show that no proper nonzero
subspace within a logical block is preserved by every interaction.
Their Lie algebra determines which gates their pulses can generate.
Its commutators are traceless on each logical block and test whether
determinant-one gates can be chosen independently. We also show why the Lie-algebra conclusion
gives exact finite pulse sequences. A dimension test supplies an independent check of the analytic proof.

\subsection{Logical spaces and invariant operators}
\label{sec:noiseless-subsystems-mixed-schur-weyl}

Fix \(r,s\geq0\). Within this appendix, write
\(\mathcal H=\mathcal H_{r,s}\), \(R=R_{r,s}\), and
\(\Lambda=\Lambda_{r,s}\). For a sector \(\lambda=(p,q)\),
write \(U_\lambda=U_{p,q}\), \(W_\lambda=W_{r,s}^{p,q}\),
and \(m_\lambda=\dim W_\lambda\). Choose a unitary isotypic decomposition
as in Eq.~\eqref{eq:gauge-logical-decomposition}:
\begin{equation}
 \mathcal H\cong\bigoplus_{\lambda\in\Lambda}
 U_\lambda\otimes W_\lambda.
 \label{eq:mixed-schur-weyl-decomposition}
\end{equation}
The irreducible representation \(U_\lambda\) occurs \(m_\lambda\) times, since the this is the dimension of each multiplicity subsystem $W_\lambda$.
Collective noise acts identically on these copies and as the identity
on the multiplicity space \(W_\lambda\), where logical information
is encoded.
The formal construction using intertwining maps is given in
Appendix~\ref{app:branching-multiplicity}.

Schur's lemma gives complementary actions for the noise and for any
invariant operator \(A\):
\begin{equation*}
 R(U)=\bigoplus_\lambda\varrho_\lambda(U)\otimes I,
 \qquad
 A=\bigoplus_\lambda I\otimes A_\lambda.
\end{equation*}
The matrices \(\varrho_\lambda(U)\) represent \(U\) in each sector
\(\lambda\), while the matrices \(A_\lambda\) are the logical blocks
of \(A\). When discussing logical control, we omit the identity
operators on the gauge spaces and identify \(A\) with
\(\bigoplus_\lambda A_\lambda\).
This identification preserves products, adjoints and commutators.
It also preserves linear independence, so physical matrices and
their logical blocks give the same Lie-algebra dimensions.
For an invariant group \(T\), write
\(T_\lambda=\{X_\lambda:X\in T\}\) for its action on one
logical space. We also write \([T]_\lambda\) when the group
name has indices or primes. The same block convention applies to an invariant
algebra or a set of operators. Later appendices distinguish original
and enlarged spaces explicitly and retain size indices in recursions.

% An invariant unitary has an arbitrary block in \(U(W_\lambda)\)
% on each logical space. Our target restricts each block to
% \(SU(W_\lambda)\), so it does not require arbitrary relative
% phases between sectors.

The dual action used in the main text has matrix
\((U^{-1})^{\mathsf T}=\overline U\).
The singlet \(\ket\Omega=3^{-1/2}\sum_a\ket a\ket{\bar a}\)
is invariant under \(U\otimes\overline U\), as follows from
\(UU^\dagger=I\). A bar on a basis or site label identifies
dual qutrits. Exchanges within each type and the
singlet interaction between the two types thus commute with
the collective action.

We collect these dimensionless two-body interactions in
\begin{equation*}
\begin{aligned}
 \mathcal C_{r,s}
 ={}&
 \{S_{ij}:1\leq i<j\leq r\}\\
 &\cup
 \{S_{\bar i\bar j}:1\leq i<j\leq s\}\\
 &\cup
 \{K_{i\bar j}:1\leq i\leq r,\ 1\leq j\leq s\}.
\end{aligned}
\end{equation*}

Mixed Schur--Weyl duality gives a stronger statement: the unital
complex associative algebra generated by these interactions is
the full commutant
\cite{GrinkoOzols2024,Nguyen2023MixedSchur,GrinkoBurchardtOzols2023}:
\begin{equation}
 \operatorname{Alg}_{\mathbb C}(\mathcal C_{r,s})
 =\operatorname{End}_{SU(3)}(\mathcal H)
 \cong\bigoplus_\lambda\operatorname{End}(W_\lambda).
 \label{eq:interaction-commutant}
\end{equation}
Here \(\operatorname{Alg}_{\mathbb C}\) allows complex linear
combinations and ordinary products, including the identity.
This operator algebra is the image of the walled Brauer algebra (WBA)~\cite{GrinkoThesis}.
Distinct elements of the abstract algebra can represent the same
physical operator. The representation map need not be injective.
The usual statement for \(U(3)\) also applies to \(SU(3)\):
every \(g\in U(3)\) has the form
\(g=zU\), with \(U\in SU(3)\) and \(|z|=1\), and
\(R(g)=z^{r-s}R(U)\). The scalar factor leaves the commutant
unchanged, so
\begin{equation*}
    \operatorname{End}_{SU(3)}(\mathcal H) = \operatorname{End}_{U(3)}(\mathcal H).
\end{equation*}

We use Eq.~\eqref{eq:interaction-commutant} to establish
irreducibility within each logical block: no proper nonzero
subspace is preserved by every interaction. Otherwise all their
sums and products would preserve it, which the full matrix algebra
does not. This guarantees a coupling out of any proper subspace
on which gates are already available. Lie-algebra arguments are
then needed to turn that coupling into arbitrary logical gates.

\subsection{Pulse sequences and the generated Lie algebra}
\label{sec:local-mixed-exchange-control-model}

Assume now that \(r,s\geq1\). The available finite pulse
sequences form the group
\begin{equation}
 G_{r,s}
 :=
 \left\langle
 e^{-itH}:H\in\mathcal C_{r,s},\ t\in\mathbb R
 \right\rangle
 \subseteq U(\cH_{r,s}),
 \label{eq:physical-control-group}
\end{equation}
where angle brackets denote finite products. We write \(G=G_{r,s}\)
when $r,s$ are fixed.

\begin{definition}[Independent logical control]
\label{def:global-mixed-semiuniversality}
The system admits independent logical control if every choice of
logical unitaries \(X_{p,q}\in SU(W_{r,s}^{p,q})\) can be implemented
simultaneously by a finite sequence of available pulses.
Equivalently, \(G_{r,s}\) contains the group
\begin{equation}
 \mathcal U_{\mathrm{log}}(r,s)
 :=\Bigg\{
 \bigoplus_{(p,q)\in\Lambda_{r,s}}
 I_{U_{p,q}}\otimes X_{p,q}:
 X_{p,q}\in SU(W_{r,s}^{p,q})
 \Bigg\}.
 \label{eq:logical-control-target}
\end{equation}
In logical blocks, the target is \(\prod_\lambda SU(W_\lambda)\).
It includes a gate on any one logical space with identity on all
the others and on every gauge space. On any one logical space,
every unitary is available up to a scalar phase.
\end{definition}

Let \(H_1,\ldots,H_N\), with \(N=r+s-1\), be the adjacent
two-body interactions: exchanges within each qutrit type and
the singlet interaction \(K_{r\bar1}\) between the two types.
Their pulses generate the same group as pulses from interactions
on all allowed pairs. Indeed, \(S^2=I\) gives
\(e^{-i\pi S/2}=-iS\), so adjacent exchange pulses implement
any permutation within a type, up to a scalar phase.
Conjugating an adjacent exchange or \(K_{r\bar1}\) by such
permutations gives the corresponding interaction on any allowed
pair. The phases cancel in the conjugation. Hence
\begin{equation}
 \left\langle e^{-itH_a}:t\in\mathbb R,\ 1\leq a\leq N\right\rangle
 =G.
 \label{eq:adjacent-and-all-pairs-generate-same-group}
\end{equation}

For small \(t\), a pulse has the expansion \(I-itH+O(t^2)\),
so its infinitesimal generator is the anti-Hermitian matrix \(-iH\). Real linear combinations and commutators of these matrices form the Lie algebra used to test which gates are available. The following lemma connects it to finite pulse products.

\begin{lemma}[The control group and its Lie algebra]
\label{lem:pulse-group-closed}
The group \(G\) is compact and connected, and
\begin{equation}
 \operatorname{Lie}(G)=\mathfrak L,
 \qquad
 \mathfrak L:=\operatorname{Lie}_{\mathbb R}\{-iH_1,\ldots,-iH_N\}.
 \label{eq:physical-dynamical-lie-algebra}
\end{equation}
\end{lemma}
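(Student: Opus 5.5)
We prove the two assertions separately and then identify the Lie algebra. The argument is the standard one for groups generated by one-parameter subgroups in a compact group, with no special features of the interactions needed beyond Hermiticity.

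\emph{Connectedness.} Every generator \(e^{-itH_a}\) lies on the continuous path \(\tau\mapsto e^{-i\tau tH_a}\), \(\tau\in[0,1]\), which starts at \(I\). A finite product of such elements is therefore joined to \(I\) by the product of these paths. Hence \(G\) is path connected.

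\emph{Closedness and compactness.} Let \(\mathfrak L\) be the real Lie algebra generated by \(-iH_1,\ldots,-iH_N\). It is a subalgebra of \(\mathfrak u(\cH_{r,s})\). By the correspondence between Lie subalgebras and connected Lie subgroups, there is a unique connected Lie subgroup \(G_{\mathfrak L}\subseteq U(\cH_{r,s})\) with Lie algebra \(\mathfrak L\), and it is generated by \(\exp(\mathfrak L)\). We show \(G=G_{\mathfrak L}\).
\begin{itemize}
\item[(i)] \(G\subseteq G_{\mathfrak L}\): each generator \(e^{-itH_a}\) is \(\exp\) of an element of \(\mathfrak L\).
\item[(ii)] \(G_{\mathfrak L}\subseteq G\): this is the usual orbit-type argument (Jurdjevic--Sussmann, or the Lie algebra rank condition for right-invariant systems on a compact group). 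The set of directions \(X\) with \(e^{\tau X}\in\overline G\) for all \(\tau\) is closed under sums, by the Trotter product formula, and under brackets, by the group-commutator formula. Here \(\overline G\) is the closure, itself a compact Lie group, so \(\mathfrak L\subseteq\operatorname{Lie}(\overline G)\).
\end{itemize}

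To obtain equality with \(G\) itself rather than with its closure, pick a basis \(X_1,\ldots,X_k\) of \(\mathfrak L\), each expressible as a nested commutator of generators. Each \(X_j\) is conjugate, by a finite pulse product \(g_j\in G\), to a linear combination of generators. This holds because \(\operatorname{Ad}_G\) applied to the generators spans \(\mathfrak L\), which follows from the fact that the span of \(\{\operatorname{Ad}_g(-iH_a)\}\) is \(\operatorname{Ad}\)-invariant and contains the generators. After rewriting the basis in this way, the map \((\tau_1,\ldots,\tau_k)\mapsto\prod_j g_j e^{\tau_j(-iH_{a_j})}g_j^{-1}\) is a product of elements of \(G\). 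Its differential at \(0\) is onto \(\mathfrak L\). By the inverse function theorem, \(G\) therefore contains a neighbourhood of \(I\) in \(G_{\mathfrak L}\), and since \(G_{\mathfrak L}\) is connected it follows that \(G=G_{\mathfrak L}\).

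Compactness is the main obstacle: a connected Lie subgroup of a compact group need not be closed, as an irrational line on a torus shows. To settle it we use the block structure. By Eq.~\eqref{eq:interaction-commutant}, \(\mathfrak L\subseteq\bigoplus_\lambda\mathfrak u(W_\lambda)\) is a compact Lie algebra, so \(\mathfrak L=\mathfrak z\oplus[\mathfrak L,\mathfrak L]\) with \(\mathfrak z\) its centre. The semisimple part integrates to a closed subgroup, since a connected semisimple subgroup of a compact group is closed (its Killing form is definite, so it has a compact universal cover with finite centre). It remains to show that the central torus part is closed. For this we use rationality of the spectra: each \(H_a\) has integer eigenvalues, namely \(\pm1\) for exchanges and \(0,3\) for \(K\). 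The centre \(\mathfrak z\) is spanned by projections of the \(-iH_a\) onto \(\mathfrak z\), and these projections are defined over \(\mathbb Q\) relative to the integral lattice \(\ker\exp\) of the maximal torus of \(\bigoplus_\lambda U(W_\lambda)\). A rational subspace exponentiates to a closed subtorus, so \(G\) is the product of a closed semisimple group and a closed torus and hence is compact, with \(\operatorname{Lie}(G)=\mathfrak L\).

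If the rationality argument proves delicate, we fall back on two alternatives. The first is to note that for the theorem only \(\prod_\lambda SU(W_\lambda)\subseteq G\) is needed, which lies in the closed semisimple part. The second is to exhibit \(\mathfrak z\) explicitly through a rational basis of central invariant elements.
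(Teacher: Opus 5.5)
Your proof is correct, but it takes a different route from the paper's.

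\emph{The paper's route.} The paper uses the integer spectra only to observe that each pulse family $t\mapsto e^{-itH_a}$ is periodic, with period $2\pi$ for exchanges and $2\pi/3$ for $K$. Each family is therefore a circle subgroup. The paper then cites the general fact that finitely many compact connected subgroups of a compact Lie group generate a compact connected subgroup~\cite{Marvian2022}. Once $G$ is known to be closed, its Lie algebra contains the generators and hence $\mathfrak L$. It is contained in $\mathfrak L$ because $G$ lies in the analytic subgroup $G_{\mathfrak L}$.

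\emph{Your route.} You reverse the order. You first show $G=G_{\mathfrak L}$ without any compactness, by applying the inverse function theorem to products of conjugated pulse families. You then prove closedness by hand from $\mathfrak L=\mathfrak z\oplus[\mathfrak L,\mathfrak L]$, using Weyl's theorem for the semisimple part and rationality of the central directions for the torus part.

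\emph{What each buys.} Your argument essentially reproves the cited fact in this setting. It is self-contained and makes the central, relative-phase directions explicit; these are the same block-trace vectors the paper uses later in its exact rank checks. The paper's route is shorter and does not need the block structure.

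\emph{Two steps to state more carefully.}
\begin{itemize}
\item An arbitrary basis element of $\mathfrak L$ need not be conjugate to a linear combination of generators. What your spanning argument actually gives, after differentiating $\operatorname{Ad}$-invariance along the pulse families to obtain $\operatorname{ad}$-invariance, is a basis of $\mathfrak L$ made of single conjugates $\operatorname{Ad}_{g_j}(-iH_{a_j})$. That is what your map uses.
\item The rationality claim needs irreducibility of each logical block. By Schur's lemma, $\mathfrak z$ is scalar on every $W_\lambda$, and $[\mathfrak L,\mathfrak L]$ is traceless there. So the central part of $-iH_a$ is $\bigoplus_\lambda -i\,(\Tr H_{a,\lambda}/m_\lambda)\,I_{W_\lambda}$. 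These block traces are integers, because the eigenvalues of $H_{a,\lambda}$ are among those of $H_a$. Hence $\mathfrak z$ is a rational subspace relative to the kernel $2\pi\mathbb Z^{\Lambda}$ of $\exp$ on the block-scalar torus, and your fallback options are not needed.
\end{itemize}
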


\begin{proof}
An exchange pulse has period \(2\pi\), and a pulse generated by \(K=3P_\Omega\) has period \(2\pi/3\). Each pulse family is thus a compact connected subgroup. A subgroup generated by
finitely many such subgroups of a compact Lie group is compact and connected~\cite[Supplementary Note~1, Fact~1]{Marvian2022}. The generators belong to \(\operatorname{Lie}(G)\), and every
pulse product belongs to the connected analytic subgroup with Lie algebra \(\mathfrak L\). The two Lie algebras coincide.
\end{proof}

The algebra \(\mathfrak L\) is obtained by taking real linear combinations and repeated commutators of the generators $\{-iH_a\}_{a=1,\dots,N}$ until no new independent matrices appear. The block convention above allows this calculation to use either
the full physical matrices or their logical blocks. The lemma applies to the group of finite pulse products itself, so the control criterion below establishes exact implementations.

\subsection{The criterion for independent logical control}
\label{subsec:lie-algebra-test}

The example of locked qutrit subsystems in Sec.~\ref{subsec:proof-assembly}
shows why full action on every block is insufficient. Independent logical control requires every traceless direction in each block
to be available independently. In the following, we state this requirement and give an equivalent dimension test.

Take \(\mathfrak D=[\mathfrak L,\mathfrak L]\) as the real linear span of commutators between elements of \(\mathfrak L\). Every commutator is traceless in each logical block, so
\begin{equation}
 \mathfrak D\subseteq
 \bigoplus_\lambda\mathfrak{su}(W_\lambda).
 \label{eq:derived-contained-logical-algebra}
\end{equation}
Here \(\mathfrak{su}(W)\) consists of the traceless
anti-Hermitian matrices on \(W\). It is the Lie algebra of \(SU(W)\).
A logical subsystem of dimension \(m\) contributes \(m^2-1\)
real directions. The gauge dimensions do not enter this count
because the gates act as the identity on those spaces.
A space of dimension one contributes zero.

\begin{proposition}[Criterion for independent logical control]
\label{prop:global-lie-rank-criterion}
The system admits independent logical control if and only if
\begin{equation}
 \mathfrak D=\bigoplus_{\lambda\in\Lambda}
 \mathfrak{su}(W_\lambda).
 \label{eq:independent-control-algebra}
\end{equation}
Since the left-hand side is already contained in the right-hand
side, this is equivalent to the dimension test
\begin{equation}
 \dim_{\mathbb R}\mathfrak D
 =\sum_{\lambda\in\Lambda}(m_\lambda^2-1).
 \label{eq:global-lie-rank-dimension}
\end{equation}
\end{proposition}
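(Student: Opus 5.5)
The plan is to reduce both directions of the equivalence to the correspondence between connected Lie subgroups and Lie subalgebras, using Lemma~\ref{lem:pulse-group-closed}. Since \(G\) is compact and connected with \(\operatorname{Lie}(G)=\mathfrak L\), a standard fact about compact Lie algebras applies: \(\mathfrak L\), being the Lie algebra of a compact group, is reductive, so \(\mathfrak L=\mathfrak z\oplus[\mathfrak L,\mathfrak L]\) with \(\mathfrak z\) its centre and \(\mathfrak D=[\mathfrak L,\mathfrak L]\) semisimple. Hence the commutator subgroup \([G,G]\) is the connected closed subgroup with Lie algebra \(\mathfrak D\). In block form, the target \(\mathcal U_{\log}(r,s)\) is the compact connected group \(\prod_\lambda SU(W_\lambda)\), whose Lie algebra is \(\bigoplus_\lambda\mathfrak{su}(W_\lambda)\). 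This Lie algebra is semisimple, because the one-dimensional blocks contribute nothing.

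For the direction ``\(\Leftarrow\)'': if \(\mathfrak D=\bigoplus_\lambda\mathfrak{su}(W_\lambda)\), then \(\mathfrak D\subseteq\mathfrak L=\operatorname{Lie}(G)\). The connected subgroup of \(G\) generated by \(\exp(\mathfrak D)\) is the connected Lie subgroup with Lie algebra \(\bigoplus_\lambda\mathfrak{su}(W_\lambda)\), which is exactly \(\prod_\lambda SU(W_\lambda)\), since that group is connected and generated by exponentials of its Lie algebra. Therefore \(\mathcal U_{\log}\subseteq G\). Every element of \(G\) is a finite pulse product by definition~\eqref{eq:physical-control-group}, so each target is realised by a finite sequence.

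For the direction ``\(\Rightarrow\)'': if \(\mathcal U_{\log}\subseteq G\), then by differentiating one-parameter subgroups, \(\bigoplus_\lambda\mathfrak{su}(W_\lambda)=\operatorname{Lie}(\mathcal U_{\log})\subseteq\mathfrak L\). A semisimple algebra equals its own derived algebra, because each \(\mathfrak{su}(m)\) with \(m\ge2\) is perfect. This gives
\(\bigoplus_\lambda\mathfrak{su}(W_\lambda)=[\bigoplus_\lambda\mathfrak{su}(W_\lambda),\bigoplus_\lambda\mathfrak{su}(W_\lambda)]\subseteq[\mathfrak L,\mathfrak L]=\mathfrak D\).
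Combined with the inclusion~\eqref{eq:derived-contained-logical-algebra}, this yields equality. The dimension test~\eqref{eq:global-lie-rank-dimension} then follows at once: a subspace contained in a finite-dimensional real vector space equals it exactly when the dimensions agree. Here \(\dim_{\mathbb R}\mathfrak{su}(m)=m^2-1\). The block identification preserves linear independence, so the count can be done on logical blocks.

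The step needing the most care is the ``\(\Leftarrow\)'' direction. Membership in the Lie algebra must give exact membership in the group, with no closure or density argument left over. This is where the compactness and connectedness in Lemma~\ref{lem:pulse-group-closed} are essential. Without them, \(G\) could be a non-closed analytic subgroup. In that case one would still get \(\exp(\mathfrak D)\subseteq G\), because \(G\) is the analytic subgroup with Lie algebra \(\mathfrak L\), but one must make sure that elements of \(\prod_\lambda SU(W_\lambda)\) are finite products of such exponentials. This holds for any connected Lie group, and the argument should say so explicitly.
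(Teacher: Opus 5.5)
Your proposal is correct and follows essentially the paper's argument: one direction exponentiates \(\mathfrak D\subseteq\mathfrak L=\operatorname{Lie}(G)\) inside the compact group \(G\), and the other uses perfectness of \(\mathfrak{su}(m)\) for \(m\geq2\) together with the inclusion \(\mathfrak D\subseteq\bigoplus_\lambda\mathfrak{su}(W_\lambda)\). The only differences are cosmetic. You use that a connected group is generated by exponentials of its Lie algebra, whereas the paper uses that each target block has a traceless anti-Hermitian logarithm; your remarks on the reductive decomposition and the commutator subgroup \([G,G]\) are correct but not needed.
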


\begin{proof}
If Eq.~\eqref{eq:independent-control-algebra} holds, the Lie algebra of the target group lies inside \(\mathfrak L=\operatorname{Lie}(G)\). Every target unitary has a traceless anti-Hermitian logarithm
on each logical block. Exponentiating these logarithms inside \(G\) gives the whole target group. Since \(G\) consists of finite pulse products, every element
of the target group has an exact finite pulse implementation.

Conversely, if \(G\) contains the target group
\(\prod_\lambda SU(W_\lambda)\), then
\(\mathfrak L\) contains  \(\bigoplus_\lambda\mathfrak{su}(W_\lambda)\).
Each \(\mathfrak{su}(m)\), for \(m\geq2\), equals its own commutator algebra: $\mathfrak{su}(m)=[\mathfrak{su}(m),\mathfrak{su}(m)]$. Taking commutators shows that
\(\mathfrak D\) contains this entire direct sum.
Together with the opposite inclusion in
Eq.~\eqref{eq:derived-contained-logical-algebra}, this proves
Eq.~\eqref{eq:independent-control-algebra}.
\end{proof}

Using \(\mathfrak D\) excludes scalar phase directions that are
outside the target of the theorem. The result guarantees
\(\prod_\lambda SU(W_\lambda)\), without requiring independent
control over relative phases between sectors.
In later appendices, \(\mathfrak L_{r,s}\) and
\(\mathfrak D_{r,s}\) denote these algebras with the system
size indices restored. Appendix~\ref{subsec:exact-rank-procedure}
describes exact calculations in the small systems as independent
checks of their analytic control proofs.
\section{Logical spaces and their dimensions when a qutrit is added}
\label{app:branching-multiplicity}

The induction proof of Theorem~\ref{thm:global-semiuniversality}
adds one fundamental or dual qutrit and asks whether logical
control extends to the enlarged system. Each new logical
subsystem is assembled from at most three logical subsystems
of the original system, which we call its predecessors.
We first describe this decomposition, then identify the small
logical spaces that require separate treatment in the control proof.

\subsection{Adding a qutrit and extending existing gates}
\label{sec:dynkin-labels-and-single-factor-branching}

For the addition of one qutrit, write \(\mathcal H=\mathcal H_{r,s}\) and
\(\mathcal H'=\mathcal H_{r,s}\otimes F\), where \(F=V\) or
\(V^*\). Thus \(\mathcal H'\) is identified with
\(\mathcal H_{r+1,s}\) or \(\mathcal H_{r,s+1}\), respectively.
Write \(W_\gamma\) for a logical subsystem in the original
system and \(W'_\lambda\) for one in the enlarged system.
Their dimensions are \(m_\gamma\) and \(m'_\lambda\), respectively.
Size indices \(r,s\) remain explicit when comparing several systems.

The Dynkin labels \((p,q)\in\mathbb Z_{\geq0}^2\) identify the irreducible
representation \(U_{p,q}\). In particular, \(V=U_{1,0}\) and
\(V^*=U_{0,1}\), and taking the complex conjugate exchanges \(p\) with \(q\).
Adding one qutrit gives the two tensor-product Pieri rules
\begin{equation}
\begin{aligned}
 U_{p,q}\otimes V
 &\cong U_{p+1,q}\oplus U_{p-1,q+1}\oplus U_{p,q-1},\\
 U_{p,q}\otimes V^*
 &\cong U_{p,q+1}\oplus U_{p+1,q-1}\oplus U_{p-1,q}.
\end{aligned}
\label{eq:qutrit-branching-rules}
\end{equation}
Terms with a negative coordinate are omitted. Each remaining
term appears once. One way to see the first rule is to represent
\(U_{p,q}\) by a Young diagram with row lengths \((p+q,q,0)\).
Adding a box to an allowed row gives the terms above.
A full column of three boxes represents the determinant, which
is one on \(SU(3)\), so removing that column leaves the
representation unchanged. The second rule follows by taking
complex conjugates.

For a fixed enlarged sector \(\lambda=(p,q)\), let
\(\Gamma(\lambda)\) be the set of original sectors that
contribute to it. The candidates are
\begin{equation*}
\begin{array}{c|l}
 \text{added qutrit}&\text{candidate original labels}\\ \hline
 V&(p-1,q),\ (p+1,q-1),\ (p,q+1)\\
 V^*&(p,q-1),\ (p-1,q+1),\ (p+1,q)
\end{array}
\end{equation*}
Keep only labels present in \(\mathcal H_{r,s}\). When the size
or added type needs to be explicit, these sets are denoted by
\(\Gamma_{r,s}^{+}(\lambda)\) and
\(\Gamma_{r,s}^{-}(\lambda)\) for fundamental and dual qutrit addition, respectively.

Each original copy of a contributing representation supplies one
copy of \(U_\lambda\) after adding a qutrit. Consequently,
\begin{equation}
 W'_\lambda\cong
 \bigoplus_{\gamma\in\Gamma(\lambda)}W_\gamma,
 \qquad
 m'_\lambda=\sum_{\gamma\in\Gamma(\lambda)}m_\gamma.
 \label{eq:branching-logical-spaces}
\end{equation}
The direct sum is a decomposition of the logical space. It records
which sector of the original system each contribution comes from. For example, adding \(V^*\) to \(\mathcal H_{3,0}\) produces
the logical qutrit used in the main text:
\begin{equation*}
 W'_{(2,0)}\cong W_{(3,0)}\oplus W_{(1,1)}
 \cong\mathbb C\oplus\mathbb C^2.
\end{equation*}
The predecessor multiplicities are one and two, so the new multiplicity is three. These are two subspaces of the same logical qutrit encoded in $W'_{(2,0)}$. 

\paragraph{Choosing the logical spaces and their bases.}
\label{subsec:CG-branching-decomposition}
The formal construction of a logical space is
\begin{equation}
 \begin{aligned}
 W_{r,s}^{p,q}
 &=\operatorname{Hom}_{SU(3)}(U_{p,q},\mathcal H_{r,s}),\\
 m_{r,s}(p,q)&=\dim W_{r,s}^{p,q}.
 \end{aligned}
 \label{eq:multiplicity-space-definition}
\end{equation}
An element of this space is a linear map that intertwines
the irreducible action on \(U_{p,q}\) with the collective
action on \(\mathcal H_{r,s}\). For such maps \(S\) and \(T\),
Schur's lemma defines
the inner product by \(S^\dagger T=\langle S,T\rangle I\).
Choosing isometric embeddings in
Eq.~\eqref{eq:qutrit-branching-rules} makes the direct-sum identification in Eq.~\eqref{eq:branching-logical-spaces} unitary.
Fix compatible orthonormal bases for this identification.
The choice is not unique, but no particular Clebsch--Gordan
coefficients are required for this construction.

\paragraph{Extending existing operators.}
\label{sec:single-factor-extension-conventions}
Write \(X'\) for \(X\otimes I_F\) acting on the enlarged system. We keep all \(V\) factors before all \(V^*\) factors, so when we add a \(V\) factor, it is placed before the \(V^*\) factors. An added \(V^*\) factor is appended on the right. 
The extension preserves products, adjoints and invariance under the collective action.

In the chosen branching bases, the logical blocks of an invariant
operator \(X\) satisfy
\begin{equation}
 X'_\lambda=\bigoplus_{\gamma\in\Gamma(\lambda)}X_\gamma.
 \label{eq:branching-existing-action}
\end{equation}
Indeed, \(X\) acts only on the logical factor of each original sector. Decomposing its representation factor after adding the qutrit leaves that logical action unchanged.

If independent logical control is available before qutrit addition,
an arbitrary \(Y_\gamma\in SU(W_\gamma)\) can thus be chosen
for each predecessor, independently of the others.
The same \(Y_\gamma\) appears in every enlarged sector
containing a contribution from \(W_\gamma\).
Each inherited gate has determinant one on every enlarged logical
block. These gates preserve the separate contributions.
Interactions involving the added qutrit must connect them to
obtain all gates on \(W'_\lambda\).

For subsequent calculations, Eq.~\eqref{eq:branching-logical-spaces}
gives the recursions
\begin{equation}
\begin{aligned}
 m_{r+1,s}(p,q)
 &=m_{r,s}(p-1,q)+m_{r,s}(p+1,q-1)\\
 &+m_{r,s}(p,q+1),\\
 m_{r,s+1}(p,q)
 &=m_{r,s}(p,q-1)+m_{r,s}(p-1,q+1)\\
 &+m_{r,s}(p+1,q).
\end{aligned}
\label{eq:multiplicity-recursions}
\end{equation}
These hold for target Dynkin labels
\((p,q)\in\mathbb Z_{\geq0}^2\). An absent original label,
including one with a negative coordinate, has multiplicity zero.
The initial value is \(m_{0,0}(0,0)=1\), with all other
multiplicities zero. The recursions preserve the selection rule
\(p-q\equiv r-s\pmod3\). This also follows because the central
element \(\omega I_3\), with \(\omega=e^{2\pi i/3}\), acts as
\(\omega^{p-q}\) on \(U_{p,q}\) and as \(\omega^{r-s}\) on
\(\mathcal H_{r,s}\). These phases must be equal, so their exponents agree modulo three.

\subsection{Which symmetry sectors occur}
\label{sec:sector-support}

We first locate all the sectors, for any \(r,s\geq0\).
Coordinates relative to the highest label \((r,s)\) will also
simplify the later dimension calculation. Write
\begin{equation}
 (p,q)=(r,s)-a(2,-1)-b(-1,2).
 \label{eq:sector-coordinates}
\end{equation}
Thus \(p=r-2a+b\), \(q=s+a-2b\), or equivalently
\begin{equation*}
 a=\frac{2r+s-2p-q}{3},\qquad
 b=\frac{r+2s-p-2q}{3}.
\end{equation*}
The sector \((r,s)\) has coordinates \((0,0)\).
Dynkin labels \((p,q)\) specify the irreducible representation
\(U_{p,q}\), while the coordinates \((a,b)\) record how labels
vary when a fundamental or dual qutrit is added.
Adding \(V\) gives the candidate coordinates
\((a,b),(a+1,b),(a+1,b+1)\), while adding \(V^*\) gives
\((a,b),(a,b+1),(a+1,b+1)\). Only candidates with nonnegative
Dynkin labels are retained. These steps start at \((0,0)\) and
never decrease either coordinate. Thus every sector that appears
has integer coordinates \(a,b\geq0\).

Let
\begin{equation*}
 \mathscr D_{r,s}
 =\{(a,b)\in\mathbb Z_{\geq0}^2:
       r-2a+b\geq0,\ s+a-2b\geq0\}
\end{equation*}
be the set of admissible coordinate pairs. On this set, write
\(M_{r,s}(a,b):=m_{r,s}(r-2a+b,s+a-2b)\), and set
\(M_{r,s}=0\) outside it. For an admissible target pair in the enlarged system, the recursions become
\begin{equation}
\begin{aligned}
 M_{r+1,s}(a,b)
 &=M_{r,s}(a,b)+M_{r,s}(a-1,b)\\
 &+M_{r,s}(a-1,b-1),\\
 M_{r,s+1}(a,b)
 &=M_{r,s}(a,b)+M_{r,s}(a,b-1)\\
 &+M_{r,s}(a-1,b-1).
\end{aligned}
\label{eq:coordinate-multiplicity-recursions}
\end{equation}
The first identity requires \((a,b)\in\mathscr D_{r+1,s}\),
and the second requires \((a,b)\in\mathscr D_{r,s+1}\).

\begin{lemma}[Which sectors occur]
\label{lem:sector-support}
For every \(r,s\geq0\), the occurring sectors are exactly
those whose coordinates belong to \(\mathscr D_{r,s}\).
\end{lemma}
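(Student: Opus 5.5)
We argue by induction on \(r+s\), using the coordinate recursions in Eq.~\eqref{eq:coordinate-multiplicity-recursions}. The discussion preceding the lemma already gives one inclusion: every occurring sector has integer coordinates \(a,b\geq0\) and nonnegative Dynkin labels, so it lies in \(\mathscr D_{r,s}\). It remains to show that every \((a,b)\in\mathscr D_{r,s}\) has \(M_{r,s}(a,b)\geq1\).

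The base case is \(r=s=0\). Here \(\mathscr D_{0,0}=\{(0,0)\}\), because \(-2a+b\geq0\) and \(a-2b\geq0\) with \(a,b\geq0\) force \(a=b=0\). Also \(m_{0,0}(0,0)=1\).

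For the inductive step, since all terms in the recursions are nonnegative, it suffices to show the following. For \((a,b)\in\mathscr D_{r+1,s}\), at least one of \((a,b)\), \((a-1,b)\), \((a-1,b-1)\) lies in \(\mathscr D_{r,s}\). The analogous statement for \(\mathscr D_{r,s+1}\) uses \((a,b)\), \((a,b-1)\), \((a-1,b-1)\).

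Take first the addition of \(V\), so \((a,b)\in\mathscr D_{r+1,s}\). We have \(r+1-2a+b\geq0\) and \(s+a-2b\geq0\), and we split into cases.

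1. If \(r-2a+b\geq0\), then \((a,b)\in\mathscr D_{r,s}\) directly, since the second inequality is unchanged.
2. Otherwise \(r-2a+b=-1\), so \(2a=r+b+1\geq1\) and hence \(a\geq1\). Consider \((a-1,b)\). Its first label is \(r-2a+2+b=1\geq0\), and its second is \(s+a-1-2b\). This is nonnegative unless \(s+a-2b=0\).
3. In that remaining subcase, use \((a-1,b-1)\). Its labels are \(r-2a+2+b-1=0\) and \(s+a-1-2b+2=1\). We still need \(b\geq1\). If \(b=0\), then \(s+a=0\) would force \(a=0\), contradicting \(a\geq1\).

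So in every case some predecessor lies in \(\mathscr D_{r,s}\).

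The addition of \(V^*\) follows by the symmetric argument with the roles of \(a\) and \(b\) exchanged, or equivalently by complex conjugation, which swaps \(p\leftrightarrow q\), \(r\leftrightarrow s\) and \(a\leftrightarrow b\). Since every \((r,s)\) is reached from \((0,0)\) by adding qutrits, the induction covers all \(r,s\geq0\).

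The main obstacle is this boundary case analysis: confirming that a boundary point of \(\mathscr D_{r+1,s}\) always has some predecessor inside \(\mathscr D_{r,s}\), in particular that \(a\geq1\) and \(b\geq1\) hold whenever they are needed.
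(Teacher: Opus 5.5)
Your proposal is correct and follows essentially the same route as the paper: induction from $\mathscr D_{0,0}=\{(0,0)\}$ using the nonnegative coordinate recursions. Your three cases ($r-2a+b\geq0$; then $r-2a+b=-1$ with $q>0$; then $q=0$) are exactly the paper's split $p>0$, $p=0<q$, $p=q=0$ on the enlarged label, with the same derivations of $a\geq1$ and $b\geq1$, and the dual addition is handled by the same exchange of $r,s$, $a,b$ and $p,q$.
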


\begin{proof}
We have already shown that every occurring sector has admissible
coordinates. For the converse, start with
\(\mathscr D_{0,0}=\{(0,0)\}\), where the sole pair has
multiplicity one. It is enough to show that each admissible
coordinate pair in the enlarged system has an admissible
original pair among the terms of its recursion. 

For addition of \(V\), let \((p,q)\) be the enlarged Dynkin label.
Choose the original pair as follows:
\begin{equation}
\begin{array}{c|c|c}
 \text{target condition}&\text{original pair}&\text{original label}\\ \hline
 p>0&(a,b)&(p-1,q)\\
 p=0<q&(a-1,b)&(1,q-1)\\
 p=q=0&(a-1,b-1)&(0,1).
\end{array}
\label{eq:support-contribution}
\end{equation}
All displayed original labels are nonnegative. In the last two
rows, \(p=0\) gives \(2a=r+1+b\geq1\), hence \(a\geq1\).
In the last row, \(q=0\) also gives \(2b=s+a\geq1\), hence
\(b\geq1\). Thus the original coordinates are nonnegative as
well. Exchanging \(r,s\), \(a,b\) and \(p,q\) gives the rule
for addition of \(V^*\).
By induction, the chosen original pair has positive multiplicity.
The recursion is a sum of nonnegative terms, so the enlarged
pair also has positive multiplicity.
\end{proof}

The same choice preserves positive coordinates when
\(a,b>0\) and the original system has \(r\geq3,s\geq1\).
For addition of \(V\), a coordinate is reduced only if
\(p=0\). Then \(2a=r+1+b\geq5\), so \(a\geq3\).
If also \(q=0\), then \(2b=s+a\geq4\), so \(b\geq2\).
For addition of \(V^*\), a coordinate is reduced only if
\(q=0\). Then \(2b=s+1+a\geq3\), so \(b\geq2\).
If also \(p=0\), then \(2a=r+b\geq5\), so \(a\geq3\).
Subtracting one in the corresponding cases leaves both coordinates
positive. We use this fact below to propagate a dimension bound.

\subsection{Logical spaces of dimension one or two}
\label{sec:low-multiplicity-classification}

The general control argument in
Lemma~\ref{lem:regular-sector-controllability} applies when an
enlarged logical space contains an original logical space of
dimension at least three. To find the remaining cases, we now
locate the original spaces of dimensions one and two.
The classification in this and the next subsection assumes
\(r\geq3\) and \(s\geq1\).
On the coordinate axes, a short list of initial values identifies
all dimensions one and two. When both coordinates are positive,
an induction starting at \((r,s)=(3,1)\) gives a lower bound of three.

\begin{lemma}[Logical spaces of dimension one or two]
\label{lem:low-multiplicities}
Let \(r\geq3\) and \(s\geq1\). The logical spaces of dimensions one and two are exactly
those in Table~\ref{tab:small-logical-spaces}.
All other logical spaces have dimension at least three.
If both coordinates are positive and \(s\geq2\), their dimension
is at least five.
\end{lemma}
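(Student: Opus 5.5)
The plan is to work in the coordinates \((a,b)\) of Eq.~\eqref{eq:sector-coordinates} and split the admissible set \(\mathscr D_{r,s}\) into the two coordinate axes and the interior \(a,b>0\). On each part I would use the coordinate recursions~\eqref{eq:coordinate-multiplicity-recursions} together with Lemma~\ref{lem:sector-support}, which guarantees that every admissible pair has positive multiplicity.

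\emph{Axes.} On \(b=0\), the term \(M_{r,s}(a-1,b-1)\) vanishes and so does the term \(M_{r,s}(a,b-1)\). Hence adding \(V^*\) gives \(M_{r,s+1}(a,0)=M_{r,s}(a,0)\), whenever the pair is admissible. Adding \(V\) gives the Pascal-type rule \(M_{r+1,s}(a,0)=M_{r,s}(a,0)+M_{r,s}(a-1,0)\), where the first term is dropped when \(r+1=2a\).

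So along \(b=0\) the multiplicities are ballot-type numbers in \(r\) and \(a\), independent of \(s\). I would prove by induction on \(r\) the following facts:
\begin{itemize}
\item \(M_{r,s}(0,0)=1\);
\item \(M_{r,s}(1,0)=r-1\), so its dimension is \(2\) exactly at \(r=3\) and at least \(3\) otherwise;
\item \(M_{r,s}(a,0)\geq3\) for \(a\geq2\) once \(r\geq3\), apart from the explicitly listed boundary cases with \(p=r-2a\in\{0,1\}\) and small \(r\).
\end{itemize}
Those boundary cases I would simply tabulate.

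The axis \(a=0\) is handled symmetrically with \(r\leftrightarrow s\). Here \(s\) may be as small as \(1\) or \(2\), so this axis produces most of the genuine entries of Table~\ref{tab:small-logical-spaces}. I would compute these few values by hand from the recursion, starting at \(s=1\), and check them against the \((3,1)\) decomposition in Eq.~\eqref{eq:H31-logical-qutrit-decomposition}. That decomposition has logical dimensions \(1,3,2,3\), corresponding to the coordinates \((0,0),(1,0),(0,1)\) and \((1,1)\).

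\emph{Interior.} For \(a,b>0\) I would induct on \(r+s\), starting from \((3,1)\). There the only interior pair is \((1,1)\), with dimension \(3\). In the inductive step, the remark after Lemma~\ref{lem:sector-support} provides an original pair with both coordinates positive that contributes to the recursion. By the induction hypothesis this contribution already has dimension at least \(3\), and all other terms are nonnegative, so the bound propagates.

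For the bound of five when \(s\geq2\), I would run a second induction. Its base cases are \((3,2)\), computed directly, and then \((r,2)\) for all \(r\geq3\), reached by adding fundamentals. The aim is to show that in the interior at least two recursion terms are nonzero. Typically these are an interior pair, of dimension at least \(3\), and an axis or interior neighbour, of dimension at least \(2\). I would check this by looking at which of the three candidate pairs \((a,b),(a-1,b),(a-1,b-1)\), or their dual counterparts, remain admissible.

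The main obstacle is this last step near the boundary of \(\mathscr D_{r,s}\), where \(p\) or \(q\) is \(0\) or \(1\). There only one recursion term may survive, so the bound of five has to come from that single term already being at least five, or else be verified directly. I expect to handle it by a short case analysis on \(p,q\in\{0,1\}\), using the same parity-type identities \(2a=r+1+b\) and \(2b=s+a\) as in the proof of Lemma~\ref{lem:sector-support}.
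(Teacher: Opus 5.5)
Your axis analysis, and your propagation of the bound of three through the interior predecessor supplied by the remark after Lemma~\ref{lem:sector-support}, follow the paper's argument. However, the base case you state at $(3,1)$ is wrong. One has $\mathscr D_{3,1}=\{(0,0),(1,0),(1,1),(2,1)\}$, and the pair $(0,1)$ is not admissible because it gives $q=-1$. The sectors $(3,1),(1,2),(2,0),(0,1)$ sit at $(0,0),(1,0),(1,1),(2,1)$, with dimensions $1,2,3,3$, as your own formula $M_{3,1}(1,0)=r-1=2$ already requires. So there are two interior pairs, $(1,1)$ and $(2,1)$, and the second one is needed. The sector $(0,0)$ of $\mathcal H_{4,1}$, at coordinates $(3,2)$, has $(2,1)$ as its only predecessor. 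Your induction therefore does not bound it unless $M_{3,1}(2,1)=3$ is part of the base case. The repair is immediate, but it has to be made.

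For the bound of five, the obstacle you anticipate is not real, but the two-term count you propose does not work as stated. The count ``an interior term of dimension at least three plus a neighbour of dimension at least two'' can fail even when two terms survive. Add $V^*$ to $\mathcal H_{r,2}$: the enlarged pair $(1,2)$ of $\mathcal H_{r,3}$, the sector $(r,0)$, receives only $(1,1)$ and $(0,1)$. The latter has dimension $B_2(1)=1$, so the weak interior bound gives only $4$. What closes the argument is to make five the induction hypothesis on interior pairs. Check that $M_{3,2}$ equals $6,5,6$ at $(1,1),(2,1),(2,2)$. The same single interior predecessor that carried the bound of three then carries the bound of five to all $r\geq3$, $s\geq2$. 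This is exactly the paper's proof, and it needs no second contributing term and no case analysis on $p,q\in\{0,1\}$.
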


\begin{table}[tb]
\centering
\begin{tabular}{c c c c}
\hline\hline
\((a,b)\)&Condition&Sector \((p,q)\)&Dimension\\
\hline
\((0,0)\)&All \(r,s\)&\((r,s)\)&1\\
\((0,1)\)&\(s=2\)&\((r+1,0)\)&1\\
\((1,0)\)&\(r=3\)&\((1,s+1)\)&2\\
\((2,0)\)&\(r=4\)&\((0,s+2)\)&2\\
\((0,1)\)&\(s=3\)&\((r+1,1)\)&2\\
\((0,2)\)&\(s=4\)&\((r+2,0)\)&2\\
\hline\hline
\end{tabular}
\caption{All logical spaces of dimension one or two for
\(r\geq3\) and \(s\geq1\). Conditions in the second column
are additional restrictions within this range.}
\label{tab:small-logical-spaces}
\end{table}

\begin{proof}
We calculate the dimensions on the two coordinate axes and
then bound them when both coordinates are positive.

\paragraph{The coordinate axes.}
When \(b=0\), the second recursion in
Eq.~\eqref{eq:coordinate-multiplicity-recursions} shows that
the multiplicity is independent of \(s\). Only the \(r\)
fundamental qutrits affect its value. Write
\(B_n(k)=M_{n,0}(k,0)\) for \(0\leq2k\leq n\), and set
\(B_n(k)=0\) otherwise. Exchanging the qutrit types gives
\begin{equation}
 M_{r,s}(a,0)=B_r(a),\qquad
 M_{r,s}(0,b)=B_s(b).
 \label{eq:axis-multiplicities}
\end{equation}
These formulas hold throughout their admissible ranges,
including systems with only one qutrit type. The first recursion in
Eq.~\eqref{eq:coordinate-multiplicity-recursions} becomes
\begin{equation*}
 B_{n+1}(k)=B_n(k)+B_n(k-1),
 \qquad 0\leq2k\leq n+1.
\end{equation*}
Starting from \(B_0(0)=1\), it gives
\begin{equation*}
\begin{array}{c|ccc}
 n&B_n(0)&B_n(1)&B_n(2)\\ \hline
 1&1&&\\
 2&1&1&\\
 3&1&2&\\
 4&1&3&2\\
 5&1&4&5
\end{array}
\end{equation*}
where blank entries are outside the admissible range.
For every \(n\geq5\), each admissible entry with \(k>0\)
is at least three. To propagate this bound from row \(n\)
to row \(n+1\), use \(B_n(k)\) when \(2k\leq n\).
Otherwise \(2k=n+1\), and \(B_n(k-1)\) supplies the bound,
since \(k-1>0\) is admissible. The recursion also gives
\(B_n(0)=1\) for all \(n\), and \(B_n(1)=n-1\) for
\(n\geq2\). The entries one and two on the axes are thus
exactly those in Table~\ref{tab:small-logical-spaces}.

\paragraph{Pairs with both coordinates positive.}
At \((r,s)=(3,1)\), the admissible pairs with both
coordinates positive are \((1,1)\) and \((2,1)\).
Iterating Eq.~\eqref{eq:multiplicity-recursions} from
\(m_{0,0}(0,0)=1\) gives multiplicity three for each.
The choice in
Eq.~\eqref{eq:support-contribution} preserves positive coordinates
in every subsequent enlargement. If the lower bound of three holds
in the original system, each enlarged pair thus inherits a
contribution of dimension at least three. The remaining terms in
the recursion are nonnegative. Starting from \((3,1)\), induction
gives, for every \(r\geq3\) and \(s\geq1\),
\begin{equation}
 M_{r,s}(a,b)\geq3
 \quad\text{for }(a,b)\in\mathscr D_{r,s},\ a,b>0.
 \label{eq:interior-dimension-at-least-three}
\end{equation}
Together with the calculation on the axes, this proves the classification.

For the stronger bound when \(s\geq2\), start instead at
\((3,2)\). Its positive coordinate pairs are
\((1,1),(2,1),(2,2)\), with multiplicities \(6,5,6\).
The same argument gives, for every \(r\geq3\) and \(s\geq2\),
\begin{equation}
 M_{r,s}(a,b)\geq5
 \quad\text{for }(a,b)\in\mathscr D_{r,s},\ a,b>0.
 \label{eq:interior-lower-bound-conclusion}
\end{equation}
\end{proof}

\subsection{Enlarged spaces assembled from low-dimensional logical subsystems}
\label{sec:source-multiplicity-at-most-two}

For an original system with \(r\geq3\) and \(s\geq1\),
only three dimension patterns remain when an enlarged logical
space has dimension greater than one and no predecessor
of dimension at least three. The corresponding control arguments
are given in Appendix~\ref{app:sectorwise-separation}. 

\begin{proposition}[Enlarged spaces with small contributions]
\label{prop:small-contributions}
Let \(r\geq3\), \(s\geq1\), and add either type of qutrit.
The enlarged sectors satisfying
\begin{equation}
 m'_\lambda>1,
 \qquad m_\gamma\leq2
 \quad\text{for every }\gamma\in\Gamma(\lambda)
 \label{eq:small-contribution-condition}
\end{equation}
have contribution dimensions \((2)\), \((2,1)\), or
\((1,1)\). Thus the enlarged dimension is two or three.
Table~\ref{tab:small-contributions} gives all nine families of
sector labels for these three patterns.
\end{proposition}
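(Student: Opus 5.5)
The approach is to reduce the statement to a finite case analysis in the coordinates \((a,b)\) of Eq.~\eqref{eq:sector-coordinates}, using Lemma~\ref{lem:low-multiplicities} to identify the only predecessors that can have dimension at most two. Fix the original system \((r,s)\) with \(r\geq3\), \(s\geq1\), and an enlarged admissible pair \((a,b)\). By Eq.~\eqref{eq:coordinate-multiplicity-recursions}, its predecessors have coordinates in the triple
\[
T^{+}(a,b)=\{(a,b),(a-1,b),(a-1,b-1)\}
\]
when \(V\) is added, and in
\[
T^{-}(a,b)=\{(a,b),(a,b-1),(a-1,b-1)\}
\]
when \(V^*\) is added. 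Only members lying in \(\mathscr D_{r,s}\) are kept, as justified by Lemma~\ref{lem:sector-support}.

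Condition~\eqref{eq:small-contribution-condition} requires every present predecessor to have dimension one or two. By Lemma~\ref{lem:low-multiplicities}, every present pair with both coordinates positive has dimension at least three. So every present predecessor must lie on a coordinate axis and appear in Table~\ref{tab:small-logical-spaces}. The table has only six entries: \((0,0)\), \((0,1)\) with \(s\in\{2,3\}\), \((1,0)\) with \(r=3\), \((2,0)\) with \(r=4\), and \((0,2)\) with \(s=4\).

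The first step is to list, for each sign \(\pm\), the enlarged pairs \((a,b)\) whose present predecessors all lie in this list. Because each triple contains \((a-1,b-1)\) and one of \((a,b)\), \((a-1,b)\), \((a,b-1)\), and the table entries have coordinates at most two, only enlarged pairs with \(a,b\leq3\) can occur. I would then split into three cases:
\begin{itemize}
\item[(i)] \(a=0\) or \(b=0\), where the triple collapses to at most two axis points;
\item[(ii)] \(a,b\geq1\) with \((a,b)\) itself inadmissible for the original system;
\item[(iii)] any remaining configurations.
\end{itemize}
In each case I would impose admissibility using \(p=r-2a+b\geq0\) and \(q=s+a-2b\geq0\) in both the original and the enlarged system. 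These inequalities decide which triple members are present. Typical outcomes are the following. The pair \((1,1)\) under \(V\) has predecessors \((0,1),(0,0)\), which gives the pattern \((1,1)\) when \(s=2\) and \((2,1)\) when \(s=3\). A pair on the \(a\)-axis gives \((1,0)\) or \((2,0)\) as a sole contribution, which gives the pattern \((2)\).

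The second step is to exclude every other pattern: \((2,2)\), \((1,1,1)\), any triple with three members, and a single contribution of dimension one. A sole contribution of dimension one is excluded by \(m'_\lambda>1\). Excluding \((1,1,1)\) and \((2,2)\) amounts to observing two facts. First, two dimension-two table entries never share a triple, because their conditions are incompatible or their coordinates are not of the required shape; for example, \((1,0)\) and \((0,1)\) never appear together in \(T^{\pm}\). Second, \((0,0)\) and both axis neighbours cannot all be present, since then \((a,b)=(1,1)\) would itself be present and interior with dimension at least three. The remaining work is to record the sector labels \((p,q)\) obtained by translating each surviving case back from coordinates. This should produce the nine families of Table~\ref{tab:small-contributions}, and the enlarged dimensions \(2\) or \(3\) follow from Eq.~\eqref{eq:branching-logical-spaces}.

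I expect the main obstacle to be completeness of the bookkeeping. The same coordinate pair can be present in the enlarged system but absent in the original one, and boundary cases such as \(p=0\) or \(q=0\) interact with the side conditions \(r=3,4\) and \(s=2,3,4\). To control this, I would treat \(V\) and \(V^*\) addition by the symmetry that exchanges \(r\leftrightarrow s\), \(a\leftrightarrow b\), and \(p\leftrightarrow q\), while noting that this symmetry is broken by the standing assumption \(r\geq3\), \(s\geq1\). For the small values of \(r\) and \(s\), I would cross-check the resulting list directly against the recursions~\eqref{eq:multiplicity-recursions}.
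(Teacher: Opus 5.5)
\textbf{There is a genuine gap.} The enumeration strategy could work, but the one case you actually compute is wrong, and it is wrong exactly where the real content of the proof lies.

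For addition of \(V\), the enlarged pair \((1,1)\) has \(T^{+}(1,1)=\{(1,1),(0,1),(0,0)\}\). Moreover, \((1,1)\in\mathscr D_{r,s}\) for every \(r\geq3\), \(s\geq1\), because its original labels are \(p=r-1\geq2\) and \(q=s-1\geq0\). So \((1,1)\) is itself a present interior predecessor. By Lemma~\ref{lem:low-multiplicities} it has dimension at least three, so this enlarged sector never satisfies Eq.~\eqref{eq:small-contribution-condition}. This case falls under your own case (iii) and should have been discarded immediately.

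The patterns \((1,1)\) at \(s=2\) and \((2,1)\) at \(s=3\) instead come from the enlarged pair \((0,1)\) under addition of \(V^*\). There \(T^{-}(0,1)\cap\mathscr D_{r,s}=\{(0,1),(0,0)\}\), which gives the rows \((r+1,1)\) and \((r+1,2)\) of Table~\ref{tab:small-contributions}. Under \(V\), the pattern \((2,1)\) arises only from the enlarged pair \((1,0)\) at \(r=3\). Carried through as sketched, your bookkeeping would put spurious \(V\) families into the table.

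The missing idea is a uniform argument that no enlarged pair with \(a,b>0\) qualifies. Your case (ii) needs this, and ``impose admissibility'' does not supply it. The paper obtains it from the choice in Eq.~\eqref{eq:support-contribution}, using the enlarged labels \(p,q\).

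For addition of \(V\):
\begin{itemize}
\item If \(p\geq1\), then \((a,b)\) itself is present in the original system.
\item If \(p=0\), then \(2a=r+1+b\geq5\) forces \(a\geq3\). So \((a-1,b)\) is present and interior whenever \(q\geq1\).
\item If also \(q=0\), then \(2b=s+a\geq4\) forces \(b\geq2\). So \((a-1,b-1)\) is present and interior.
\end{itemize}
For \(V^*\) the argument is analogous: \(q=0\) forces \(b\geq2\), and \(p=q=0\) forces \(a\geq3\).

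With this step, only the axes remain. There one member of the triple drops out, and the question reduces to \(B_n(k)\) on the unchanged axis and \(B_n(k)+B_n(k-1)\) on the growing axis. The small values \(B_2(1)=1\), \(B_3(1)=2\) and \(B_4(2)=2\) then give exactly the nine families: four for \(V\) and five for \(V^*\). Your exclusions of \((2,2)\) and of triples with three present members are correct, but they do not replace this interior argument.
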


\begin{table*}[t]
\centering
\begin{tabular}{c c c c c}
\hline\hline
Added qutrit&Condition&Enlarged label \(\lambda\)
&Original labels \(\Gamma(\lambda)\)&Dimensions\\
\hline
\(V\)&\(r=3\)&\((0,s+2)\)&\(\{(1,s+1)\}\)&\(2\)\\
\(V\)&\(r=3\)&\((2,s+1)\)&\(\{(1,s+1),(3,s)\}\)&\(2+1\)\\
\(V\)&\(s=3\)&\((r+2,1)\)&\(\{(r+1,1)\}\)&\(2\)\\
\(V\)&\(s=4\)&\((r+3,0)\)&\(\{(r+2,0)\}\)&\(2\)\\
\hline
\(V^*\)&\(s=2\)&\((r+1,1)\)&\(\{(r+1,0),(r,2)\}\)&\(1+1\)\\
\(V^*\)&\(s=3\)&\((r+1,2)\)&\(\{(r+1,1),(r,3)\}\)&\(2+1\)\\
\(V^*\)&\(s=3\)&\((r+2,0)\)&\(\{(r+1,1)\}\)&\(2\)\\
\(V^*\)&\(r=3\)&\((1,s+2)\)&\(\{(1,s+1)\}\)&\(2\)\\
\(V^*\)&\(r=4\)&\((0,s+3)\)&\(\{(0,s+2)\}\)&\(2\)\\
\hline\hline
\end{tabular}
\caption{The sector families in
Proposition~\ref{prop:small-contributions}, for \(r\geq3\)
and \(s\geq1\). The last column gives the dimensions of the
contributions in the order displayed. Their sum is the
dimension of the enlarged logical space. Each row applies
whenever its condition holds, including when several rows apply.}
\label{tab:small-contributions}
\end{table*}

\begin{proof}
If the enlarged coordinates satisfy \(a,b>0\),
Eq.~\eqref{eq:support-contribution} and its counterpart
for \(V^*\) give an original contribution with both coordinates positive.
Its dimension is at least three by
Lemma~\ref{lem:low-multiplicities}, contradicting
Eq.~\eqref{eq:small-contribution-condition}. Only the axes
remain.

On an axis, adding a qutrit either leaves the multiplicity
equal to \(B_n(k)\), or changes it to
\(B_n(k)+B_n(k-1)\). Adding \(V\) increases the parameter
on the \(b=0\) axis, with \(n=r,k=a\), and leaves the
\(a=0\) axis unchanged. Adding \(V^*\) increases the
parameter on the \(a=0\) axis, with \(n=s,k=b\), and
leaves the other axis unchanged.

For an unchanged axis, the single contribution must have
dimension two. The values of \(B_n(k)\) found above give
only \((n,k)=(3,1)\) and \((4,2)\).

For an increasing parameter, target admissibility requires
\(0\leq2k\leq n+1\). The sum must exceed one and each
nonzero term must be at most two. The sum condition excludes
\(k=0\). For \(n\geq5\), the axis argument above gives a
contribution of dimension at least three, so only \(n=1,2,3,4\)
remain. At \(n=1\), both admissible targets have dimension one.
Inspecting the other three rows leaves
\begin{equation*}
\begin{array}{c|c|c}
 (n,k)&B_n(k)+B_n(k-1)&\text{enlarged dimension}\\ \hline
 (2,1)&1+1&2\\
 (3,1)&2+1&3\\
 (3,2)&0+2&2
\end{array}
\end{equation*}
The last row has only one nonzero contribution, of dimension two. For addition of \(V\), \(n=r\geq3\) excludes the first
row. For addition of \(V^*\), \(n=s\geq1\) allows it. Substituting the resulting coordinates into
Eq.~\eqref{eq:sector-coordinates} gives exactly
Table~\ref{tab:small-contributions}.
\end{proof}

These patterns leave two control arguments in
Appendix~\ref{app:sectorwise-separation}. The lists \((2)\)
and \((1,1)\) give a two-dimensional enlarged space, where
irreducibility alone will imply full \(SU(2)\) control.
The list \((2,1)\) requires coupling the two-dimensional
subspace with an existing \(SU(2)\) action to the remaining
one-dimensional subspace, giving full \(SU(3)\) control. After establishing control within each
enlarged sector, a further argument establishes independence between different sectors.

\section{Extending logical control and separating sectors}
\label{app:sectorwise-separation}

After adding either type of qutrit, we must establish full special-unitary
control within each enlarged logical space and show that the gates can be
chosen independently across sectors. We first give the general control
arguments, then use the branching classification to establish both
properties in the range needed for the induction proving
Theorem~\ref{thm:global-semiuniversality}.
Appendix~\ref{app:propagation-base} applies these arguments to the initial
systems and completes the proof.

We use the notation of Appendix~\ref{app:branching-multiplicity}.
The predecessors \(W_\gamma\) contribute to an enlarged space
\(W'_\lambda\), with predecessor labels \(\Gamma(\lambda)\).
Let \(G'\) denote the pulse group generated by the available two-body
interactions on the enlarged system. Independent logical control on the
original system \((r,s)\) supplies \(SU(W_\gamma)\) on each contribution
\(W_\gamma\) of \(W'_\lambda\), with the identity on the other
contributions within \(W'_\lambda\), by
Eq.~\eqref{eq:branching-existing-action}. The full set of interactions on
the enlarged system acts irreducibly on \(W'_\lambda\), since its
associative algebra is the full matrix algebra on that space.

An original logical space can contribute to several enlarged sectors,
so its gate appears in each of them. Independence between enlarged
sectors must be established separately.

\subsection{Extending control from a contributing space}
\label{subsec:propagation-control-inputs}

Irreducibility supplies a coupling between any proper subspace on which
gates are available and its complement. When that subspace has dimension at least three, two diagonal commutators select a single new direction from the coupling. The available gates then connect this direction to every
vector in the starting subspace, extending full control by one dimension.

\begin{lemma}[Extending control from a subspace]
\label{lem:propagation-irreducible-extension}
Let \(\mathfrak j\subseteq\mathfrak u(\mathcal K)\) be a real
matrix Lie algebra acting irreducibly on the complex Hilbert space
\(\mathcal K\). If it contains
\(\mathfrak{su}(\mathcal K_0)\oplus0_{\mathcal K_0^\perp}\)
for a subspace \(\mathcal K_0\) of dimension at least three, then
\begin{equation}
 \mathfrak{su}(\mathcal K)\subseteq\mathfrak j.
 \label{eq:propagation-extension-full-algebra}
\end{equation}
\end{lemma}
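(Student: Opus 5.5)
The plan is to grow the controlled subspace one dimension at a time. Let \(n_0=\dim\mathcal K_0\geq3\) and write \(P_0\) and \(P_1=I-P_0\) for the orthogonal projectors onto \(\mathcal K_0\) and \(\mathcal K_0^\perp\). If \(\mathcal K_0=\mathcal K\) there is nothing to prove. Otherwise I will show that \(\mathfrak j\) contains \(\mathfrak{su}(\mathcal K_1)\oplus0\) for some subspace \(\mathcal K_1\supsetneq\mathcal K_0\) with \(\dim\mathcal K_1=n_0+1\). The hypotheses are unchanged with \(\mathcal K_1\) in place of \(\mathcal K_0\), since \(\mathfrak j\) is still irreducible and \(\dim\mathcal K_1\geq3\). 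Iterating therefore reaches \(\mathcal K\) after finitely many steps and gives Eq.~\eqref{eq:propagation-extension-full-algebra}.

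\emph{Step 1 (a coupling out of \(\mathcal K_0\)).} There is some \(X\in\mathfrak j\) with \(P_1XP_0\neq0\). Otherwise every element of \(\mathfrak j\) would map \(\mathcal K_0\) into itself. So would all real combinations and products of these elements, and \(\mathcal K_0\) would be a proper nonzero invariant subspace, contradicting irreducibility. Choose an orthonormal basis \(e_1,\ldots,e_{n_0}\) of \(\mathcal K_0\) with \(v:=P_1Xe_1\neq0\). This choice is possible because \(P_1XP_0\neq0\), and any unit vector of \(\mathcal K_0\) can be completed to such a basis.

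\emph{Step 2 (isolating one new direction with two diagonal commutators).} Set
\[
h_1=i(\ketbra{e_1}-\ketbra{e_2}),\qquad h_2=i(\ketbra{e_1}-\ketbra{e_3}).
\]
Both lie in \(\mathfrak{su}(\mathcal K_0)\oplus0\subseteq\mathfrak j\), and \(n_0\geq3\) is used exactly here. Both are diagonal in a basis extending \(\{e_k\}\) and vanish on \(\mathcal K_0^\perp\). The double commutator \(Y=[h_2,[h_1,X]]\in\mathfrak j\) multiplies the matrix element of \(X\) between basis vectors with \(h\)-eigenvalues \((\alpha_j,\beta_j)\) and \((\alpha_k,\beta_k)\) by \(-(\alpha_j-\alpha_k)(\beta_j-\beta_k)\).
\begin{itemize}
\item The \(\mathcal K_0^\perp\) block of \(Y\) vanishes, since all the eigenvalues there are zero.
\item In the block mapping \(\mathcal K_0\) to \(\mathcal K_0^\perp\), the factor is \(-\alpha_k\beta_k\). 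This is nonzero only for \(k=1\), so this block equals \(-\ket v\bra{e_1}\). The opposite block is fixed by anti-Hermiticity.
\item The \(\mathcal K_0\) block of \(Y\) is a commutator with \(h_2\), hence traceless and anti-Hermitian. It therefore lies in \(\mathfrak{su}(\mathcal K_0)\subseteq\mathfrak j\) and can be subtracted.
\end{itemize}
This leaves \(Z:=\ket v\bra{e_1}-\ket{e_1}\bra v\in\mathfrak j\). Commuting \(Z\) with \(i(\ketbra{e_1}-\ketbra{e_2})\) gives, up to a real factor, \(i(\ket v\bra{e_1}+\ket{e_1}\bra v)\). So both real off-diagonal directions between \(e_1\) and \(\hat v=v/\norm v\) are in \(\mathfrak j\).

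\emph{Step 3 (closing up to \(\mathfrak{su}(\mathcal K_1)\)).} Let \(\mathcal K_1=\mathcal K_0\oplus\mathbb C\hat v\). Conjugating by \(e^{A}\) with \(A\in\mathfrak{su}(\mathcal K_0)\) fixes \(\hat v\) and moves \(e_1\) to any unit vector of \(\mathcal K_0\), and such conjugations preserve \(\mathfrak j\). Taking real spans then gives all operators \(\ket{\hat v}\bra u-\ket u\bra{\hat v}\) and \(i(\ket{\hat v}\bra u+\ket u\bra{\hat v})\) with \(u\in\mathcal K_0\). Their commutators produce \(i(\ketbra{\hat v}-\ketbra{u})\). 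Together with \(\mathfrak{su}(\mathcal K_0)\), these span \(\mathfrak{su}(\mathcal K_1)\): all off-diagonal directions and a full traceless Cartan subalgebra.

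I expect Step 2 to be the main obstacle. The difficulty is the bookkeeping that shows the double commutator kills every coupling except the single column \(e_1\), while its residue inside \(\mathcal K_0\) is traceless and can be removed. One also has to check that \(n_0\geq3\) supplies two independent traceless diagonals that both act nontrivially on \(e_1\) but not jointly on \(e_2\) or \(e_3\). This mechanism is unavailable for \(n_0\leq2\), which is why Lemma~\ref{lem:thin-sector-controllability} is needed for the thin sectors.
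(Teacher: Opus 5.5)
Your proposal is correct and follows essentially the same route as the paper. Irreducibility supplies a coupling out of the controlled subspace, and the double commutator with $i(E_{11}-E_{22})$ and $i(E_{11}-E_{33})$, where $\dim\geq3$ is used, isolates a single coupling vector. The existing $\mathfrak{su}(\mathcal K_0)$ action then spreads this coupling to all of $\mathfrak{su}(\mathcal K_0\oplus\mathbb C\hat v)$. The differences are cosmetic: you iterate one dimension at a time instead of taking a maximal subspace and deriving a contradiction, and you subtract the traceless residue in $\mathcal K_0$ after the double commutator rather than cleaning the diagonal block of $X$ beforehand.
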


\begin{proof}
Choose a subspace \(M\supseteq\mathcal K_0\) of largest dimension
such that \(\mathfrak{su}(M)\), acting as zero on \(M^\perp\),
is contained in \(\mathfrak j\). If \(M\neq\mathcal K\),
irreducibility supplies an \(X\in\mathfrak j\) with a nonzero
block \(B:M^\perp\to M\). Subtract the traceless part of its
\(M\)-diagonal block, which is already available in the supported
\(\mathfrak{su}(M)\). In the decomposition \(M\oplus M^\perp\),
the resulting matrix has the form
\begin{equation*}
 X=\begin{pmatrix}i\alpha I_M&B\\-B^\dagger&Z\end{pmatrix},
 \qquad B\neq0,\quad \alpha\in\mathbb R.
\end{equation*}

Choose an orthonormal basis of \(M\) and a nonzero row \(j\) of
\(B\). Since \(\dim M\geq3\), there are two distinct indices
\(k,\ell\), both different from \(j\). The available matrices
\(D_1=i(E_{jj}-E_{kk})\) and \(D_2=i(E_{jj}-E_{\ell\ell})\),
extended by zero on \(M^\perp\), commute with the diagonal blocks
of \(X\). Here \(E_{ab}:=\ket{e_a}\bra{e_b}\).
The first commutator removes the diagonal blocks, and the second
selects the chosen row of \(B\). Since \(D_1D_2=-E_{jj}\) on
\(M\),
\begin{equation}
 [D_1,[D_2,X]]
 =-\begin{pmatrix}0&e_jw^\dagger\\-we_j^\dagger&0\end{pmatrix},
 \qquad w=B^\dagger e_j\neq0.
 \label{eq:extension-isolated-direction}
\end{equation}
After real rescaling, this couples \(e_j\) to the unit vector
\(f=w/\|w\|\) and vanishes elsewhere. Commutation with \(D_1\)
gives the coupling with \(e_j\) replaced by \(ie_j\).
Commutators with the existing \(\mathfrak{su}(M)\) give both
coupling directions between \(f\) and every basis vector of \(M\).
Their brackets also supply \(i(E_{jj}-P_f)\), where \(P_f\) is
the projector onto \(\mathbb Cf\).

For \(m=\dim M\), these are the \(2m\) real coupling directions
and one additional traceless diagonal direction. Together with the
\(m^2-1\) directions already available on \(M\), they span
\(\mathfrak{su}(M\oplus\mathbb Cf)\), acting as zero on its
orthogonal complement. This contradicts the maximality of \(M\).
Thus \(M=\mathcal K\), which proves the required containment.
\end{proof}

Lemma~\ref{lem:propagation-irreducible-extension} is also the Lie-algebra form of
Ref.~\cite[Lemma~2 and Remark~4]{HulseLiuMarvian2024}.
The dimension assumption supplies the three distinct indices used
above. The dimension assumption cannot be removed: a supported
\(\mathfrak{su}(2)\) can lie in a proper irreducible compact symplectic
algebra. We treat two-dimensional contributions separately below.

% \paragraph{Application to an enlarged logical space.}
% \label{subsec:large-source-sector-controllability}

\begin{lemma}[Contributions of dimension at least three]
\label{lem:regular-sector-controllability}
Let \(r,s\geq1\), assume independent logical control on
\(\mathcal H_{r,s}\), and add either type of qutrit. If an enlarged
logical space \(W'_\lambda\) has a predecessor
\(W_\gamma\) of dimension at least three, then
\begin{equation}
 [G']_\lambda\supseteq SU(W'_\lambda).
 \label{eq:regular-sector-controllability-conclusion}
\end{equation}
\end{lemma}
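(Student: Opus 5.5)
The plan is to apply Lemma~\ref{lem:propagation-irreducible-extension} with \(\mathcal K=W'_\lambda\), \(\mathcal K_0=W_\gamma\) embedded through the branching identification of Eq.~\eqref{eq:branching-logical-spaces}, and \(\mathfrak j=\operatorname{Lie}([G']_\lambda)\). I would carry this out in three steps.

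\textbf{Step 1: \(\mathfrak j\) contains \(\mathfrak{su}(W_\gamma)\oplus 0\).} The original interactions in \(\mathcal C_{r,s}\), extended as \(X'=X\otimes I_F\), belong to \(\mathcal C_{r',s'}\) of the enlarged system. Hence \(G_{r,s}\otimes I_F\subseteq G'\). By the hypothesis of independent logical control, \(G_{r,s}\) contains the element with block \(Y_\gamma\in SU(W_\gamma)\) on \(W_\gamma\) and the identity on every other original logical space. By Eq.~\eqref{eq:branching-existing-action}, its enlarged block on \(W'_\lambda\) is \(Y_\gamma\oplus I\) on the complement of \(W_\gamma\). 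Taking one-parameter subgroups \(Y_\gamma=e^{tA}\) with \(A\in\mathfrak{su}(W_\gamma)\), and using Lemma~\ref{lem:pulse-group-closed} so that \(G'\) is a compact connected Lie group with Lie algebra \(\mathfrak L'\), gives \(A\oplus0\in[\mathfrak L']_\lambda\). The block map is a Lie algebra homomorphism, so \([\mathfrak L']_\lambda\) is the Lie algebra of \([G']_\lambda\).

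\textbf{Step 2: irreducibility.} By Eq.~\eqref{eq:interaction-commutant}, the associative algebra generated by \(\mathcal C_{r',s'}\) is the full commutant, and its \(\lambda\)-block is \(\operatorname{End}(W'_\lambda)\). A subspace of \(W'_\lambda\) invariant under every block \((-iH)_\lambda\) would be invariant under this whole algebra. Therefore \([\mathfrak L']_\lambda\), which contains these generators, acts irreducibly on \(W'_\lambda\).

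\textbf{Step 3: conclusion.} Since \(\dim W_\gamma\geq3\), Lemma~\ref{lem:propagation-irreducible-extension} gives \(\mathfrak{su}(W'_\lambda)\subseteq[\mathfrak L']_\lambda\). The block projection \(G'\to U(W'_\lambda)\) is a continuous homomorphism, so \([G']_\lambda\) is a compact connected group whose Lie algebra contains \(\mathfrak{su}(W'_\lambda)\). It therefore contains the connected subgroup \(SU(W'_\lambda)\) generated by exponentials of that algebra. Every element of \(SU(W'_\lambda)\) is an exponential of a traceless anti-Hermitian matrix, which is a finite pulse product restricted to the block.

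The main obstacle is Step 1. There one must check carefully that the extension map sends \(G_{r,s}\) into \(G'\), including the site-ordering convention that keeps \(V\) factors before \(V^*\) factors, which matters when a fundamental qutrit is added. One must also check that the inherited gate acts as the identity on the other contributions within the same enlarged space. I would settle the ordering issue by conjugating with the fixed permutation that restores the convention. Exchange pulses on the original sites transform into the corresponding exchanges, and singlet interactions into singlet interactions on relabelled pairs, all of which lie in \(\mathcal C_{r',s'}\).
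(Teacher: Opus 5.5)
Your proposal is correct and follows the paper's proof essentially step for step. You inherit $SU(W_\gamma)\oplus I$ on $W'_\lambda$ from independent control on $\mathcal H_{r,s}$ via Eq.~\eqref{eq:branching-existing-action}, use the full-commutant property to get irreducibility of the block Lie algebra, apply Lemma~\ref{lem:propagation-irreducible-extension}, and exponentiate inside the compact connected block group of Lemma~\ref{lem:pulse-group-closed}. Your extra care about the extension map landing in $G'$ (including the site-ordering convention) and about the block map carrying $\Lie(G')$ onto $\Lie([G']_\lambda)$ only makes explicit what the paper leaves implicit.
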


\begin{proof}
In the branching basis, independent control on the original system gives
\begin{equation*}
 SU(W_\gamma)\oplus I_{W_\gamma^\perp}
 \subseteq [G']_\lambda,
\end{equation*}
where the complement is taken within \(W'_\lambda\).
The block group is compact and connected by
Lemma~\ref{lem:pulse-group-closed}. Its Lie algebra contains
\(\mathfrak{su}(W_\gamma)\) with zero action on that complement,
and acts irreducibly by the full associative-algebra property above.
Lemma~\ref{lem:propagation-irreducible-extension} gives
\(\mathfrak{su}(W'_\lambda)\subseteq\Lie([G']_\lambda)\).
Exponentiating inside the block group proves the result.
\end{proof}

\subsection{Completing control in dimensions two and three}
\label{subsec:elementary-control}

We now treat two cases: a two-dimensional predecessor coupled to a
line, and an entire logical space of dimension two. In the latter case,
irreducibility alone gives full special-unitary control. Both arguments
apply independently of any branching classification.

\begin{lemma}[A two-dimensional space coupled to a line]
\label{lem:two-one-control}
Let \(T\subseteq U(E\oplus L)\) be a connected closed group,
where \(\dim E=2\) and \(\dim L=1\). If its Lie algebra acts
irreducibly and \(T\) contains \(SU(E)\oplus I_L\), then
\(T\supseteq SU(E\oplus L)\).
\end{lemma}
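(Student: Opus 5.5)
We work at the level of Lie algebras. Let \(\mathfrak t=\Lie(T)\subseteq\mathfrak u(3)\). Since \(T\) is connected and closed, it suffices to show \(\mathfrak{su}(3)\subseteq\mathfrak t\); exponentiating inside \(T\) then gives \(SU(E\oplus L)\subseteq T\). By hypothesis \(\mathfrak t\) contains \(\mathfrak{su}(E)\oplus0_L\). Write a general element of \(\mathfrak t\) in the block decomposition \(E\oplus L\) as
\begin{equation*}
 X=\begin{pmatrix}A&v\\-v^\dagger&i\beta\end{pmatrix},\qquad v\in E,\ \beta\in\mathbb R .
\end{equation*}
The first step is to produce such an \(X\) with \(v\neq0\). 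If every element had \(v=0\), then \(E\) would be invariant under \(\mathfrak t\), contradicting irreducibility. Subtracting the traceless part of \(A\), which lies in \(\mathfrak{su}(E)\), we may assume \(A=i\alpha I_E\).

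The second step generates all couplings between \(E\) and \(L\). For \(Y\in\mathfrak{su}(E)\oplus0\), the block \(i\alpha I_E\oplus i\beta\) commutes with \(Y\), so
\begin{equation*}
 [Y,X]=\begin{pmatrix}0&Yv\\ (Yv)^\dagger&0\end{pmatrix}\cdot(\text{signs fixed by anti-Hermiticity}),
\end{equation*}
that is, the purely off-diagonal element with coupling vector \(Yv\). The real span of \(\{Yv:Y\in\mathfrak{su}(2)\}\) together with \(v\) itself is all of \(E\cong\mathbb C^2\) as a real 4-dimensional space. Indeed, \(\mathfrak{su}(2)\) acts transitively on the unit sphere \(S^3\subset\mathbb C^2\), so the tangent vectors \(Yv\) span the 3-dimensional tangent space at \(v\), which is orthogonal to \(v\). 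Commutators alone supply only the tangent directions. The direction \(v\) itself is recovered by a second bracket: for \(Y=i\sigma\) with \(\sigma v=v\) (choose coordinates with \(v\propto e_1\)), we get \(Yv=iv\) and then \([Y,[Y,X]]\) gives coupling vector \(Y^2v=-v\). Thus \(\mathfrak t\) contains all \(4\) real off-diagonal directions.

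The third step is the dimension count. Brackets of two off-diagonal elements give the diagonal direction \(i(E_{11}-2P_L)\)-type element, up to \(\mathfrak{su}(E)\) terms. Together with the \(3\) directions of \(\mathfrak{su}(E)\) and the \(4\) off-diagonal ones, this yields \(8=\dim\mathfrak{su}(3)\) independent traceless directions, so \(\mathfrak{su}(3)\subseteq\mathfrak t\).

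The main obstacle is the second step. Unlike Lemma~\ref{lem:propagation-irreducible-extension}, there is no third index to isolate a row, and one must check that the \(\mathfrak{su}(2)\) orbit of \(v\) really fills all four real directions, including \(v\) itself, rather than a proper real subspace. The caveat mentioned after Lemma~\ref{lem:propagation-irreducible-extension}, that a supported \(\mathfrak{su}(2)\) can sit inside \(\mathfrak{sp}\), does not arise here. The dimension \(3\) is odd and admits no invariant symplectic form, and the explicit bracket \([Y,[Y,X]]\) settles the issue.
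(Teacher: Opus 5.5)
Your proof is correct and follows the same plan as the paper's: isolate a pure $E$--$L$ coupling, use $\mathfrak{su}(E)$ to reach all four real coupling directions, bracket two couplings to get the remaining diagonal direction, and count $3+4+1=8$. The differences are that the paper does the first two steps at the group level, averaging with $D=(-I_E)\oplus I_L\in SU(E)\oplus I_L$ and taking the real span of the full $SU(E)$-orbit of $v$, whereas you use $[Y,X]$ for the tangent directions and $[Y,[Y,X]]$ for the radial one; and one slip needs fixing: with $X(w)$ the off-diagonal element with coupling $w$, $[X(e_1),X(ie_1)]=2i(E_{11}-P_L)$, which is $i(I_E-2P_L)$ modulo $\mathfrak{su}(E)$, while the element $i(E_{11}-2P_L)$ you name is not traceless and so cannot arise from brackets.
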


\begin{proof}
Write \(\mathfrak t=\Lie(T)\). Irreducibility supplies a Hermitian
matrix \(A\) with \(-iA\in\mathfrak t\) and a nonzero block
between \(E\) and \(L\). The available gate
\(D=(-I_E)\oplus I_L\) belongs to \(SU(E)\oplus I_L\),
since \(\det(-I_E)=1\). Its conjugation reverses the sign of the
coupling while preserving the diagonal blocks. Hence
\begin{equation}
 C=\frac{A-DAD^\dagger}{2}
   =\begin{pmatrix}0&v\\v^\dagger&0\end{pmatrix},
 \qquad v\neq0,
 \label{eq:two-one-isolated-coupling}
\end{equation}
and \(-iC\in\mathfrak t\).

The existing \(SU(E)\) gates rotate \(v\) through the sphere of
radius \(\|v\|\) in \(E\). The real linear span of the resulting
conjugates of \(-iC\) contains all four coupling directions
\begin{equation*}
 X(w)=\begin{pmatrix}0&w\\-w^\dagger&0\end{pmatrix},
 \qquad w\in E.
\end{equation*}
For a unit vector \(e\in E\),
\begin{equation*}
 [X(e),X(ie)]=2i\bigl(ee^\dagger\oplus(-1)\bigr).
\end{equation*}
This gives a traceless diagonal direction with nonzero action on
\(L\). The three old \(\mathfrak{su}(E)\) directions, four
coupling directions and this diagonal direction are the eight
directions of \(\mathfrak{su}(E\oplus L)\).
Exponentiating inside \(T\) proves the claim.
\end{proof}

\begin{lemma}[Irreducible control in dimension two]
\label{lem:two-lines-control}
Let \(\mathfrak j\subseteq\mathfrak u(E)\) be a real matrix
Lie algebra acting irreducibly on a complex two-dimensional space
\(E\). Then \(\mathfrak{su}(E)\subseteq\mathfrak j\).
\end{lemma}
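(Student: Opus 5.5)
The plan is to reduce the statement to a short fact about real Lie subalgebras of \(\mathfrak u(2)\). Since \(\mathfrak j\subseteq\mathfrak u(E)\) and \(\mathfrak u(E)=\mathfrak{su}(E)\oplus i\mathbb R I_E\), I will project \(\mathfrak j\) onto \(\mathfrak{su}(E)\) along the scalar direction. The scalars commute with everything, so the derived algebra \([\mathfrak j,\mathfrak j]\) depends only on the traceless parts of the elements of \(\mathfrak j\) and lies in \(\mathfrak{su}(E)\cap\mathfrak j\). It therefore suffices to show that \([\mathfrak j,\mathfrak j]=\mathfrak{su}(E)\).

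The first step is to show that \(\mathfrak j\) is not abelian. If it were, all its elements would be commuting anti-Hermitian (hence normal) matrices. They would then be simultaneously diagonalisable, and any common eigenvector would span a one-dimensional invariant subspace, contradicting irreducibility. Hence the projection \(\pi(\mathfrak j)\subseteq\mathfrak{su}(E)\cong\mathfrak{su}(2)\cong(\mathbb R^3,\times)\) is a real subspace containing two non-commuting, hence linearly independent, elements. Identify \(\mathfrak{su}(2)\) with \(\mathbb R^3\) under the cross product via Pauli matrices. Two independent vectors \(x,y\) in \(\pi(\mathfrak j)\) then give \([x,y]\leftrightarrow x\times y\), which is orthogonal to both.

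The key step, and the only one I expect to require care, is to get all three directions inside \([\mathfrak j,\mathfrak j]\), not merely inside \(\pi(\mathfrak j)\). Take \(X,Y\in\mathfrak j\) with traceless parts \(x,y\) independent. Then \(Z:=[X,Y]=[x,y]\in\mathfrak j\cap\mathfrak{su}(E)\) is nonzero and corresponds to \(x\times y\). Next I form \([X,Z]=[x,z]\leftrightarrow x\times(x\times y)\) and \([Y,Z]\leftrightarrow y\times(x\times y)\). Both lie in \(\mathfrak j\cap\mathfrak{su}(E)\) and are orthogonal to \(x\times y\). They are linearly independent, because \(v\mapsto v\times n\) with \(n=x\times y\) is injective on \(n^\perp=\operatorname{span}\{x,y\}\). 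Together with \(Z\) they therefore span \(\mathbb R^3\), so \(\mathfrak{su}(E)\subseteq\mathfrak j\).

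The only possible obstacle is making sure the scalar parts of \(X\) and \(Y\) are never needed. This is handled by noting that every element produced is a commutator and hence traceless. No exponentiation is required, since the statement is purely at the Lie-algebra level.
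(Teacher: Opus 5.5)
Your proof is correct and takes essentially the same route as the paper. In both, irreducibility yields two non-commuting elements: the paper phrases this as a nonscalar $A$ together with a $B$ that mixes the eigenvectors of $A$, while you argue via simultaneous diagonalisation. Nested commutators, which discard the scalar parts automatically, then supply all three directions of $\mathfrak{su}(E)$; your cross-product bookkeeping with $x\times y$, $x\times(x\times y)$ and $y\times(x\times y)$ is a coordinate-free version of the paper's explicit Pauli-matrix commutators.
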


\begin{proof}
Scalar matrices alone cannot act irreducibly, so there is a
nonscalar Hermitian \(A\) with \(-iA\in\mathfrak j\).
In an eigenbasis, write
\(A=\alpha I+\delta\sigma_z\), where
\(\alpha,\delta\in\mathbb R\) and \(\delta\neq0\).
Irreducibility supplies another Hermitian \(B\), with
\(-iB\in\mathfrak j\), that mixes the two eigenvectors.
Adjusting one basis phase gives
\begin{equation*}
 B=\begin{pmatrix}u&t\\t&v\end{pmatrix},
 \qquad u,v\in\mathbb R,\quad t>0,
\end{equation*}
without changing \(A\). The Pauli commutators give
\begin{equation}
\begin{aligned}
 [-iA,-iB]&=-2i\delta t\sigma_y,\\
 [-iA,[-iA,-iB]]&=4i\delta^2t\sigma_x.
\end{aligned}
 \label{eq:irreducible-qubit-commutators}
\end{equation}
Their commutator supplies \(i\sigma_z\), so all three directions
of \(\mathfrak{su}(E)\) are available. The scalar part of
\(A\) cancels in the commutators without being removed from the
physical Hamiltonian.
\end{proof}

Exponentiating gives \(SU(E)\) in the corresponding connected
block group. This lemma applies when the entire space has dimension
two. It does not extend a supported \(SU(2)\) action to an
arbitrary larger irreducible space.

\subsection{Choosing gates independently in different sectors}

Full control within each logical space can still leave gates on two
sectors linked, as the example in Sec.~\ref{subsec:proof-assembly}
shows. Such links are possible only between sectors whose logical spaces
have equal dimension.

\begin{lemma}[Independent gates across sectors]
\label{lem:propagation-semiuniversality-criterion}
Let \(T\) be a connected closed group of invariant unitaries, with
logical spaces \(W_\lambda\) as in Appendix~\ref{app:mathematical-framework}. Write \(T_\lambda\)
for its group of logical blocks on \(W_\lambda\). Suppose that
\(T_\lambda\supseteq SU(W_\lambda)\) for every sector. Suppose also
that, for any distinct sectors whose logical spaces have the same dimension,
at least two,
some \(V\in T\) acts as the identity on one and is nonscalar on
the other. Then
\begin{equation}
 T\supseteq
 \left\{\bigoplus_\lambda I_{U_\lambda}\otimes Y_\lambda:
 Y_\lambda\in SU(W_\lambda)\right\}.
 \label{eq:propagation-semiuniversality-criterion-output}
\end{equation}
\end{lemma}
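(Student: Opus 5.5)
We work at the level of Lie algebras and then exponentiate, as in Proposition~\ref{prop:global-lie-rank-criterion}. Let \(\mathfrak t=\Lie(T)\) and \(\mathfrak d=[\mathfrak t,\mathfrak t]\subseteq\bigoplus_\lambda\mathfrak{su}(W_\lambda)\). Since \(T\) is connected and \(T_\lambda\supseteq SU(W_\lambda)\), the projection \(\pi_\lambda(\mathfrak t)\) contains \(\mathfrak{su}(W_\lambda)\). Consequently \(\pi_\lambda(\mathfrak d)=\mathfrak{su}(W_\lambda)\) for every sector with \(m_\lambda\geq2\), because \(\mathfrak{su}(m)\) is perfect. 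Sectors of dimension one contribute nothing and can be dropped. Thus \(\mathfrak d\) is a semisimple subalgebra of \(\bigoplus_\lambda\mathfrak{su}(W_\lambda)\) that projects onto every simple factor. By the standard Goursat-type lemma for semisimple Lie algebras, such a subdirect product is a direct sum of "diagonal" pieces. The sectors are partitioned into classes, and within a class the summand is the graph of Lie algebra isomorphisms \(\phi_{\lambda\mu}:\mathfrak{su}(W_\lambda)\to\mathfrak{su}(W_\mu)\). I would prove this by induction on the number of factors: the kernel of each projection is an ideal, and ideals of a direct sum of simple algebras are sums of factors.

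Two sectors can lie in the same class only if \(\mathfrak{su}(m_\lambda)\cong\mathfrak{su}(m_\mu)\), which forces \(m_\lambda=m_\mu\). So it remains to rule out linked pairs of equal dimension \(m\geq2\). Suppose \(\lambda\neq\mu\) are linked by an isomorphism \(\phi\). Every automorphism of \(\mathfrak{su}(m)\) is either inner, \(X\mapsto gXg^{-1}\), or inner composed with complex conjugation, \(X\mapsto\overline{gXg^{-1}}\). Integrating, the identity component of the group generated by \(\exp\mathfrak d\) consists of elements whose \((\lambda,\mu)\) blocks are \((Y,\Phi(Y))\), up to scalar phases, with \(\Phi\) a group automorphism of \(SU(m)\).

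The hypothesis supplies \(V\in T\) with \(V_\lambda=I\) and \(V_\mu\) nonscalar (or the reverse). The main obstacle is to transport the link from \(\mathfrak d\) to all of \(T\), since \(V\) need not lie in \(\exp\mathfrak d\). For this I would use conjugation. \(T\) normalises \(\mathfrak d\), and \(\Ad_V\) fixes \(\pi_\lambda(\mathfrak d)\) pointwise because \(V_\lambda=I\). For \(X\in\mathfrak d\) with blocks \((X_\lambda,\phi(X_\lambda))\), the element \(\Ad_V X\) again lies in \(\mathfrak d\) and has \(\lambda\)-block \(X_\lambda\). Since the \(\mu\)-block of an element of \(\mathfrak d\) is determined by its \(\lambda\)-block, this gives \(V_\mu\phi(X_\lambda)V_\mu^\dagger=\phi(X_\lambda)\) for all \(X_\lambda\). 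Hence \(V_\mu\) commutes with all of \(\mathfrak{su}(W_\mu)\) and is scalar by Schur's lemma, a contradiction. The same argument applies with the roles of \(\lambda\) and \(\mu\) exchanged.

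Therefore every class is a singleton, and \(\mathfrak d=\bigoplus_\lambda\mathfrak{su}(W_\lambda)\). Since \(\mathfrak d\subseteq\mathfrak t\) and \(T\) is closed and connected, exponentiating blockwise traceless anti-Hermitian logarithms inside \(T\) yields every element of \(\prod_\lambda SU(W_\lambda)\), embedded as in Eq.~\eqref{eq:propagation-semiuniversality-criterion-output}. This is exactly the argument of Proposition~\ref{prop:global-lie-rank-criterion}.
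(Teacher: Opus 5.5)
Your proof is correct and follows essentially the same route as the paper. You reduce to $\mathfrak d=[\mathfrak t,\mathfrak t]$ and show that it projects onto every $\mathfrak{su}(W_\lambda)$. You then decompose it into graphs of isomorphisms between equal-dimensional blocks: the paper uses the simple-ideal decomposition of the compact semisimple $\mathfrak d$ where you use a Goursat-type induction. Finally you rule out each link by conjugating with the separating gate $V$, which normalises $\mathfrak d$, and applying Schur's lemma. Your paragraph on automorphisms of $\mathfrak{su}(m)$ and integrating to the group level is not used and can be dropped.
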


\begin{proof}
Let \(\mathfrak t=\Lie(T)\) and
\(\mathfrak d=[\mathfrak t,\mathfrak t]\). The projection of
\(\mathfrak d\) onto each nontrivial logical block is
\(\mathfrak{su}(W_\lambda)\). The compact semisimple algebra
\(\mathfrak d\) is a direct sum of commuting simple ideals.
These are the factors that can be varied independently.
Because the total projection onto a logical block is surjective, the image of each ideal is an ideal in \(\mathfrak{su}(W_\lambda)\).
That image is either zero or the whole block algebra. Two commuting
ideals cannot both map onto the same noncommutative block, so exactly
one does. Its kernel is zero by simplicity, making the projection
an isomorphism. A failure of
independence can thus occur only if one ideal acts on two sectors,
whose logical spaces must have the same dimension.

For such a pair, their joint projection would have the form
\(\{(X,\phi(X)):X\in\mathfrak{su}(n)\}\), where \(\phi\) is a
Lie-algebra isomorphism. By hypothesis, an available gate has blocks
\((A,I)\), with \(A\) nonscalar. Conjugation by this gate preserves
\(\mathfrak d\), so
\begin{equation*}
 \phi(AXA^\dagger)=\phi(X)
 \qquad\text{for every }X\in\mathfrak{su}(n).
\end{equation*}
Injectivity of \(\phi\) makes \(A\) commute with all of
\(\mathfrak{su}(n)\), so \(A\) is scalar, a contradiction.
It follows that
\(\mathfrak d=\bigoplus_{\lambda:\dim W_\lambda\geq2}
\mathfrak{su}(W_\lambda)\). Exponentiating inside \(T\) proves
the conclusion. One-dimensional spaces have \(SU(1)=\{I\}\).
\end{proof}

This criterion is also given in Ref.~\cite[Lemma~1]{HulseLiuMarvian2024}.

\subsection{Applying the small-dimension classification}
\label{subsec:thin-sector-controllability}

We now apply the preceding lemmas when \(r\geq3,s\geq1\).
Proposition~\ref{prop:small-contributions} leaves only the lists of
predecessor dimensions
\((2)\), \((2,1)\) and \((1,1)\) when an enlarged logical space
has dimension greater than one and every contribution has dimension at most two. The first and last lists give a two-dimensional space. The middle list requires extending the gates on a two-dimensional contribution to an extra line.

\begin{lemma}[Control from contributing spaces of dimension at most two]
\label{lem:thin-sector-controllability}
Let \(r\geq3,s\geq1\) and assume
\(G_{r,s}\supseteq\mathcal U_{\mathrm{log}}(r,s)\).
For either addition, let \(\lambda\) be an enlarged sector with
\(m'_\lambda>1\) and \(m_\gamma\leq2\) for every
\(\gamma\in\Gamma(\lambda)\). Then
\begin{equation}
 [G']_\lambda\supseteq SU(W'_\lambda).
 \label{eq:small-contribution-controllability}
\end{equation}
\end{lemma}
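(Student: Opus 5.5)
We split according to the three patterns of predecessor dimensions left by Proposition~\ref{prop:small-contributions}: \((2)\), \((1,1)\) and \((2,1)\). In every case the input is the same. The block group \([G']_\lambda\) is compact and connected by Lemma~\ref{lem:pulse-group-closed}. The full set of enlarged interactions acts irreducibly on \(W'_\lambda\), since by Eq.~\eqref{eq:interaction-commutant} their associative algebra is \(\operatorname{End}(W'_\lambda)\). It follows that \(\Lie([G']_\lambda)\), which contains the generators \(-iH\), also acts irreducibly: an invariant subspace of all \(-iH\) would be invariant under their products.

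For the patterns \((2)\) and \((1,1)\), the enlarged space has dimension two. Lemma~\ref{lem:two-lines-control} applied to \(\Lie([G']_\lambda)\) gives \(\mathfrak{su}(W'_\lambda)\) directly. Exponentiating inside the connected block group then gives \(SU(W'_\lambda)\). Neither the hypothesis on \((r,s)\) nor the inherited gates is needed here, apart from ensuring that the space is two-dimensional.

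For the pattern \((2,1)\), write \(W'_\lambda\cong E\oplus L\), where \(E=W_\gamma\) has dimension two and \(L=W_{\gamma'}\) has dimension one. The hypothesis \(G_{r,s}\supseteq\mathcal U_{\mathrm{log}}(r,s)\) lets us choose \(Y_\gamma\in SU(E)\) arbitrarily and \(Y_{\gamma'}=1\), since \(SU(1)\) is trivial. By Eq.~\eqref{eq:branching-existing-action}, the extensions \(X\otimes I_F\) of these gates have block \(Y_\gamma\oplus 1\) on \(W'_\lambda\), so \([G']_\lambda\supseteq SU(E)\oplus I_L\). Lemma~\ref{lem:two-one-control} now applies with \(T=[G']_\lambda\), which is connected, closed and has an irreducible Lie algebra. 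It gives \(SU(E\oplus L)\subseteq [G']_\lambda\).

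The main point to check is the identification of blocks in the \((2,1)\) case. The inherited gate must act exactly as \(Y_\gamma\oplus 1\) in the unitary branching basis, with no stray phase on \(L\); this holds because each predecessor block lies in \(SU\). The group \([G']_\lambda\) must also be closed, which follows because it is the image of a compact group under a continuous homomorphism. The remaining steps only invoke lemmas already proved.
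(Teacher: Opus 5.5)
Your proposal is correct and takes essentially the same route as the paper. It splits into the patterns \((2)\), \((1,1)\) and \((2,1)\), takes irreducibility from Eq.~\eqref{eq:interaction-commutant}, applies Lemma~\ref{lem:two-lines-control} to the two-dimensional cases, and applies Lemma~\ref{lem:two-one-control} to the inherited supported \(SU(2)\) action in the \((2,1)\) case. You also make explicit the closedness and connectedness of \([G']_\lambda\) and the exact block form \(Y_\gamma\oplus 1\) of the inherited gates, which the paper leaves implicit.
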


\begin{proof}
The block Lie algebra acts irreducibly by
Eq.~\eqref{eq:interaction-commutant}. For \((2)\) or \((1,1)\),
the enlarged space has dimension two, so
Lemma~\ref{lem:two-lines-control} gives its full \(SU(2)\) action.
For \((2,1)\), independent original control supplies the supported
\(SU(2)\) action on the two-dimensional contribution.
Lemma~\ref{lem:two-one-control} then gives \(SU(3)\) on the
enlarged space.
\end{proof}

\subsection{Distinguishing enlarged sectors with an existing gate}
\label{sec:branching-separation}

To apply Lemma~\ref{lem:propagation-semiuniversality-criterion},
we find a predecessor of dimension at least two that
contributes to exactly one of two enlarged sectors whose logical spaces
have equal dimension. Independent control on the original system supplies
a nonscalar gate on that contribution and the identity on all other
original logical spaces. Equation~\eqref{eq:branching-existing-action}
then gives the required actions on the enlarged sectors.

\begin{lemma}[Branching separation]
\label{lem:branching-separation}
Let \(r\geq3,s\geq1\). For either addition, let
\(\lambda\neq\kappa\) be enlarged sectors
with \(m'_\lambda=m'_\kappa\geq2\). Then an original sector
\(\gamma\) of multiplicity \(m_\gamma\geq2\) contributes to
exactly one of them. After interchanging \(\lambda\) and \(\kappa\)
if necessary,
\begin{equation}
 \gamma\in\Gamma(\lambda)\setminus\Gamma(\kappa),
 \qquad m_\gamma\geq2.
 \label{eq:exclusive-contributing-sector}
\end{equation}
\end{lemma}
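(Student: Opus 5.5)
We argue by contradiction, using the coordinates \((a,b)\) of Eq.~\eqref{eq:sector-coordinates}. Suppose that every original sector \(\gamma\) with \(m_\gamma\geq2\) contributes either to both \(\lambda\) and \(\kappa\) or to neither. Then the predecessor sets \(\Gamma(\lambda)\) and \(\Gamma(\kappa)\), restricted to contributions of dimension at least two, coincide. Adding \(V\) sends the enlarged pair \((a,b)\) to the candidates \((a,b),(a-1,b),(a-1,b-1)\). Adding \(V^*\) sends it to \((a,b),(a,b-1),(a-1,b-1)\). In both cases the three candidates of a given target are distinct, and a given original pair is a candidate for at most three targets, lying in a fixed translate pattern. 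Hence two distinct targets share at most two candidate pairs. We proceed in three steps.

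\textbf{Step 1.} We split according to whether both targets are interior (\(a,b>0\)) or at least one lies on an axis. For an interior target, the choice in Eq.~\eqref{eq:support-contribution} gives a contribution with both coordinates positive. By Lemma~\ref{lem:low-multiplicities}, that contribution has dimension at least three.

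\textbf{Step 2.} Suppose both targets are interior. We first show that they cannot have \emph{all} their large contributions in common. Take the lexicographically extreme large contribution of \(\lambda\), meaning the candidate \((a,b)\) when it is admissible with multiplicity at least two, and otherwise the next candidate. We check that it cannot also be a candidate of \(\kappa\) unless \(\kappa\) equals \(\lambda\) shifted by \((1,0)\), \((1,1)\) or \((0,1)\). In those shifted configurations we instead use the extreme contribution of \(\kappa\) in the opposite direction. This is a finite comparison of two translated triangles of candidates. We also invoke the interior lower bound of Eq.~\eqref{eq:interior-dimension-at-least-three}, so that all interior candidates count as large. The bound of Eq.~\eqref{eq:interior-lower-bound-conclusion} may be needed to rule out the equal-dimension coincidences.

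\textbf{Step 3.} For targets on an axis, the multiplicities are \(B_n(k)\) or \(B_n(k)+B_n(k-1)\), by Eq.~\eqref{eq:axis-multiplicities} and the proof of Proposition~\ref{prop:small-contributions}. The predecessors on an axis are axis pairs \((k,0)\) or \((k-1,0)\), and possibly an interior pair. Two different axis targets with equal dimension at least two have different \(k\). The contribution \((k,0)\), or \((k-1,0)\) when \(B_n(k-1)\geq2\), then belongs to one target only, since \(B_n\) is strictly increasing at small \(k\) and the table of values of \(B_n\) identifies the equal cases. For a mixed pair, one interior target and one axis target, the interior target has an interior contribution of dimension at least three. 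An axis target's candidates are interior only through \((a-1,b-1)\)-type shifts that never occur when \(b=0\) or \(a=0\). So that interior contribution is exclusive to the interior target.

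The main obstacle is Step~2 when the two interior targets are adjacent and their candidate sets overlap in two pairs. We must then verify that the remaining, non-shared candidate is admissible and has multiplicity at least two. The difficulty occurs near the boundary of \(\mathscr D_{r,s}\), where that candidate may be inadmissible. We would try first to show that inadmissibility of the exclusive candidate forces the shared ones to give unequal total dimensions, using the recursion Eq.~\eqref{eq:coordinate-multiplicity-recursions}, which would contradict \(m'_\lambda=m'_\kappa\).
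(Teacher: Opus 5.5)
Your proposal has a genuine gap. The hypothesis \(m'_\lambda=m'_\kappa\) appears only as an unexecuted plan in your last paragraph, yet the statement is false without it. Take \(r=3\) and addition of \(V\). The enlarged sectors \((2,s+1)\) and \((0,s+2)\) (coordinates \((1,0)\) and \((2,0)\)) have predecessor sets \(\{(1,s+1),(3,s)\}\) and \(\{(1,s+1)\}\). Their only predecessor of dimension at least two is therefore shared, and only \(m'=3\neq2\) excludes this pair.

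The same example refutes your Step~3 claim that \((k,0)\) ``belongs to one target only''. The point \((k,0)\) is also a predecessor of \((k+1,0)\), and here it is the sole predecessor of \((2,0)\).

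Step~2 is never carried out, and the case you single out as the main obstacle cannot occur. The candidate offsets are \((0,0),(1,0),(1,1)\) for \(V\) and \((0,0),(0,1),(1,1)\) for \(V^*\). In each case their six nonzero differences are distinct, so two distinct targets share at most \emph{one} predecessor, not two. The real difficulty is the configuration in the example above, where all large predecessors are shared and the exclusive ones have dimension one.

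Two smaller points. Axis targets have only axis predecessors; you use this in your mixed case, which contradicts ``possibly an interior pair'' earlier. Pairs with one target on each axis are not treated. Your interior/axis argument itself is fine.

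The missing idea is a counting argument that uses the equal dimensions once, globally.

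First show \(\Gamma(\lambda)\neq\Gamma(\kappa)\). By the at-most-one-overlap fact, equal predecessor sets would have to be singletons. A singleton predecessor determines its target, except in one case where both targets are one-dimensional.

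Now suppose every exclusive predecessor has multiplicity one, with \(P,Q\) the coordinate predecessor sets. Cancelling the common part in \(m'_\lambda=m'_\kappa\) gives \(|P\setminus Q|=|Q\setminus P|\). By Lemma~\ref{lem:low-multiplicities}, the only original points of multiplicity one are \((0,0)\), together with \((0,1)\) when \(s=2\).

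For \(s\neq2\), the exclusive sets are disjoint subsets of \(\{(0,0)\}\) of equal size. They are therefore empty, contradicting \(\Gamma(\lambda)\neq\Gamma(\kappa)\).

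For \(s=2\), the two exclusive sets would have to swap \((0,0)\) and \((0,1)\), and a short check rules this out.
\begin{itemize}
\item Adding \(V\), the only target receiving \((0,1)\) but not \((0,0)\) is the one-dimensional singleton \((0,1)\).
\item Adding \(V^*\), the target receiving \((0,0)\) but not \((0,1)\) must be \((1,1)\). It also receives \((1,0)\), of dimension \(r-1\geq2\), and no target containing \((0,1)\) receives \((1,0)\).
\end{itemize}
This argument replaces your Steps~2 and~3 and avoids all boundary case analysis.
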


\begin{proof}
We first show that distinct enlarged sectors whose logical spaces have
dimension at least two have different sets of contributing sectors. We then use the
classification of one-dimensional spaces to find a contribution
on which a nontrivial gate is available.

\paragraph{Recovering a sector from its contributions.}
The three increments of a Dynkin label when adding \(V\) are
\((1,0)\), \((-1,1)\), and \((0,-1)\). Their six nonzero ordered
differences are distinct. If two enlarged sectors share a contributing
sector, the difference of their labels thus determines the ordered
pair of increments, and hence the shared contributing sector itself.
The increments for \(V^*\) are the negatives of these three vectors.
For either addition, it follows that
\begin{equation}
 |\Gamma(\lambda)\cap\Gamma(\kappa)|\leq1
 \qquad(\lambda\neq\kappa).
 \label{eq:branching-at-most-one-common}
\end{equation}
This observation uses only the tensor-product rule and holds for
all \(r,s\geq0\). Identical contributing sets for distinct enlarged
sectors could only be singletons.

To check singletons, use the coordinates of
Appendix~\ref{app:branching-multiplicity}. For an admissible enlarged
pair \((a,b)\), the contributing coordinate sets are
\begin{equation}
\begin{aligned}
 P^+(a,b)&=\{(a,b),(a-1,b),(a-1,b-1)\}\cap\mathscr D_{r,s},\\
 P^-(a,b)&=\{(a,b),(a,b-1),(a-1,b-1)\}\cap\mathscr D_{r,s}.
\end{aligned}
 \label{eq:contributing-coordinate-sets}
\end{equation}
Every point of \(\mathscr D_{r,s}\) occurs in the original system
by Lemma~\ref{lem:sector-support}.

For addition of \(V\), write the enlarged label as
\(p=r+1-2a+b\), \(q=s+a-2b\). The three candidates in
\(P^+(a,b)\) occur under exactly the following conditions:
\begin{equation}
\begin{array}{c|c}
 \text{original coordinates}&\text{condition}\\ \hline
 (a,b)&p\geq1\\
 (a-1,b)&a\geq1,\ q\geq1\\
 (a-1,b-1)&a,b\geq1
\end{array}
 \label{eq:branching-predecessor-presence}
\end{equation}
These are precisely the conditions that the original coordinates
and Dynkin labels be nonnegative. For \(r,s\geq1\),
exactly one candidate is present in the following cases:
\begin{equation}
\begin{array}{c|c}
 \text{singleton condition}&\text{sole original label}\\ \hline
 a=0 &(r+b,s-2b)\\
 b=p=0 &(1,q-1)\\
 p=q=0 &(0,1)
\end{array}
 \label{eq:branching-singleton-contributions}
\end{equation}
The first coordinates of the original labels are at least \(r\),
one and zero, respectively. When \(r\geq2\), these ranges are
disjoint, so the original label determines the row and hence the
enlarged sector.

When \(r=1\), the first two rows can coincide only at \(b=0\),
with original label \((1,s)\). This is the highest sector and
has multiplicity one. Its two singleton targets, \((2,s)\) and
\((0,s+1)\), also have one-dimensional logical spaces. Thus no distinct
enlarged sectors whose logical spaces have dimension at least two have
the same singleton
contribution. Together with
Eq.~\eqref{eq:branching-at-most-one-common}, this proves that
their contributing sets are distinct for every \(r,s\geq1\).
Exchanging the two qutrit types proves the same result for addition
of \(V^*\). At \(s=1\), the possible repeated singleton set
comes from the original highest sector \((r,1)\) and gives only
the two targets \((r,2)\) and \((r+1,0)\), both with
one-dimensional logical spaces.

\paragraph{Finding a contribution of dimension at least two.}
Let \(P\) and \(Q\) be the contributing coordinate sets of
\(\lambda\) and \(\kappa\). Suppose that every point appearing in
only one of the two sets has multiplicity one. Equality of the
enlarged logical dimensions then gives
\begin{equation*}
 0=\sum_{\delta\in P\setminus Q}M_{r,s}(\delta)
   -\sum_{\delta\in Q\setminus P}M_{r,s}(\delta)
  =|P\setminus Q|-|Q\setminus P|.
\end{equation*}
When \(s\neq2\), Lemma~\ref{lem:low-multiplicities} shows that
the only original point of multiplicity one is \((0,0)\).
The two disjoint exclusive sets must then both be empty,
contrary to the distinctness just proved.

When \(s=2\), the original points of multiplicity one are
\((0,0)\) and \((0,1)\). Distinct sets with equal numbers of
exclusive points must exchange these two points. Adding \(V\),
the only enlarged coordinate pair whose contributions include
\((0,1)\) but exclude \((0,0)\) is \((0,1)\) itself.
Its contributing set is the singleton \(\{(0,1)\}\), so its
multiplicity is one, contrary to the hypothesis.

Adding \(V^*\), an enlarged logical space of dimension at least two that
receives \((0,0)\) but not \((0,1)\) must have coordinates
\((1,1)\): the other targets of \((0,0)\) either have one-dimensional
logical spaces or also receive \((0,1)\). Its contributions include
\((1,0)\), whose logical dimension is \(M_{r,2}(1,0)=r-1\geq2\).
No target receiving \((0,1)\) can also receive \((1,0)\), by
the second branching rule in
Eq.~\eqref{eq:contributing-coordinate-sets}. This is already the
required exclusive contribution, contradicting the assumption that all
exclusive contributions have dimension one. Hence in every case a
point in only one contributing set has multiplicity at least two.
Its Dynkin label is the required \(\gamma\).
\end{proof}
\section{Independent logical control: proof and pulse construction}
\label{app:propagation-base}

The small systems illustrate the two steps used throughout the proof:
extend logical control by coupling the original contributions, then
show that gates on different sectors can be chosen independently.
We prove these starting cases analytically and use Appendix~\ref{app:sectorwise-separation} to add
further qutrits. Exact dimension calculations and a 24-pulse example
follow the completed proof as independent checks and an illustration.

\subsection{Small systems that start the induction}
\label{subsec:initial-systems}

We first treat the small systems
\((1,1)\), \((2,1)\) and \((2,2)\) separately. The first has only
one-dimensional logical subsystems, so its special-unitary target is the
identity and independent control is immediate. At \((2,1)\), the only
nontrivial logical subsystem has dimension two, where irreducibility is
sufficient to ensure independent control. At \((2,2)\), overlapping
singlet projectors also give full control on a logical subsystem of
dimension four. We then treat the starting case of the induction,
\((3,1)\). All mixed pairs used below are available by conjugating the
boundary interaction with exchanges, as in
Eq.~\eqref{eq:adjacent-and-all-pairs-generate-same-group}.

\begin{proposition}[Control on \(\mathcal H_{2,1}\)]
\label{prop:analytic-21-semiuniversality}
Independent logical control holds at \((2,1)\).
\end{proposition}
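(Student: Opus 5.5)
The plan is to write down the sector decomposition of \(\mathcal H_{2,1}\) explicitly and observe that only one logical space is nontrivial. Then irreducibility together with Lemma~\ref{lem:two-lines-control} gives full \(SU(2)\) control on it. Finally, Proposition~\ref{prop:global-lie-rank-criterion} upgrades this to independent control, because the derived algebra vanishes automatically on the one-dimensional sectors.

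First, apply the Pieri rules in Eq.~\eqref{eq:qutrit-branching-rules}. Two fundamental qutrits give \(V\otimes V\cong U_{2,0}\oplus U_{0,1}\). Adding \(V^*\) gives
\begin{equation*}
U_{2,0}\otimes V^*\cong U_{2,1}\oplus U_{1,0},\qquad U_{0,1}\otimes V^*\cong U_{0,2}\oplus U_{1,0}.
\end{equation*}
Hence \(\Lambda_{2,1}=\{(2,1),(0,2),(1,0)\}\), with logical dimensions \(1,1,2\). By Eq.~\eqref{eq:branching-logical-spaces}, \(W_{2,1}^{1,0}\cong W_{2,0}^{2,0}\oplus W_{2,0}^{0,1}\cong\mathbb C\oplus\mathbb C\). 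The dimension check is \(15+6+3\cdot2=27\). The target group \(\mathcal U_{\log}(2,1)\) is therefore just \(SU(W_{2,1}^{1,0})\), acting as the identity on the two one-dimensional sectors and on every gauge factor.

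Second, establish control on \(W:=W_{2,1}^{1,0}\). By Eq.~\eqref{eq:interaction-commutant}, the associative algebra generated by \(S_{12}\), \(K_{1\bar1}\) and \(K_{2\bar1}\) has block \(\operatorname{End}(W)\) on this sector. Hence the block Lie algebra \([\mathfrak L]_{(1,0)}\) acts irreducibly on \(W\), and Lemma~\ref{lem:two-lines-control} gives \(\mathfrak{su}(W)\subseteq[\mathfrak L]_{(1,0)}\). The roles of the generators are concrete here. \(S_{12}\) is diagonal with eigenvalues \(+1,-1\) on the symmetric and antisymmetric contributions. \(K_{2\bar1}\) has a nonzero off-diagonal element between them, since \(S_{12}\) and \(K_{2\bar1}\) do not commute on this block. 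This is exactly the pair \(A,B\) used in Lemma~\ref{lem:two-lines-control}.

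Third, and this is the only point that needs care, we must exclude unwanted relative behaviour on the other sectors. An element of \(\mathfrak L\) whose \((1,0)\)-block is a given traceless matrix might still carry nonzero scalars on the sectors \((2,1)\) and \((0,2)\). To remove this, pass to \(\mathfrak D=[\mathfrak L,\mathfrak L]\). Its blocks on the one-dimensional sectors vanish, because commutators of scalars are zero. On \(W\) its projection is \([[\mathfrak L]_{(1,0)},[\mathfrak L]_{(1,0)}]\), which contains \([\mathfrak{su}(2),\mathfrak{su}(2)]=\mathfrak{su}(2)\). Combined with Eq.~\eqref{eq:derived-contained-logical-algebra}, this gives
\begin{equation*}
\mathfrak D=\mathfrak{su}(W)\oplus0\oplus0=\bigoplus_{\lambda\in\Lambda_{2,1}}\mathfrak{su}(W_\lambda).
\end{equation*}
Proposition~\ref{prop:global-lie-rank-criterion}, together with Lemma~\ref{lem:pulse-group-closed}, then yields exact finite pulse sequences for every target, which proves the claim.
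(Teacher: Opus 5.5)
Your proof is correct and follows essentially the same route as the paper: decompose $\mathcal H_{2,1}$, observe that only the $(1,0)$ sector has a nontrivial (two-dimensional) logical space, obtain $\mathfrak{su}(2)$ there from irreducibility via Lemma~\ref{lem:two-lines-control}, and note that the derived algebra projects onto it while vanishing on the one-dimensional blocks, so Proposition~\ref{prop:global-lie-rank-criterion} applies. Your additions (the explicit Pieri derivation, the projection argument identifying $\mathfrak D$, and the concrete $S_{12}$, $K_{2\bar1}$ pair, whose noncommutation indeed follows from irreducibility) are consistent with and slightly expand the paper's short proof.
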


\begin{proof}
The decomposition is
\begin{equation}
 \mathcal H_{2,1}\cong U_{2,1}\otimes \mathbb{C}\oplus U_{0,2}\otimes\mathbb{C}\oplus U_{1,0}\otimes \mathbb{C}^2.
 \label{eq:analytic-21-decomposition}
\end{equation}
Only the last sector $(p=1,q=0)$ has a nontrivial logical space, of dimension two. Equation~\eqref{eq:interaction-commutant} gives irreducibility,
and Lemma~\ref{lem:two-lines-control} gives its \(\mathfrak{su}(2)\)
algebra. The global derived algebra projects onto this
\(\mathfrak{su}(2)\) algebra and vanishes on the two one-dimensional
logical blocks. It is thus the supported \(\mathfrak{su}(2)\) action
required for independent control.
Proposition~\ref{prop:global-lie-rank-criterion}
gives the finite-pulse implementation.
\end{proof}

\subsubsection{Control from overlapping singlet projectors}
\label{subsec:analytic-22-semiuniversality}

At \((2,2)\), the two nontrivial logical spaces have dimensions
four and two. A pulse generated by a rank-one projector changes the
phase of one direction. Pulses generated by projectors onto overlapping
directions generate rotations between them. A connected family spanning a logical space gives
its full unitary algebra. The different block dimensions will
then ensure that their special-unitary actions cannot be linked.

\begin{samepage}
\begin{lemma}[Projectors onto spanning vectors with connected overlaps]
\label{lem:connected-projector-control}
Let nonzero vectors \(v_1,\ldots,v_N\) span a complex Hilbert
space \(E\). Join \(j\) and \(k\) when
\(\langle v_j,v_k\rangle\neq0\). If this graph is connected,
the real Lie algebra generated by \(iP_{v_1},\ldots,iP_{v_N}\)
is \(\mathfrak u(E)\), where \(P_v\) is the orthogonal projector
onto \(\mathbb Cv\).
\end{lemma}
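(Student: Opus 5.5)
The plan is an induction on a growing set of vectors, using the overlap graph to add one vector at a time. The projectors \(iP_{v_j}\) are anti-Hermitian, so the generated real Lie algebra \(\mathfrak g\) lies in \(\mathfrak u(E)\). The work is to show the reverse inclusion. Because the graph is connected, we can reorder the vectors so that each \(v_k\), with \(k\geq2\), is adjacent to some earlier \(v_j\), \(j<k\). Write \(E_k=\operatorname{span}\{v_1,\ldots,v_k\}\). We will prove by induction on \(k\) that \(\mathfrak g\) contains \(\mathfrak u(E_k)\), acting as zero on \(E_k^\perp\). The base case holds because \(\mathfrak u(E_1)=\mathbb R\,iP_{v_1}\). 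At the last step \(E_N=E\), and the claim follows. Vectors that already lie in \(E_{k-1}\) add nothing new, so the nontrivial case is \(v_k\notin E_{k-1}\).

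For the inductive step, suppose \(\mathfrak u(M)\oplus 0\subseteq\mathfrak g\) with \(M=E_{k-1}\), and take \(v=v_k\notin M\). Decompose \(v=x+y\) with \(x\in M\) and \(y\in M^\perp\), where \(y\neq0\). Write \(f=y/\|y\|\). The adjacency \(\langle v_j,v\rangle\neq0\) with \(v_j\in M\) gives \(x\neq0\). The key computation expands \(P_v\) in the decomposition \(M\oplus\mathbb Cf\):
\[
\|v\|^2P_v=xx^\dagger+xy^\dagger+yx^\dagger+yy^\dagger .
\]
Here \(i\,xx^\dagger\in\mathfrak u(M)\) is already available. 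Subtracting it leaves \(i(xy^\dagger+yx^\dagger+yy^\dagger)\in\mathfrak g\). This element has a nonzero off-diagonal block coupling \(M\) to \(f\), together with a diagonal term proportional to \(iP_f\).

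To remove the diagonal term, take the commutator with \(iP_{\hat x}\in\mathfrak u(M)\), where \(\hat x=x/\|x\|\). The commutator kills the \(yy^\dagger\) part, since \(P_{\hat x}\) annihilates \(y\). What remains is a pure coupling between \(\hat x\) and \(f\), of the form \(\pm(\hat x f^\dagger-f\hat x^\dagger)\) up to a real factor. A second commutator with \(iP_{\hat x}\) gives the other phase, \(i(\hat xf^\dagger+f\hat x^\dagger)\).

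Next, conjugate these couplings by \(U(M)\), which lies in the connected group of \(\mathfrak g\). Equivalently, take commutators with \(\mathfrak u(M)\). This yields both real coupling directions between \(f\) and every vector of \(M\). Finally, the bracket of the two couplings of one pair gives \(i(P_{\hat x}-P_f)\). Together with \(iP_{\hat x}\), this produces \(iP_f\).

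Counting directions, we now have the \(m^2\) directions of \(\mathfrak u(M)\), the \(2m\) coupling directions and \(iP_f\), where \(m=\dim M\). These span \(\mathfrak u(M\oplus\mathbb Cf)=\mathfrak u(E_k)\), which completes the step. The argument resembles Lemma~\ref{lem:two-one-control}. The difference is that here \(\mathfrak u(M)\) includes the trace direction, so no dimension assumption on \(M\) is needed; the base case \(m=1\) works directly.

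The main obstacle I expect is the bookkeeping in the commutator step. One must check that the commutator with \(iP_{\hat x}\) really isolates a nonzero pure coupling, given that \(x\) and \(y\) are not normalised. One must also confirm that conjugation by \(U(M)\) reaches all \(2m\) real coupling directions and not only half of them. I would handle both points by working explicitly in an orthonormal basis with \(\hat x\) as the first vector of \(M\), where the computations reduce to \(2\times2\) Pauli algebra on \(\operatorname{span}\{\hat x,f\}\).
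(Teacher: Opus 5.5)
Your proposal is correct and follows essentially the same induction as the paper. Both order the vectors along the connected overlap graph, split the new vector as $v=x+y$ with $x\in M$ and $y\in M^\perp$, use nested commutators to isolate the two real couplings between $x$ and $y$, spread these couplings over $M$ with $\mathfrak u(M)$, and then recover the new diagonal direction. The only differences are cosmetic. The paper brackets with $iP_W$ rather than $iP_{\hat x}$, and it obtains $i\,yy^\dagger$ by subtracting the coupling term and $i\,xx^\dagger$ from $i\,vv^\dagger$ directly. You instead get $iP_f$ from the bracket of the two couplings, and both routes work.
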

\end{samepage}

\begin{proof}
Order the vectors so that each after the first overlaps an earlier
one. Initially \(iP_{v_1}\) gives \(\mathfrak u(\mathbb Cv_1)\),
with zero action on its complement. Suppose the supported algebra
\(\mathfrak u(W)\) is available on the span already reached, and
write the next vector as \(v=a+b\), with \(a\in W\),
\(b\in W^\perp\). The overlap gives \(a\neq0\). If \(b=0\),
the span is unchanged. Otherwise \(A=iP_W\) and
\(B=ivv^\dagger\) are available Lie-algebra elements, and
\begin{equation*}
 [A,B]=ba^\dagger-ab^\dagger,
 \qquad [A,[A,B]]=-i(ab^\dagger+ba^\dagger).
\end{equation*}
Subtracting the cross term and the already available
\(iaa^\dagger\) from \(B\) gives \(ibb^\dagger\).
Commutators with \(\mathfrak u(W)\) supply both coupling
directions between \(b\) and each basis vector of \(W\).
Together these give the supported
\(\mathfrak u(W\oplus\mathbb Cb)\). Induction reaches \(E\).
\end{proof}

\begin{proposition}[Control on \(\mathcal H_{2,2}\)]
\label{prop:analytic-22-semiuniversality}
The four mixed interactions on \(\mathcal H_{2,2}\) alone
generate the derived logical algebra
\(\mathfrak{su}(4)\oplus\mathfrak{su}(2)\). In particular,
independent logical control holds at \((2,2)\).
\end{proposition}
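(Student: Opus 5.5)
The plan is to compute the logical blocks of the four mixed interactions \(K_{1\bar1},K_{1\bar2},K_{2\bar1},K_{2\bar2}\) and show that each is a rank-one projector (times three) on every nontrivial logical space. Lemma~\ref{lem:connected-projector-control} then gives the full unitary algebra on each block, and the dimension mismatch rules out linking between the blocks. First I would fix the decomposition of \(\mathcal H_{2,2}\) using the coordinates of Appendix~\ref{app:branching-multiplicity}. The pairs \((0,0),(1,0),(0,1),(1,1),(2,2)\) give sectors \((2,2),(0,3),(3,0),(1,1),(0,0)\) with multiplicities \(1,1,1,4,2\). The check is \(27+10+10+8\cdot4+1\cdot2=81\). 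So only \(W^{1,1}\) (dimension four) and \(W^{0,0}\) (dimension two) matter.

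Next I would identify the blocks of \(P_{\Omega_{i\bar j}}\). Its range is \(\ket{\Omega_{i\bar j}}\otimes(V\otimes V^*)\) on the complementary mixed pair, and \(V\otimes V^*\cong\boldsymbol1\oplus\boldsymbol8\). Both summands occur once, so the range meets the sector \((0,0)\) in one copy and the sector \((1,1)\) in one copy. Hence \(P_{\Omega_{i\bar j}}\) has rank one on \(W^{0,0}\) and on \(W^{1,1}\), and it vanishes on the other logical blocks.

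On \(W^{0,0}\), the vectors are \(\Omega_{1\bar1}\Omega_{2\bar2}\) and \(\Omega_{1\bar2}\Omega_{2\bar1}\). Their overlap is \(1/3\neq0\) and they span the two-dimensional space, so Lemma~\ref{lem:connected-projector-control} gives \(\mathfrak u(2)\).

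On \(W^{1,1}\), call the four unit vectors \(v_{i\bar j}\). I would obtain the moduli of their overlaps from the walled-Brauer identity \(P_{1\bar1}P_{1\bar2}P_{1\bar1}=\tfrac19P_{1\bar1}\), and similarly for every pair sharing a site. Restricted to a block where both are rank one, this says \(|\langle v_a,v_b\rangle|^2=1/9\).

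For the opposite pairs \(1\bar1\) and \(2\bar2\), the vector \(v_{1\bar1}\) has its second pair in the octet, while \(v_{2\bar2}\) has that pair in the singlet. Their overlap is therefore zero. The overlap graph is the four-cycle \(1\bar1\)–\(1\bar2\)–\(2\bar2\)–\(2\bar1\), which is connected.

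Spanning is where I expect the main obstacle, since the overlap phases are not immediately known. The Gram matrix is \(I+C\), where \(C\) is a Hermitian weighted four-cycle with all weights of modulus \(1/3\). Whatever the phases are, \(\norm{C}_{\rm op}\leq 2/3\). Hence \(I+C\) is positive definite, the four vectors are linearly independent, and they span \(W^{1,1}\). Lemma~\ref{lem:connected-projector-control} then gives \(\mathfrak u(4)\) on this block. If there is any doubt about the opposite overlaps vanishing, the same norm bound still applies with at most \(1/3\) entries off the cycle, provided the total row sum stays below one. Otherwise I would compute that overlap directly.

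Finally I would assemble the conclusion. The Lie algebra of the mixed interactions projects onto \(\mathfrak u(4)\) and \(\mathfrak u(2)\), so its derived algebra projects onto \(\mathfrak{su}(4)\) and \(\mathfrak{su}(2)\) and is zero on the one-dimensional blocks. These two simple algebras are not isomorphic. As in the proof of Lemma~\ref{lem:propagation-semiuniversality-criterion}, no simple ideal can project nontrivially onto both, so the derived algebra is \(\mathfrak{su}(4)\oplus\mathfrak{su}(2)\). Adding the exchanges only enlarges the algebra, and it stays inside the bound in Eq.~\eqref{eq:derived-contained-logical-algebra}. Proposition~\ref{prop:global-lie-rank-criterion} then gives independent logical control with finite pulse sequences.
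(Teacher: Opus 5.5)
Your proposal is correct and follows the paper's proof: the four singlet interactions have rank-one logical blocks $3P_{v_a}$ on the adjoint and singlet sectors, Lemma~\ref{lem:connected-projector-control} gives $\mathfrak u(4)$ and $\mathfrak u(2)$, and the simple-ideal argument rules out linking the nonisomorphic factors. The only difference is how you get the adjoint-sector overlaps: you use the walled Brauer relation $P_{1\bar1}P_{1\bar2}P_{1\bar1}=\tfrac19P_{1\bar1}$ and a row-sum bound on the Gram matrix, while the paper uses explicit tensor embeddings with Gram matrix $Q$ (eigenvalues $5,3,3,1$); your route avoids fixing the overlap phases and gives the same independence and connectivity.
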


\begin{proof}
The isotypic decomposition of \(\mathcal H_{2,2}\) is
\begin{equation}
\begin{aligned}
 \mathcal H_{2,2}\cong{}&U_{2,2} \otimes \mathbb{C}\oplus U_{3,0}\otimes\mathbb{C}\oplus U_{0,3}\otimes \mathbb{C}\\ &\oplus U_{1,1}\otimes \mathbb{C}^4 \oplus U_{0,0}\otimes \mathbb{C}^2.
\end{aligned}
 \label{eq:analytic-22-decomposition}
\end{equation}
Only the adjoint $(p=1,q=1)$ and singlet $(p=0,q=0)$ sectors have nontrivial logical spaces, of dimensions four and two, respectively.

For the adjoint sector, choose one mixed pair to form a singlet,
leaving the other pair to carry the adjoint representation. The
four choices define four directions in its logical space. To
compute their overlaps, realise \(U_{1,1}\) by traceless
\(3\times3\) matrices \(A\), and write the four embeddings as
\begin{equation}
\begin{aligned}
 \mathcal I_1(A)_{ij}^{k\ell}&=\delta_i^k A_j^\ell,&
 \mathcal I_2(A)_{ij}^{k\ell}&=\delta_i^\ell A_j^k,\\
 \mathcal I_3(A)_{ij}^{k\ell}&=\delta_j^k A_i^\ell,&
 \mathcal I_4(A)_{ij}^{k\ell}&=\delta_j^\ell A_i^k.
\end{aligned}
 \label{eq:analytic-22-adjoint-embeddings}
\end{equation}
The upper indices label the conjugate factors. Their inner products
are \(\langle\mathcal I_a(A),\mathcal I_b(B)\rangle
=Q_{ab}\Tr(A^\dagger B)\), with
\begin{equation}
 Q=\begin{pmatrix}
 3&1&1&0\\1&3&0&1\\1&0&3&1\\0&1&1&3
 \end{pmatrix}.
 \label{eq:analytic-22-adjoint-gram}
\end{equation}
A diagonal entry counts the three values of the contracted index.
The nonzero off-diagonal overlaps reduce to \(\Tr(A^\dagger B)\).
The other two pairs of directions have overlap
\(\overline{\Tr A}\Tr B=0\). The eigenvalues \(5,3,3,1\)
show that the four logical directions \(v_a\) are independent.

Each mixed interaction \(K_{i\bar j}\) is three times the orthogonal
projector onto the singlet of the selected pair. Within the adjoint sector, its
range is precisely the copy defined by the corresponding embedding
in Eq.~\eqref{eq:analytic-22-adjoint-embeddings}. Its logical block
is thus \(3P_{v_a}\), with respect to the logical inner product.
The nonzero overlaps in \(Q\) form a connected graph, so
Lemma~\ref{lem:connected-projector-control} gives the projected
algebra \(\mathfrak u(4)\).

For the singlet sector, the two invariant tensors
\begin{equation*}
 (\Xi_1)_{ij}^{k\ell}=\delta_i^k\delta_j^\ell,
 \qquad (\Xi_2)_{ij}^{k\ell}=\delta_i^\ell\delta_j^k
\end{equation*}
have Gram matrix
\(\left(\begin{smallmatrix}9&3\\3&9\end{smallmatrix}\right)\).
They are independent and overlap. The logical restrictions of
\(K_{1\bar1}\) and \(K_{1\bar2}\) are respectively
\(3P_{\Xi_1}\) and \(3P_{\Xi_2}\), so the same lemma gives
\(\mathfrak u(2)\).

The global derived algebra projects onto both
\(\mathfrak{su}(4)\) and \(\mathfrak{su}(2)\). By the simple-ideal
argument in Lemma~\ref{lem:propagation-semiuniversality-criterion},
these nonisomorphic factors cannot be linked. It is their direct
sum. Adjacent exchanges and the mixed interaction
at the boundary generate the four mixed interactions by
conjugation, so they give the same required derived algebra.
The overlap graph used here describes logical vectors, rather
than pairs of physical qutrits.
\end{proof}

\subsubsection{Separating the two logical qutrit sectors}

\begin{proposition}[Control on \(\mathcal H_{3,1}\)]
\label{prop:analytic-31-semiuniversality}
Independent logical control holds at \((3,1)\), with derived logical
algebra \(\mathfrak{su}(3)\oplus\mathfrak{su}(2)
\oplus\mathfrak{su}(3)\).
\end{proposition}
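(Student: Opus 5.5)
The plan is to treat \((3,1)\) as the system \((3,0)\) enlarged by one dual qutrit. I would run the two steps of Appendix~\ref{app:sectorwise-separation} by hand, since the induction hypothesis for Lemma~\ref{lem:thin-sector-controllability} is not available.

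\textbf{Branching data.} On \(\mathcal H_{3,0}\) the sectors are \((3,0)\), \((1,1)\) and \((0,0)\), with multiplicities \(1,2,1\). These are the trivial, standard and sign representations of the permutations of three fundamental qutrits. The second rule in Eq.~\eqref{eq:qutrit-branching-rules} then gives
\[
W_{3,1}^{2,0}\cong W_{3,0}^{3,0}\oplus W_{3,0}^{1,1},\qquad
W_{3,1}^{0,1}\cong W_{3,0}^{1,1}\oplus W_{3,0}^{0,0},\qquad
W_{3,1}^{1,2}\cong W_{3,0}^{1,1}.
\]
The pattern is \((1+2)\), \((2+1)\) and \((2)\), and \(W_{3,1}^{3,1}\) is one-dimensional.

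\textbf{Control within each sector.} The exchanges \(S_{12},S_{23}\) act on \(W_{3,0}^{1,1}\) through the irreducible two-dimensional representation of the permutation group. By Lemma~\ref{lem:two-lines-control}, their Lie algebra therefore contains \(\mathfrak{su}(2)\) on that block. Commutators are traceless on the one-dimensional trivial and sign blocks, so the derived algebra of the exchange group is this \(\mathfrak{su}(2)\) with zero elsewhere. Exponentiating supplies \(SU(E)\oplus I_L\) inside the \((2,0)\) block and inside the \((0,1)\) block, where \(E\) is the \((1,1)\) contribution. The enlarged block algebra acts irreducibly by Eq.~\eqref{eq:interaction-commutant}, so Lemma~\ref{lem:two-one-control} gives \(SU(3)\) on both \(W_{3,1}^{2,0}\) and \(W_{3,1}^{0,1}\). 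For \(W_{3,1}^{1,2}\), irreducibility together with Lemma~\ref{lem:two-lines-control} gives \(SU(2)\).

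\textbf{Independence across sectors.} I would use the simple-ideal decomposition from the proof of Lemma~\ref{lem:propagation-semiuniversality-criterion}. The \(\mathfrak{su}(2)\) factor cannot be linked to either \(\mathfrak{su}(3)\) factor, because they are nonisomorphic. The only possible obstruction is a single ideal projecting onto both qutrit blocks through an isomorphism \(\phi\). Every automorphism of \(\mathfrak{su}(3)\) is either \(X\mapsto WXW^\dagger\) or \(X\mapsto W\overline XW^\dagger\). Hence, in either case, the traceless parts of corresponding elements of \(\mathfrak L\) on the two blocks must have equal quadratic traces, and cubic traces that are equal or complex conjugate up to the sign fixed by the case. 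The exchange-inherited directions are identical on both blocks and cannot discriminate, so the test must come from words involving \(K_{3\bar1}\).

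Here is the main obstacle. Each \(K_{i\bar1}\) is \(3\) times a rank-one projector on each qutrit block, so its traceless parts have the same spectrum \(\mathrm{diag}(2,-1,-1)\) on both blocks. A single interaction therefore gives no contradiction. My first attempt is the Hermitian combination \(H=K_{3\bar1}+cS_{12}\), or the product element \(i[K_{3\bar1},S_{12}]\), both of which lie in \(\mathfrak L\) up to the factor \(-i\). I would compute their logical blocks in the branching bases and compare \(\Tr\) of the squared and cubed traceless parts on \(W^{2,0}\) and \(W^{0,1}\).

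These traces reduce to overlaps between the singlet vector and the symmetric versus antisymmetric combinations of qutrits \(1,2\). I expect those overlaps to differ, because Sym\(^2V\) and \(\wedge^2V\) sit in opposite sectors. A mismatch in the quadratic trace, or in the cubic trace compatible with neither sign option, rules out every \(\phi\). Then \(\mathfrak D=\mathfrak{su}(3)\oplus\mathfrak{su}(2)\oplus\mathfrak{su}(3)\), and Proposition~\ref{prop:global-lie-rank-criterion} gives the finite pulse sequences. As an independent check, the dimension test \(8+3+8=19\) can be verified by the exact rank computation.
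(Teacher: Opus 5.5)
Your outline follows the paper's proof. Both qutrit sectors have contribution pattern $(2+1)$ and get $SU(3)$ from Lemma~\ref{lem:two-one-control}, sector $(1,2)$ gets $SU(2)$ from Lemma~\ref{lem:two-lines-control}, and independence reduces to excluding one ideal that links $W_{3,1}^{2,0}$ and $W_{3,1}^{0,1}$. Starting from $(3,0)$ plus a dual qutrit, rather than the paper's $(2,1)$ plus a fundamental qutrit, is a harmless variation. It needs only the standard representation of the permutation group, not Proposition~\ref{prop:analytic-21-semiuniversality}.

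The gap is that the decisive step, exhibiting an invariant that differs between the two qutrit blocks, is left as an expectation. Moreover, one of your two candidates cannot work: $S_{12}$ and $K_{3\bar1}$ act on disjoint qutrits, so $[K_{3\bar1},S_{12}]=0$ identically.

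Your other candidate closes the argument with a short computation, which you should carry out. Write $S=S_{12}$ and $K=K_{3\bar1}$; they commute. The range $V_1\otimes V_2\otimes\mathbb C\Omega_{3\bar1}$ of $K$ meets sector $(2,0)$ in $\operatorname{Sym}^2V\otimes\Omega_{3\bar1}$ and sector $(0,1)$ in $\wedge^2V\otimes\Omega_{3\bar1}$. Hence the block of $K$ is $3P_v$, with $v$ in the $+1$ eigenspace of $S$ on $W_{3,1}^{2,0}$ and in the $-1$ eigenspace on $W_{3,1}^{0,1}$. On these blocks $S$ has eigenvalues $(1,1,-1)$ and $(1,-1,-1)$, since the extra line is the trivial or the sign line respectively. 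The joint spectra of $(S,K)$ are therefore $\{(1,3),(1,0),(-1,0)\}$ and $\{(-1,3),(1,0),(-1,0)\}$. For $H=K+cS$ this gives $\Tr\widetilde H^2=6+4c+\tfrac83c^2$ on $W_{3,1}^{2,0}$ and $6-4c+\tfrac83c^2$ on $W_{3,1}^{0,1}$.

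Inner and conjugation-type automorphisms of $\mathfrak{su}(3)$ both preserve $\Tr(XY)$. So any $c\neq0$ gives the contradiction, and the cubic traces are unnecessary. Polarising yields the paper's comparison, $\Tr(\widetilde S\widetilde K)=2$ versus $-2$.

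You should also justify why traceless parts correspond under $\phi$. We have $\mathfrak L=\mathfrak z(\mathfrak L)\oplus\mathfrak D$, and the centre is scalar on each block by irreducibility. Hence the $\mathfrak D$-component of any element of $\mathfrak L$ restricts to its traceless part on each block.
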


\begin{proof}
\emph{Control within each sector.}
Adding \(V\) to Eq.~\eqref{eq:analytic-21-decomposition} gives
the four sectors in Eq.~\eqref{eq:H31-logical-qutrit-decomposition},
with logical dimensions \(1,3,2,3\). The two qutrit spaces,
labelled \(A=(2,0)\) and \(B=(0,1)\), each receive the old
two-dimensional logical space and one old line:
\begin{equation}
\begin{aligned}
 \Gamma_{2,1}^{+}(A)&=\{(1,0),(2,1)\},\\
 \Gamma_{2,1}^{+}(B)&=\{(1,0),(0,2)\}.
\end{aligned}
 \label{eq:boundary-two-to-three-separation-obstruction}
\end{equation}
Independent control at \((2,1)\) supplies the supported \(SU(2)\)
action. The enlarged interactions act irreducibly on each logical
space because they generate its full associative matrix algebra.
Lemma~\ref{lem:two-one-control} then gives the full \(SU(3)\)
action on each of \(A\) and \(B\).

The remaining nontrivial logical space, labelled \((1,2)\),
has dimension two. Irreducibility and
Lemma~\ref{lem:two-lines-control} give its full \(SU(2)\) action.

\emph{Independence of the two qutrit sectors.}
The common contribution in Eq.~\eqref{eq:boundary-two-to-three-separation-obstruction}
has dimension two, and both exclusive contributions are lines.
Thus its old determinant-one gates do not distinguish the two sectors.
Instead compare the exchange \(S=S_{12}\) with \(K=K_{3\bar1}\).
Every trace below is on the relevant three-dimensional logical space.

The decompositions
\begin{equation*}
\begin{aligned}
 \operatorname{Sym}^2V\otimes V^*&\cong U_{2,1}\oplus U_{1,0},\\
 \bigwedge^2V\otimes V^*&\cong U_{0,2}\oplus U_{1,0}
\end{aligned}
\end{equation*}
show that the original fundamental multiplicity space has one
symmetric and one antisymmetric direction under \(S_{12}\). The exchange has trace zero
there, and acts by \(+1\) on the extra line \((2,1)\) or by
\(-1\) on \((0,2)\). Thus \(\Tr S_A=1\) and \(\Tr S_B=-1\).
The range of \(K\) is
\(V_1\otimes V_2\otimes\mathbb C\Omega_{3\bar1}\).
Its symmetric and antisymmetric parts each contain one copy of the
corresponding representation \(A\) or \(B\). Hence \(K\) has
logical rank one and trace three in each sector, with \(S\) acting
as \(+1\) or \(-1\) on that range. Thus
\begin{equation*}
\begin{array}{c|rrr}
 &\Tr S&\Tr K&\Tr(SK)\\ \hline
 A&1&3&3\\
 B&-1&3&-3
\end{array}
\end{equation*}
For \(\widetilde H_\lambda
=H_\lambda-(\Tr H_\lambda)I_3/3\), these values give
\begin{equation}
 \Tr(\widetilde S_A\widetilde K_A)=2,
 \qquad \Tr(\widetilde S_B\widetilde K_B)=-2.
 \label{eq:initial-qutrit-trace-separation}
\end{equation}
If one simple Lie algebra acted on both qutrit sectors, the
identification between its actions would preserve the trace inner
product on traceless generators. The opposite signs rule out that
identification. To see this precisely, write
the compact full logical Lie algebra as
\(\mathfrak L=\mathfrak z(\mathfrak L)\oplus\mathfrak D\),
where \(\mathfrak D=[\mathfrak L,\mathfrak L]\).
Its centre is scalar on each irreducible logical block, and every
block of \(\mathfrak D\) is traceless. Consequently the derived
component of \(iH\) projects to \(i\widetilde H_\lambda\).
If the two \(\mathfrak{su}(3)\) actions were linked, the
simple-ideal argument of Lemma~\ref{lem:propagation-semiuniversality-criterion}
would identify them by one Lie-algebra automorphism. Such an
automorphism preserves the trace inner product \(-\Tr(XY)\)
on \(\mathfrak{su}(3)\), since the Killing form is \(6\Tr(XY)\).
Taking the two derived components of \(iS\) and \(iK\) contradicts
Eq.~\eqref{eq:initial-qutrit-trace-separation}. The
\(\mathfrak{su}(2)\) factor cannot be linked to either
\(\mathfrak{su}(3)\), since they are nonisomorphic. This proves
the independent direct sum, and Lemma~\ref{lem:pulse-group-closed}
and Proposition~\ref{prop:global-lie-rank-criterion} give its exact
finite-pulse implementation.

Subtracting traces here only identifies derived-algebra components.
The physical Hamiltonians and their scalar phases are unchanged.
\end{proof}

\subsection{The general induction step}
\label{sec:mixed-propagation-theorem}
\label{subsec:propagation-theorem}

We start the induction at \((3,1)\) and add either type of qutrit.
For either addition, retain the notation
\(W_\gamma,W'_\lambda,\Gamma(\lambda)\) from
Appendix~\ref{app:branching-multiplicity}.

\begin{theorem}[Induction after adding one qutrit]
\label{thm:mixed-propagation}
Let \(r\geq3\), \(s\geq1\), and assume
\(G_{r,s}\supseteq\mathcal U_{\mathrm{log}}(r,s)\). Then
\begin{equation}
 G_{r+1,s}\supseteq\mathcal U_{\mathrm{log}}(r+1,s),
 \qquad G_{r,s+1}\supseteq\mathcal U_{\mathrm{log}}(r,s+1).
 \label{eq:mixed-induction-conclusion}
\end{equation}
\end{theorem}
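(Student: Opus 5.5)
The plan is to check the hypotheses of Lemma~\ref{lem:propagation-semiuniversality-criterion} for the connected closed group \(T=G'\), where \(G'\) is \(G_{r+1,s}\) or \(G_{r,s+1}\). Both additions are handled the same way, since every ingredient from Appendices~\ref{app:branching-multiplicity} and~\ref{app:sectorwise-separation} is stated for either type of added qutrit. By Lemma~\ref{lem:pulse-group-closed}, \(G'\) is compact and connected, and its generators are invariant. The original interactions \(\mathcal C_{r,s}\), extended by \(X\mapsto X'=X\otimes I_F\) with the factor ordering of Appendix~\ref{app:branching-multiplicity}, belong to \(\mathcal C_{r+1,s}\) or \(\mathcal C_{r,s+1}\). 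Hence \(G_{r,s}'\subseteq G'\). The induction hypothesis and Eq.~\eqref{eq:branching-existing-action} then show that \(G'\) contains every inherited gate \(\bigoplus_\lambda\bigoplus_{\gamma\in\Gamma(\lambda)}Y_\gamma\), with the \(Y_\gamma\in SU(W_\gamma)\) chosen independently.

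\textbf{Control within each sector.} Fix an enlarged sector \(\lambda\) with \(m'_\lambda\geq2\); sectors with \(m'_\lambda=1\) are trivial. There are two cases.
\begin{itemize}
\item If some \(\gamma\in\Gamma(\lambda)\) has \(m_\gamma\geq3\), Lemma~\ref{lem:regular-sector-controllability} gives \([G']_\lambda\supseteq SU(W'_\lambda)\).
\item Otherwise every contribution has dimension at most two. Proposition~\ref{prop:small-contributions} and Lemma~\ref{lem:thin-sector-controllability} then give the same conclusion.
\end{itemize}
Both cases use irreducibility of the enlarged interactions on \(W'_\lambda\), which follows from Eq.~\eqref{eq:interaction-commutant} applied to the enlarged system.

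\textbf{Independence across sectors.} Take distinct enlarged sectors \(\lambda\neq\kappa\) with \(m'_\lambda=m'_\kappa\geq2\). Lemma~\ref{lem:branching-separation} supplies a \(\gamma\in\Gamma(\lambda)\setminus\Gamma(\kappa)\) with \(m_\gamma\geq2\), after relabelling if necessary. Use independent control on \((r,s)\) to choose \(Y_\gamma\in SU(W_\gamma)\) nonscalar, for instance \(\mathrm{diag}(\omega,\omega^{-1},1,\dots)\) or \(\mathrm{diag}(i,-i)\) when \(m_\gamma=2\), and set \(Y_\delta=I\) for every other original sector. By Eq.~\eqref{eq:branching-existing-action}, the resulting gate has block \(I\) on \(W'_\kappa\). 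On \(W'_\lambda\) its block is \(Y_\gamma\oplus I\), which is nonscalar because \(Y_\gamma\) itself is nonscalar.

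With both hypotheses verified, Lemma~\ref{lem:propagation-semiuniversality-criterion} yields \(G'\supseteq\mathcal U_{\mathrm{log}}\) for the enlarged system. Since \(G'\) consists of finite pulse products, these gates have exact finite implementations, as in Proposition~\ref{prop:global-lie-rank-criterion}.

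The main obstacle is bookkeeping rather than new mathematics. One check is that the embedding \(X\mapsto X'\) really keeps the inherited gates inside \(G'\) when a \(V\) factor is inserted before the \(V^*\) factors. This requires conjugating by the fixed reordering and noting that the interactions are permuted among themselves, so it is harmless. The other check is that the standing hypotheses of Lemmas~\ref{lem:thin-sector-controllability} and~\ref{lem:branching-separation}, namely \(r\geq3\) and \(s\geq1\), are exactly those of the theorem. They are, so the whole argument is an assembly of the earlier results.
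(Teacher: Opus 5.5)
Your proposal is correct and follows the paper's proof. Within each sector you get control from Lemma~\ref{lem:regular-sector-controllability}, or from Lemma~\ref{lem:thin-sector-controllability} for the contribution patterns $(2)$, $(2,1)$, $(1,1)$; Lemma~\ref{lem:branching-separation} then yields an inherited gate that is nonscalar on one sector and the identity on the other, and Lemma~\ref{lem:propagation-semiuniversality-criterion} applied to the compact connected enlarged group finishes the argument, while your extra bookkeeping (the inclusion $G_{r,s}'\subseteq G'$ and the explicit determinant-one nonscalar choices of $Y_\gamma$) only makes explicit what the paper leaves implicit.
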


\begin{proof}
Fix either addition. A one-dimensional logical space has trivial \(SU(1)\) action.
For an enlarged logical space of dimension greater than one, a contribution
of dimension at least three gives full control by Lemma~\ref{lem:regular-sector-controllability}.
Otherwise, the possible lists of predecessor dimensions are \((2)\),
\((2,1)\), and \((1,1)\), covered by
Lemma~\ref{lem:thin-sector-controllability}.

For any two distinct enlarged sectors whose logical spaces have the
same dimension, at least two, Lemma~\ref{lem:branching-separation}
supplies a predecessor of dimension at least two present in exactly one
of them. Independent control in the original system gives a nonscalar
gate on that predecessor and the identity on every other original
logical subsystem. Its embedding distinguishes the enlarged sectors by
Eq.~\eqref{eq:branching-existing-action}.
Lemma~\ref{lem:propagation-semiuniversality-criterion}, applied to the
compact connected enlarged group, gives independent control in the
enlarged system.
\end{proof}

The first applications give independent control at \((4,1)\) and
\((3,2)\) by adding one fundamental or one dual qutrit to \((3,1)\),
respectively. Table~\ref{tab:initial-five-qutrit-branching} lists the
dimensions of the predecessors of each enlarged sector. A contribution of dimension three extends
control to its enlarged space. The remaining nontrivial cases are a two-dimensional contribution alone or coupled to a line, as covered in Appendix~\ref{app:sectorwise-separation}.

\begin{table}[tbp]
\caption{Adding either type of qutrit to \((3,1)\).
Each sum lists the dimensions of the original logical spaces
contributing to the enlarged sector. Its value is the new logical
dimension.}
\label{tab:initial-five-qutrit-branching}
\begin{ruledtabular}
\begin{tabular}{c c c}
Added type & Enlarged sector & Contributions\\ \hline
\(V\)&\((4,1)\)&\(1\)\\
&\((3,0)\)&\(1+3\)\\
&\((2,2)\)&\(1+2\)\\
&\((1,1)\)&\(3+2+3\)\\
&\((0,3)\)&\(2\)\\
&\((0,0)\)&\(3\)\\ \hline
\(V^*\)&\((3,2)\)&\(1\)\\
&\((4,0)\)&\(1\)\\
&\((2,1)\)&\(1+3+2\)\\
&\((1,0)\)&\(3+3\)\\
&\((1,3)\)&\(2\)\\
&\((0,2)\)&\(2+3\)
\end{tabular}
\end{ruledtabular}
\end{table}

\begin{corollary}[Equivalence under complex conjugation]
\label{cor:conjugation-dual-semiuniversality}
For \(r,s\geq1\), independent logical control on
\(\mathcal H_{r,s}\) is equivalent to independent logical control
on \(\mathcal H_{s,r}\).
\end{corollary}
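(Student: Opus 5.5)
The plan is to build an explicit unitary that exchanges the two qutrit types and intertwines everything relevant. Let \(J:\mathcal H_{r,s}\to\mathcal H_{s,r}\) be the antiunitary map that complex conjugates coefficients in the product basis and reorders the tensor factors. A site \(i\) of \(V\) in \(\mathcal H_{r,s}\) becomes a dual site \(\bar i\) of \(\mathcal H_{s,r}\), and a dual site \(\bar j\) becomes a fundamental site \(j\). In that basis, complex conjugation of coefficients sends \(U^{\otimes r}\otimes\overline U^{\otimes s}\) to \(\overline U^{\otimes r}\otimes U^{\otimes s}\). After reordering, this is \(R_{s,r}(U)\), so \(JR_{r,s}(U)J^{-1}=R_{s,r}(U)\).

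Next I check the interactions. The operators \(S_{ij}\), \(S_{\bar i\bar j}\) and \(K_{i\bar j}\) have real matrix entries in the product basis: the singlet \(\Omega\) has real coefficients and is symmetric under exchanging the two factors. Hence \(J\) maps \(\mathcal C_{r,s}\) bijectively onto \(\mathcal C_{s,r}\), sending \(S_{ij}\mapsto S_{\bar i\bar j}\), \(S_{\bar i\bar j}\mapsto S_{ij}\) and \(K_{i\bar j}\mapsto K_{j\bar i}\). Since \(J\) is antiunitary, \(Je^{-itH}J^{-1}=e^{itJHJ^{-1}}\). Conjugation by \(J\) is a multiplicative bijection of unitaries, so it maps \(G_{r,s}\) onto \(G_{s,r}\). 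The sign flip in the time parameter is harmless because \(t\) ranges over all of \(\mathbb R\).

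It remains to show that \(J\) maps \(\mathcal U_{\mathrm{log}}(r,s)\) onto \(\mathcal U_{\mathrm{log}}(s,r)\). Because \(J\) intertwines the collective actions, it maps the isotypic component of type \((p,q)\) in \(\mathcal H_{r,s}\) to the component of type \((q,p)\) in \(\mathcal H_{s,r}\), since conjugation exchanges the Dynkin labels. It also induces an antiunitary identification \(W^{p,q}_{r,s}\to W^{q,p}_{s,r}\). The natural way to see this is to send an intertwiner \(T\) to \(J\circ T\circ c\), where \(c\) is an antiunitary identification \(U_{p,q}\to U_{q,p}\) intertwining \(\varrho_{p,q}(U)\) with \(\varrho_{q,p}(U)=\overline{\varrho_{p,q}(U)}\). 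An invariant operator with blocks \(X_{p,q}\) is then mapped to one with blocks \(\overline{X}_{p,q}\), up to this identification, and complex conjugation preserves \(SU(m)\). Therefore \(J\mathcal U_{\mathrm{log}}(r,s)J^{-1}=\mathcal U_{\mathrm{log}}(s,r)\), and the two containments \(G\supseteq\mathcal U_{\mathrm{log}}\) are equivalent by symmetry.

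The step I expect to need the most care is this block-level identification under an antiunitary map. Schur's lemma in its usual form is stated for unitary intertwiners, so I must confirm that an antilinear equivariant map still respects the decomposition of Eq.~\eqref{eq:mixed-schur-weyl-decomposition}. A cleaner route avoids \(c\) altogether. The target group \(\mathcal U_{\mathrm{log}}\) is intrinsically the set of invariant unitaries whose restriction to each isotypic component has unit determinant on the multiplicity factor. Equivalently, it is the commutator subgroup of the unitary group of the commutant, which is characterised without choosing bases. Conjugation by \(J\) maps the commutant of \(R_{r,s}\) onto the commutant of \(R_{s,r}\), as a real algebra isomorphism. It therefore maps the unitary group of one commutant onto the other and preserves commutator subgroups, which gives the result without tracking conjugated determinants.
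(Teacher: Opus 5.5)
Your proposal is correct and follows the paper's proof. Both use the antiunitary map given by coordinatewise complex conjugation followed by reordering of the tensor factors: it intertwines $R_{r,s}$ with $R_{s,r}$, sends exchanges to exchanges and $K_{i\bar j}$ to $K_{j\bar i}$, and the sign flip $\mathsf C e^{-itH}\mathsf C^{-1}=e^{+it\mathsf CH\mathsf C^{-1}}$ is harmless because $t$ ranges over $\mathbb R$. Your extra observation that $\mathcal U_{\mathrm{log}}$ is the commutator subgroup of the unitary group of the commutant, and is therefore carried onto its counterpart by this conjugation, supplies the detail the paper compresses into ``their independent logical targets are identified''.
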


\begin{proof}
Coordinatewise complex conjugation followed by tensor-factor
reordering interchanges the two types and maps sector \((p,q)\)
to \((q,p)\) with the same multiplicity. It maps exchanges to
exchanges and \(K_{i\bar j}\) to \(K_{j\bar i}\). For this
antiunitary map \(\mathsf C\),
\begin{equation*}
 \mathsf C e^{-itH}\mathsf C^{-1}
 =e^{+it\mathsf C H\mathsf C^{-1}}.
\end{equation*}
Both signs of \(t\) are allowed, so the physical pulse groups
and their independent logical targets are identified.
\end{proof}

\subsection{Proof of Theorem~\ref{thm:global-semiuniversality}}

Proposition~\ref{prop:analytic-31-semiuniversality} gives independent
control at \((3,1)\). Repeated applications of
Theorem~\ref{thm:mixed-propagation} cover every
\(r\geq3,s\geq1\).
Corollary~\ref{cor:conjugation-dual-semiuniversality} then covers
\(r\geq1,s\geq3\).

The remaining systems are \((1,1),(2,1),(1,2),(2,2)\).
The target at \((1,1)\) is trivial.
Proposition~\ref{prop:analytic-21-semiuniversality} and conjugation
give \((2,1)\) and \((1,2)\), while
Proposition~\ref{prop:analytic-22-semiuniversality} gives \((2,2)\).
This proves the theorem for all \(r,s\geq1\).

\subsection{Independent exact checks of the Lie-algebra dimensions}
\label{subsec:exact-rank-procedure}
\label{app:exact-base-calculation}
\label{subsec:small-system-calculations}

\begin{table}[tbp]
\caption{Independent exact dimension checks. Multiplicities include the
one-dimensional logical spaces. The full ranks are computed over
\(\mathbb F_{1\,000\,003}\). Each reaches the analytic upper bound
\(d+c\), which also fixes the derived dimension to \(d\).}
\label{tab:small-system-ranks}
\begin{ruledtabular}
\begin{tabular}{c l r r r}
 \((r,s)\)&Multiplicities&\(d\)&\(c\)&Full rank\\
 \hline
 \((1,1)\)&\(1,1\)&0&1&1\\
 \((2,1)\)&\(1,1,2\)&3&2&5\\
 \((3,1)\)&\(1,2,3,3\)&19&2&21\\
 \((4,1)\)&\(1,2,3,3,4,8\)&97&2&99\\
 \((3,2)\)&\(1,1,2,5,6,6\)&97&3&100
\end{tabular}
\end{ruledtabular}
\end{table}

The preceding proof uses no computed rank. The following exact
calculations independently check the dimensions of the full Lie
algebras generated by the physical interactions and of their scalar parts.

For a fixed system, write
\(\mathfrak L=\Lie_{\mathbb R}\{iH_a\}_a\) and
\(\mathfrak D=[\mathfrak L,\mathfrak L]\), using the interactions
on all pairs. The sign change from \(-iH_a\) to \(iH_a\) does
not change a real Lie algebra. These interactions generate the same
group and Lie algebra as the adjacent interactions by
Eq.~\eqref{eq:adjacent-and-all-pairs-generate-same-group}.
Set \(d=\sum_\lambda(m_\lambda^2-1)\), and let \(c\) be the
number of interaction types present: exchanges within each type of
qutrit, and mixed interactions. The dimension test in Appendix~\ref{app:mathematical-framework}
gives \(\dim\mathfrak D\leq d\).

Block traces give the other dimension bound. Define
\(\tau(X)=(\Tr X_\lambda)_\lambda\).
Compactness gives
\(\mathfrak L=\mathfrak z(\mathfrak L)\oplus\mathfrak D\).
The logical actions are irreducible by
Eq.~\eqref{eq:interaction-commutant}, so the centre is scalar on
each block, while every block of \(\mathfrak D\) is traceless.
Thus an element of \(\mathfrak L\) has zero block traces exactly
when it belongs to \(\mathfrak D\).
Every nested commutator has zero block traces, so
\(\tau(\mathfrak L)\) is spanned by the trace vectors of the
original generators \(iH_a\). Interactions of the same type are
related by permutations within each qutrit type and have identical
block traces. There are \(c\) types, giving
\begin{equation*}
 \dim(\mathfrak L/\mathfrak D)=\dim\tau(\mathfrak L)\leq c.
\end{equation*}

It follows that a full-algebra rank reaching \(d+c\) determines
both dimensions:
\begin{equation}
 \dim\mathfrak L=d+c
 \quad\Longrightarrow\quad
 \dim\mathfrak D=d.
 \label{eq:full-rank-determines-derived}
\end{equation}
In fact \(\dim\mathfrak L\leq d+c\), while equality forces both
upper bounds to be reached. The extra \(c\) dimensions account for
the possible scalar contributions. They do not imply arbitrary
relative phases between all sectors.

To obtain the full-algebra lower bound, construct the physical
integer matrices in the computational basis
\(\ket{\mathbf a;\bar{\mathbf b}}\), where
\(a_i,b_j\in\{0,1,2\}\). An exchange permutes two entries, and
\begin{equation}
 K_{i\bar j}\ket{\mathbf a;\bar{\mathbf b}}
 =\delta_{a_i,b_j}\sum_{u=0}^2
 \ket{\mathbf a[i\leftarrow u];
       \overline{\mathbf b[j\leftarrow u]}}.
 \label{eq:small-system-contraction-action}
\end{equation}
Represent a skew-Hermitian matrix by its real antisymmetric part
\(R\) and imaginary symmetric part \(J\). The identity
\begin{equation*}
 [iH_a,R+iJ]=-[H_a,J]+i[H_a,R]
\end{equation*}
keeps every nested commutator integral. Reduce these pairs over
\(\mathbb F_\wp\), with \(\wp=1\,000\,003\), and repeatedly
bracket the generators with each newly independent vector. Stop
when no new independent vectors remain.

The resulting rank is a lower bound on the real dimension: a minor
with nonzero determinant over \(\mathbb F_\wp\) has nonzero
integer determinant. When this lower bound equals \(d+c\), the
two real dimensions follow from
Eq.~\eqref{eq:full-rank-determines-derived}. A smaller rank over a
finite field would not prove a smaller real dimension.

The calculation retains the entries on and above the diagonal
between computational basis states of the same collective weight.
For such a state, the weight counts are
\(\nu_u=\#\{i:a_i=u\}-\#\{j:b_j=u\}\).
All interactions and commutators preserve these counts. Symmetry
or antisymmetry determines the lower triangle, so the retained
coordinates contain the full matrix information.
% AUTHOR TODO: verify_initial_cases.py is not present in the supplied project.
% Supply the matching exact-check script, iteration histories, coordinate
% counts and reproduction instructions before citing an accompanying archive.

\begin{proposition}[Exact dimension checks]
\label{prop:small-mixed-system-calculations}
\label{prop:exact-32-base-calculation}
For each system in Table~\ref{tab:small-system-ranks}, the full
finite-field rank is \(d+c\). The corresponding real dimensions
are \(\dim\mathfrak L_{r,s}=d+c\) and
\(\dim\mathfrak D_{r,s}=d\), and independent logical control holds:
\begin{equation}
\begin{aligned}
 \mathfrak D_{r,s}=\bigoplus_{\lambda\in\Lambda_{r,s}}
 \mathfrak{su}(W_{r,s}^\lambda),\\
 G_{r,s}\supseteq\mathcal U_{\mathrm{log}}(r,s).
\end{aligned}
 \label{eq:small-system-semiuniversality}
\end{equation}
\end{proposition}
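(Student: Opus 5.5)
The plan is to reduce the statement to the rank computation already set up, and then read off the conclusions from earlier results. For each of the five systems in Table~\ref{tab:small-system-ranks}, I first compute the multiplicities from the recursions in Eq.~\eqref{eq:multiplicity-recursions}, starting at \(m_{0,0}(0,0)=1\). For example, at \((3,1)\) they are \(1,2,3,3\), giving \(d=0+3+8+8=19\). At \((4,1)\) they are \(1,2,3,3,4,8\), giving \(d=97\). At \((3,2)\) they are \(1,1,2,5,6,6\), giving \(d=97\). The number \(c\) of interaction types is one at \((1,1)\), where only the mixed interaction exists, two when \(s=1<r\), and three when \(r,s\geq2\). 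This fixes the target value \(d+c\) in each row.

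Next I run the procedure just described. I build the integer matrices of all pair interactions, using Eq.~\eqref{eq:small-system-contraction-action} for \(K_{i\bar j}\). I restrict to the weight-preserving upper-triangular coordinates and represent each skew-Hermitian matrix as an integer pair \((R,J)\). I then close under brackets over \(\mathbb F_\wp\) by Gaussian elimination, bracketing each newly independent vector with the generators until no new vectors appear. Bracketing with generators suffices, because the span of generators together with iterated brackets against generators is already closed under the Lie bracket by the Jacobi identity. The finite-field rank is a lower bound on \(\dim_{\mathbb R}\mathfrak L\), since a nonvanishing minor mod \(\wp\) has nonzero integer determinant and the coordinates are integral. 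On the other side, \(\dim\mathfrak L\leq d+c\) by Eq.~\eqref{eq:full-rank-determines-derived}. Once the computed rank equals \(d+c\), both real dimensions follow, namely \(\dim\mathfrak L=d+c\) and \(\dim\mathfrak D=d\).

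Finally, \(\dim\mathfrak D=d\) together with the inclusion in Eq.~\eqref{eq:derived-contained-logical-algebra} gives the equality \(\mathfrak D=\bigoplus_\lambda\mathfrak{su}(W_\lambda)\). Proposition~\ref{prop:global-lie-rank-criterion} and Lemma~\ref{lem:pulse-group-closed} then yield \(G_{r,s}\supseteq\mathcal U_{\mathrm{log}}(r,s)\) with exact finite pulse sequences. As a consistency check, the conclusions agree with the analytic Propositions~\ref{prop:analytic-21-semiuniversality} and \ref{prop:analytic-31-semiuniversality}, and with Theorem~\ref{thm:mixed-propagation} for \((4,1)\) and \((3,2)\).

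The main obstacle is computational. At \((4,1)\) and \((3,2)\) the Hilbert space has dimension \(243\), so the ambient matrix space is large. I would keep the elimination sparse and organize it by weight blocks. If the rank stalls below \(d+c\), that is not a disproof, because a finite-field rank can fall short of the real dimension. In that case the first thing I would do is repeat the computation with a second prime.
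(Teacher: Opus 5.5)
Your proposal is correct and follows the paper's own argument. The exact finite-field closure gives a lower bound on $\dim\mathfrak L_{r,s}$, while the block-trace bound $\dim(\mathfrak L/\mathfrak D)\leq c$ together with $\mathfrak D\subseteq\bigoplus_\lambda\mathfrak{su}(W_\lambda)$ gives $\dim\mathfrak L_{r,s}\leq d+c$; a computed rank of $d+c$ then forces $\dim\mathfrak D_{r,s}=d$, equality with the direct sum, and the group inclusion via Proposition~\ref{prop:global-lie-rank-criterion}, and your multiplicities, values of $d$ and $c$, and Jacobi-identity justification for bracketing only against generators all match the paper (the upper bound itself comes from the block-trace argument preceding Eq.~\eqref{eq:full-rank-determines-derived}, not from that equation).
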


\begin{proof}
The tensor-product rules give the multiplicities in the table.
The exact calculation just described gives its final column.
The matching upper bounds and
Eq.~\eqref{eq:full-rank-determines-derived} prove the dimension
equalities. Since \(\mathfrak D_{r,s}\) is contained in the
displayed direct sum of the same dimension, the two algebras are
equal. Proposition~\ref{prop:global-lie-rank-criterion} gives the
group conclusion.

For \((3,1)\), the derived dimension \(19=8+3+8\)
independently confirms the two separate \(\mathfrak{su}(3)\)
factors proved in Proposition~\ref{prop:analytic-31-semiuniversality}.
\end{proof}

\subsection{A 24-pulse shift on one logical qutrit}
\label{subsec:explicit-pulses-31}

\begin{table*}[!t]
\caption{Pulse parameters for applying a qutrit shift to \(W_{3,1}^{2,0}\)
while leaving the other logical subsystems in \(\mathcal H_{3,1}\) unchanged.
The entry \(t_{k,j}\) specifies the gate \(e^{-it_{k,j}\Psi_j}\).
Read rows from top to bottom and apply pulses generated by
\(S_{12}\), \(S_{23}\), and \(K_{3\bar1}\), in that order within each row.
All displayed digits are used in the numerical checks. The target and
product conventions are Eqs.~\eqref{eq:selective-qutrit-target} and
\eqref{eq:explicit-pulse-order-31}.}
\label{tab:explicit-pulse-areas-31}
\begin{ruledtabular}
\begin{tabular}{c r r r}
Cycle \(k\)&\(t_{k,1}\) (\(S_{12}\))&\(t_{k,2}\) (\(S_{23}\))&\(t_{k,3}\) (\(K_{3\bar1}\))\\
\hline
1& 0.89890265074673437& 1.6863543126948368& 0.78807736635071035\\
2&-0.81955596050484836& 1.5896057342665548&-0.9505817084235455\\
3& 1.7513191311346539&-1.8540482119560606& 0.23672254751977384\\
4& 2.0380528681525618&-2.2039994317951597&-0.65163453775604407\\
5&-0.54136253487897834&-1.6273143278945019& 0.36735616246176561\\
6& 1.9241651693991457&-0.82171433305790531& 0.67814693186966712\\
7&-2.8509494846340813& 1.6419160964371304& 0.62804715794838684\\
8& 2.8552348936408816& 2.6165787354286216& 0.99826118242248163
\end{tabular}
\end{ruledtabular}
\end{table*}

\begin{revision}
Symmetry-invariant gates act as the identity on each gauge factor.
For \((r,s)=(3,1)\), the logical spaces have dimensions \(1,3,2,3\).
The 81-dimensional physical calculation thus reduces to block-diagonal
matrices acting on a nine-dimensional space. For any two invariant
unitaries \(U,T\),
\begin{equation}
 \|U-T\|_{\mathrm{op}}
 =\max_\lambda\|U_\lambda-T_\lambda\|_{\mathrm{op}}.
 \label{eq:pulse-block-norm}
\end{equation}
Each copy of an irreducible representation has one highest-weight
line. The highest-weight space of a sector thus has exactly its
logical dimension and carries the logical action directly. Writing
\(Q_\lambda\) for an orthonormal embedding of that space gives the
reduced controls below. The basis convention after the display fixes
both \(Q_\lambda\) and the target shift.

Set \(H_1=S_{12}\), \(H_2=S_{23}\), \(H_3=K_{3\bar1}\), and
\(\Psi_j^\lambda=Q_\lambda^\dagger H_jQ_\lambda\), with
\(\Psi_j=\bigoplus_\lambda\Psi_j^\lambda\). Define
\begin{equation}
 \mathsf M_c=\begin{pmatrix}
 c&\sqrt{1-c^2}\\ \sqrt{1-c^2}&-c
 \end{pmatrix},\qquad P_0=\operatorname{diag}(1,0,0).
 \label{eq:pulse-reflection-block}
\end{equation}
The complete reduced controls are
\begin{equation}
\begin{array}{c|ccc}
 \lambda&\Psi_1^\lambda&\Psi_2^\lambda&\Psi_3^\lambda\\ \hline
 (3,1)&1&1&0\\
 (2,0)&1\oplus\mathsf M_{1/4}&\mathsf M_{1/3}\oplus1&3P_0\\
 (1,2)&\mathsf M_{1/2}&\operatorname{diag}(-1,1)&0_2\\
 (0,1)&(-1)\oplus\mathsf M_{1/2}&\mathsf M_{1/3}\oplus(-1)&3P_0
\end{array}
 \label{eq:explicit-reduced-controls-31}
\end{equation}
These matrices satisfy \(\Psi_1^2=\Psi_2^2=I_9\) and
\(\Psi_3^2=3\Psi_3\). Their block traces are
\((1,1,0,-1)\) for either exchange and \((0,3,0,3)\) for the
singlet interaction. These scalar contributions must be retained
in the physical pulses. The dimensions \(21\) and \(19\) of the full
and derived Lie algebras, respectively, are independently checked in
Table~\ref{tab:small-system-ranks}.
% Reproduction: section_III_C_reproducibility/reduced_model.py reconstructs
% the exact physical interactions and highest-weight reductions.
% verify_pulses.py independently checks their physical realisation.

\paragraph{Basis convention for reproducing the target.}
The logical basis fixes the shift \(X_3\) in
Eq.~\eqref{eq:selective-qutrit-target}. Order the physical orthonormal
basis lexicographically as \(\ket{a b c\bar d}\), with
\(a,b,c,d\in\{0,1,2\}\).  For \(E_{ij}=\ket i\!\bra j\), the collective
infinitesimal action is
\begin{equation}
 \mathcal J_{ij}=E_{ij}^{(1)}+E_{ij}^{(2)}+E_{ij}^{(3)}
                   -E_{ji}^{(\bar1)}.
 \label{eq:pulse-collective-generators}
\end{equation}
The Cartan operators are \(\mathcal J_{00}-\mathcal J_{11}\) and
\(\mathcal J_{11}-\mathcal J_{22}\). For each sector label \((p,q)\), select
the computational basis vectors of that weight in lexicographic order
and compute the rational nullspace of the vertically stacked restrictions
of \(\mathcal J_{01}\) and \(\mathcal J_{12}\).  Take its standard
free-variable basis in ascending column order, apply Gram--Schmidt in
that order, make the first nonzero coefficient of each vector positive,
and normalise with the positive square root of its squared norm.
Writing \(m_\lambda=\dim W_{3,1}^{\lambda}\), this specifies orthonormal
highest-weight embeddings
\(Q_\lambda:\mathbb C^{m_\lambda}\to\mathcal H_{3,1}\) and fixes the
multiplicity basis, including the basis of \(X_3\).

Each cycle applies pulses generated by \(S_{12}\), \(S_{23}\), and
\(K_{3\bar1}\), in that order.
The numbers \(t_{k,1},t_{k,2},t_{k,3}\) determine the corresponding gates
\(e^{-it_{k,j}\Psi_j}\).  The complete sequence is
\begin{equation}
\begin{aligned}
 U_{\rm pulse}&=C_8C_7\cdots C_1,\\
 C_k&=e^{-it_{k,3}\Psi_3}e^{-it_{k,2}\Psi_2}e^{-it_{k,1}\Psi_1}.
\end{aligned}
 \label{eq:explicit-pulse-order-31}
\end{equation}

Table~\ref{tab:explicit-pulse-areas-31} gives all 24 pulse parameters, using
the same precision as the exported decimal data. Negative values may be replaced
by adding \(2\pi\) for an exchange or \(2\pi/3\) for a singlet interaction.
These are exact periods of the corresponding gates.  For a constant Hamiltonian
\(\hbar g_j H_j\) applied for a duration \(\Delta t\), the
parameter is \(t=g_j\Delta t\).  Converting the tabulated values to pulse
durations thus requires the coupling strengths \(g_j\).

The sequence was obtained by nonlinear least squares on the real and
imaginary entries of \(U_{\rm pulse}-T_X\), using analytic derivatives
and no fitting of separate phases on the blocks. The reduced product
can be checked directly from the displayed data using
\begin{equation*}
\begin{aligned}
 e^{-it\Psi_j}&=\cos(t)I-i\sin(t)\Psi_j,\qquad j=1,2,\\
 e^{-it\Psi_3}&=I+(e^{-3it}-1)\Psi_3/3.
\end{aligned}
\end{equation*}
A 90-digit evaluation using all displayed pulse parameters gives a
maximum block error of approximately \(2.6\times10^{-15}\),
without adjusting any phase. Equation~\eqref{eq:pulse-block-norm} gives
the same error on the physical space. No relative phases between sectors
are discarded. This is a floating-point evaluation, not an analytic
identity or an interval bound.

% The section_III_C_reproducibility package retains the original optimisation
% record in pulses_31.json. It includes the full 81-dimensional verifier and
% saved results of a fresh optimisation with independent verification.
\end{revision}

\section{Teleportation under correlated noise: channels, thresholds and resources}
\label{app:teleportation-analysis}

In this appendix, we derive the effective qutrit teleportation channels
under the mixture of collective and independent noise. We determine
when the encoded protocol has higher fidelity than the unencoded
protocol and when it preserves entanglement while the unencoded
channel is entanglement breaking. We then separate fidelity on retained
trials from the probability of obtaining an output, explain how to test
the noise model, and evaluate the physical costs of the encoding.
We finish by considering other collective processes and additional
logical noise.

\subsection{Logical channels after noise and recovery}
\label{subsec:finite-weyl-reduction}

Encoding in the multiplicity subsystem protects the logical state from
the collective errors considered here, which act only on the gauge
subsystem. Recovery discards the gauge subsystem and returns the logical
state. We define these maps for a chosen sector \(U\otimes W\), writing
\(d=\dim W\geq2\). In the main example,
\(U=U_{2,0}\), \(W=W_{3,1}^{2,0}\), and \(d=3\).
Write \(\mathcal H=\mathcal H_{r,s}\), and let
\(J:U\otimes W\to\mathcal H\) be the block isometry,
\(P=JJ^\dagger\), and \(\tau\) a fixed gauge state. Encoding is defined by
\begin{equation}
    \operatorname{Enc}(\rho)
    :=J(\tau\otimes\rho)J^\dagger.
\end{equation}
For trials with no signalled loss, we use the recovery map
\begin{equation}
 \operatorname{Rec}(A)
 :=\Tr_U(J^\dagger AJ)
 +\Tr[(I-P)A]\frac{I_W}{d}.
 \label{eq:recovery-master}
\end{equation}
The first term extracts the logical component within the selected
sector. The second replaces the component outside that sector by the
maximally mixed logical state. Both terms define completely positive
maps. Their traces are \(\Tr(PA)\) and \(\Tr[(I-P)A]\), whose sum is
\(\Tr A\), so recovery is CPTP. These definitions apply to any encoding
sector.

\begin{samepage}
\begin{lemma}[Collective noise acts only on the gauge subsystem]
\label{lem:collective-factorization-teleportation}
Suppose a channel \(\mathcal M\) preserves the selected sector and
its Kraus operators restrict to \(B_a\otimes I_W\), with
\(\sum_a B_a^\dagger B_a=I_U\). Then
\(\operatorname{Rec}\circ\mathcal M\circ\operatorname{Enc}=\id_W\).
\end{lemma}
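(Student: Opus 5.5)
The proof is a direct computation on an arbitrary input, and it can be made linear from the start. Because \(\operatorname{Rec}\circ\mathcal M\circ\operatorname{Enc}\) is linear, it suffices to check it on an arbitrary operator \(\rho\) on \(W\). To cover reference systems as well, I would work with \(\rho\) on \(W\otimes\mathcal R\) and tensor every map with \(\id_{\mathcal R}\). The identities below hold verbatim with \(B_a\) and \(J\) extended by \(I_{\mathcal R}\).

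\emph{Step 1: apply the noise.} Encoding gives \(\operatorname{Enc}(\rho)=J(\tau\otimes\rho)J^\dagger\), which is supported on the selected sector. Let \(\{M_a\}\) be Kraus operators of \(\mathcal M\). Since \(\mathcal M\) preserves the sector, the hypothesis reads \(M_aJ=J(B_a\otimes I_W)\). Substituting this gives
\begin{equation*}
 \mathcal M(\operatorname{Enc}(\rho))=\sum_a J\bigl(B_a\tau B_a^\dagger\otimes\rho\bigr)J^\dagger
 =J(\tau'\otimes\rho)J^\dagger,
 \qquad \tau'=\sum_a B_a\tau B_a^\dagger .
\end{equation*}
The new gauge state \(\tau'\) has unit trace because \(\sum_a B_a^\dagger B_a=I_U\).

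\emph{Step 2: apply recovery.} Set \(A=J(\tau'\otimes\rho)J^\dagger\). Since \(J\) is an isometry, \(J^\dagger J=I\), so \(J^\dagger AJ=\tau'\otimes\rho\). Also \(PA=A\), which gives \(\Tr[(I-P)A]=0\) and removes the leakage term in Eq.~\eqref{eq:recovery-master}. What remains is \(\Tr_U(\tau'\otimes\rho)=\Tr(\tau')\,\rho=\rho\).

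The only point needing care is the meaning of the Kraus hypothesis. It could be read either as ``\(M_aP=PM_aP\) with restriction \(B_a\otimes I_W\)'' or as ``\(M_aJ=J(B_a\otimes I_W)\)''. I would state explicitly that sector preservation means the range of \(M_aJ\) lies in \(\operatorname{ran}J\), under which the two readings coincide. With that settled, the argument is routine and has no real obstacle.

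As a remark, collective noise \(R_{r,s}(U)\) with any distribution over \(U\) satisfies the hypothesis with \(B_a\propto\varrho_{2,0}(U_a)\), by Eq.~\eqref{eq:noiseless-action-master}. This is the case used for the shared-label row of Table~\ref{tab:weyl-branch-actions}.
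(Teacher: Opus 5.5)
Your proposal is correct and follows the same direct computation as the paper, which pushes the Kraus operators through the encoding to obtain \(\sum_a\Tr(B_a\tau B_a^\dagger)\rho=\Tr(\tau\sum_a B_a^\dagger B_a)\rho=\rho\). The paper leaves implicit that the leakage term of the recovery map vanishes because the state remains in the sector; your explicit treatment of that term, of the sector-preservation hypothesis and of a reference system adds detail without changing the argument.
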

\end{samepage}

\begin{proof}
For every logical input \(\rho\), recovery gives
\begin{equation*}
\begin{aligned}
 (\operatorname{Rec}\circ\mathcal M\circ\operatorname{Enc})(\rho)
 &=\sum_a\Tr(B_a\tau B_a^\dagger)\rho\\
 &=\Tr\!\left(\tau\sum_a B_a^\dagger B_a\right)\rho=\rho.
\end{aligned}
\end{equation*}
\end{proof}

For the qutrit Weyl operators \(D_g=X^aZ^b\), with
\(g=(a,b)\in\mathbb Z_3^2\),
\begin{equation*}
 \frac19\sum_gD_g A D_g^\dagger=\Tr(A)I_3/3.
\end{equation*}
Averaging over the powers of \(Z\) removes off-diagonal entries, and
averaging over the powers of \(X\) equalises the diagonal entries.
Independent averaging on the four physical factors, with conjugate
operators on \(V^*\), gives
\(\mathcal T_{\rm ind}(A)=\Tr(A)I_{81}/81\).
Recovery sends \(I_{81}/81\) to \(I_3/3\): both terms in
Eq.~\eqref{eq:recovery-master} are proportional to \(I_3\), and their
traces sum to one.

In the collective branch \(\mathcal T_{\rm col}\), the same label
\(g\) is used on all four factors. Since
\(J^\dagger R_{3,1}(D_g)J=\varrho_{2,0}(D_g)\otimes I_W\),
Lemma~\ref{lem:collective-factorization-teleportation} applies.
Composing encoding, either noise channel and recovery gives
\begin{equation}
\begin{aligned}
 \operatorname{Rec}\circ\mathcal T_{\rm col}\circ\operatorname{Enc}
 &=\id_3,\\
 \operatorname{Rec}\circ\mathcal T_{\rm ind}\circ\operatorname{Enc}
 &=\Delta_3,
\end{aligned}
 \label{eq:weyl-recovered-branches}
\end{equation}
where \(\Delta_3(A):=\Tr(A)I_3/3\).
Table~\ref{tab:weyl-branch-actions} collects the three branches of
Eq.~\eqref{eq:weyl-noise-channel}. Their weighted sum gives the depolarising channel
\(\mathcal D_\varepsilon^{(3)}\) for the unencoded protocol and
the depolarising channel \(\mathcal D_{(1-\mu)\varepsilon}^{(3)}\) for the encoded protocol.

So far we have averaged the physical errors and applied recovery to
obtain a channel on Bob's logical share of the resource. Standard
teleportation introduces one further channel average. If a channel
\(\mathcal M\) acts on Bob's half of a maximally entangled state of
local dimension \(d\), ideal standard teleportation induces the channel
\begin{equation}
 \mathsf T_d(\mathcal M)
 =\frac1{d^2}\sum_{g\in\mathbb Z_d^2}
 \operatorname{Ad}_{D_g^{(d)}}^{-1}\circ\mathcal M
 \circ\operatorname{Ad}_{D_g^{(d)}},
 \label{eq:resource-to-teleportation-twirl}
\end{equation}
up to the Bell outcome convention~\cite{BowenBose2001}.
Here \(D_{(a,b)}^{(d)}=X_d^aZ_d^b\) is a qudit Weyl operator, where
\(X_d\ket j=\ket{j+1\bmod d}\) and
\(Z_d\ket j=e^{i2\pi j/d}\ket j\), and
\(\operatorname{Ad}_D(A):=DAD^\dagger\).
The map \(\mathsf T_d\) averages channels under conjugation, whereas
\(\mathcal T_{\rm col}\) and \(\mathcal T_{\rm ind}\) average
physical errors acting on states. Depolarising channels are fixed by
\(\mathsf T_d\), so the two depolarising transmission channels above
are also the effective channels of ideal standard teleportation.

For the logical qutrit, Theorem~\ref{thm:global-semiuniversality}
supplies a logical Weyl
correction \(\widetilde D_g\) acting as \(I_U\otimes D_g^{\rm L}\) on the selected block and as the identity elsewhere. Recovery obeys
\begin{equation}
 \operatorname{Rec}(\widetilde D_g A\widetilde D_g^\dagger)
 =D_g^{\rm L}\operatorname{Rec}(A)(D_g^{\rm L})^\dagger.
 \label{eq:recovery-displacement-covariance}
\end{equation}
The replacement state \(I_3/3\) is invariant under this correction.
Thus Bob can apply the Bell correction before recovery.

\subsection{Fidelity thresholds and entanglement transmission}
\label{subsec:weyl-threshold-derivation}

\subsubsection{One fidelity formula for both protocols}

Assume the conditions on retained trials and channel composition in
Sec.~\ref{sec:baseline-calibration}. In particular, a fixed CPTP
baseline channel \(\mathcal E_\alpha^{(0)}\) follows the ideal
effective channel, for \(\alpha\in\{{\rm u},{\rm enc}\}\).
In dimension \(d\), write the depolarising map and average fidelity as
\begin{equation*}
\begin{aligned}
 \mathcal D_\xi^{(d)}(A)&=(1-\xi)A+\xi\Tr(A)I_d/d,\\
 F_{\rm avg}(\mathcal M)&=\int d\psi\,
 \langle\psi|\mathcal M(\ketbra{\psi})|\psi\rangle.
\end{aligned}
\end{equation*}
The integral uses normalised Haar measure on pure states, and the Weyl
mixture has \(\xi\in[0,1]\). We define the rescaled fidelities
\begin{equation}
 a_\alpha:=\frac{dF_\alpha-1}{d-1},\qquad
 v_\alpha:=a_\alpha(0)=\frac{dF_\alpha^{(0)}-1}{d-1}.
 \label{eq:rescaled-fidelity-coordinate}
\end{equation}
These rescaled fidelities can be negative for general channels.
Perfect transmission has \(a=1\), and any channel with a fixed
output has \(a=0\). Linearity and the Haar average of
a pure input give
\begin{equation}
 F_{\rm avg}(\mathcal E_\alpha^{(0)}\circ\mathcal D_\xi^{(d)})
 =(1-\xi)F_\alpha^{(0)}+\frac{\xi}{d}.
 \label{eq:affine-baseline-fidelity}
\end{equation}
Indeed, the completely mixed branch has the fixed output
\(\mathcal E_\alpha^{(0)}(I_d/d)\), whose average overlap with
uniformly sampled pure inputs is \(1/d\). No unitality or
isotropy assumption on the baseline channel is needed.

The two rescaled fidelities become
\begin{equation}
\begin{aligned}
 a_{\rm u}&=v_{\rm u}(1-\varepsilon),\\
 a_{\rm enc}&=v_{\rm enc}[1-(1-\mu)\varepsilon].
\end{aligned}
 \label{eq:general-weyl-fidelities}
\end{equation}
For the qutrit protocols, we have
\(F_\alpha=1/3+2a_\alpha/3\). The classical fidelity bound
\(2/(d+1)\)~\cite{Horodecki1999} becomes the horizontal level \(a=1/(d+1)\),
which is \(1/4\) for qutrits. The fidelity crossover is obtained by
intersecting these two lines. Identifying a channel requires more
information than its fidelity curve, as the example in
Appendix~\ref{subsec:experimental-calibration-protocol} shows.

\subsubsection{Fidelity crossover and classical thresholds}

The crossover is the intersection of the two curves in
Eq.~\eqref{eq:general-weyl-fidelities}. Their intersections with
\(a=1/4\) give the qutrit classical thresholds. The superscript
\(\mathrm{MP}\) refers to measure-and-prepare protocols.
These thresholds need not coincide with the entanglement-breaking (EB) boundary of the complete channel.

\begin{samepage}
\begin{proposition}[Crossover and classical fidelity thresholds]
\label{prop:fidelity-classical-benchmark-master}
Suppose the fidelities on retained trials obey
Eq.~\eqref{eq:general-weyl-fidelities}, with
\(1/4<v_{\rm enc}\leq v_{\rm u}\leq1\) and
\(0\leq\mu\leq1\). Except when \(v_{\rm enc}=v_{\rm u}\)
and \(\mu=0\), the encoded fidelity is larger exactly when
\(\varepsilon>\varepsilon_\times\), with thresholds in
Table~\ref{tab:teleportation-thresholds}.
An interval in which \(F_{\rm u}\leq1/2<F_{\rm enc}\) exists
exactly when
\begin{equation}
 v_{\rm enc}[1+\mu(4v_{\rm u}-1)]>v_{\rm u}.
 \label{eq:classical-benchmark-condition-master}
\end{equation}
For \(\mu<1\), that interval is
\([\varepsilon_{\rm u}^{\rm MP},\varepsilon_{\rm enc}^{\rm MP})
\cap[0,1]\). For \(\mu=1\), it is
\([\varepsilon_{\rm u}^{\rm MP},1]\).
\end{proposition}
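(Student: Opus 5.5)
The plan is to work entirely with the two affine lines in Eq.~\eqref{eq:general-weyl-fidelities}, with \(a\) as the vertical coordinate and \(\varepsilon\in[0,1]\) as the variable. The statement has three parts: the crossover, the existence condition for an interval with \(F_{\rm u}\leq1/2<F_{\rm enc}\), and the explicit form of that interval. Each part reduces to comparing where these two lines cross each other and where they cross the level \(a=1/4\).

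\textbf{Crossover.} I would first subtract the two lines:
\begin{equation*}
 a_{\rm enc}-a_{\rm u}=(v_{\rm enc}-v_{\rm u})+\varepsilon\bigl[v_{\rm u}-(1-\mu)v_{\rm enc}\bigr].
\end{equation*}
The slope is \(v_{\rm u}-v_{\rm enc}+\mu v_{\rm enc}\geq0\), with equality only when \(v_{\rm enc}=v_{\rm u}\) and \(\mu=0\). That is exactly the excluded case, in which the two curves coincide.

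Otherwise the difference is strictly increasing in \(\varepsilon\), so the encoded fidelity is larger exactly for \(\varepsilon>\varepsilon_\times\), where
\begin{equation*}
 \varepsilon_\times=\frac{v_{\rm u}-v_{\rm enc}}{v_{\rm u}-(1-\mu)v_{\rm enc}}.
\end{equation*}
This value lies in \([0,1)\) whenever \(\mu>0\). It also reproduces Eq.~\eqref{eq:fidelity-gain-minus-cost} after multiplying through by \(2/3\).

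\textbf{Classical thresholds.} Since \(v_{\rm u}>1/4\), solving \(a_{\rm u}=1/4\) gives
\begin{equation*}
 \varepsilon_{\rm u}^{\rm MP}=1-\frac{1}{4v_{\rm u}}\in(0,1).
\end{equation*}
The unencoded fidelity satisfies \(F_{\rm u}\leq1/2\) exactly for \(\varepsilon\geq\varepsilon_{\rm u}^{\rm MP}\), because its line is decreasing.

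For \(\mu<1\), solving \(a_{\rm enc}=1/4\) gives
\begin{equation*}
 \varepsilon_{\rm enc}^{\rm MP}=\frac{1-1/(4v_{\rm enc})}{1-\mu}.
\end{equation*}
Translating back through \(v=(3F^{(0)}-1)/2\), this agrees with Eq.~\eqref{eq:encoded-threshold-master}. The encoded fidelity satisfies \(F_{\rm enc}>1/2\) exactly for \(\varepsilon<\varepsilon_{\rm enc}^{\rm MP}\). For \(\mu=1\), \(a_{\rm enc}=v_{\rm enc}>1/4\) for every \(\varepsilon\), so \(F_{\rm enc}>1/2\) on all of \([0,1]\).

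\textbf{Existence of the interval.} The set where \(F_{\rm u}\leq1/2<F_{\rm enc}\) is therefore \([\varepsilon_{\rm u}^{\rm MP},\varepsilon_{\rm enc}^{\rm MP})\cap[0,1]\) when \(\mu<1\), and \([\varepsilon_{\rm u}^{\rm MP},1]\) when \(\mu=1\). It is nonempty exactly when \(a_{\rm enc}(\varepsilon_{\rm u}^{\rm MP})>1/4\). This single test also covers the case where \(\varepsilon_{\rm enc}^{\rm MP}\) exceeds one, because \(\varepsilon_{\rm u}^{\rm MP}<1\).

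Substituting \(\varepsilon_{\rm u}^{\rm MP}=1-1/(4v_{\rm u})\) gives
\begin{equation*}
 1-(1-\mu)\varepsilon_{\rm u}^{\rm MP}=\frac{1+\mu(4v_{\rm u}-1)}{4v_{\rm u}},
\end{equation*}
so the condition reads
\begin{equation*}
 v_{\rm enc}\,\frac{1+\mu(4v_{\rm u}-1)}{4v_{\rm u}}>\frac14.
\end{equation*}
Multiplying by \(4v_{\rm u}>0\) yields Eq.~\eqref{eq:classical-benchmark-condition-master}. When \(\mu=1\), this reduces to \(4v_{\rm enc}v_{\rm u}>v_{\rm u}\), which holds automatically, consistent with the interval always existing in that case.

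\textbf{Main obstacle.} The hardest part is bookkeeping rather than analysis. I need to handle the intersection with \([0,1]\) correctly, the endpoint conventions (closed at \(\varepsilon_{\rm u}^{\rm MP}\) because the bound there is \(\leq1/2\), open at \(\varepsilon_{\rm enc}^{\rm MP}\) because the bound there is \(>1/2\)), and the degenerate cases \(\mu=1\) and \(\mu=0\) with \(v_{\rm enc}=v_{\rm u}\). The fact that makes all of these go through is monotonicity of each line, which needs \(v_\alpha>0\); this is guaranteed by the hypothesis \(v_\alpha>1/4\).
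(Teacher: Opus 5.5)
Your proof is correct and follows the paper's own argument: you subtract the two affine lines, use the nonnegative slope $v_{\rm u}-(1-\mu)v_{\rm enc}$ (which vanishes only in the excluded case), solve for each threshold, and reduce existence of the interval to $a_{\rm enc}(\varepsilon_{\rm u}^{\rm MP})>1/4$ before substituting. Your extra bookkeeping on the endpoints, the intersection with $[0,1]$, and the case $\mu=1$ fills in details the paper leaves implicit.
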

\end{samepage}

\begin{table}[tbp]
\caption{Qutrit thresholds under
\(1/4<v_{\rm enc}\leq v_{\rm u}\leq1\). The crossover formula
excludes identical curves. The encoded classical threshold assumes
\(\mu<1\). All intervals are restricted to \([0,1]\).}
\label{tab:teleportation-thresholds}
\begin{ruledtabular}
\begin{tabular}{l c}
Comparison & Threshold\\ \hline
\(F_{\rm enc}=F_{\rm u}\) &
\(\displaystyle\varepsilon_\times=
\frac{v_{\rm u}-v_{\rm enc}}{v_{\rm u}-(1-\mu)v_{\rm enc}}\)\\[6pt]
\(F_{\rm u}=1/2\) &
\(\displaystyle\varepsilon_{\rm u}^{\rm MP}=1-\frac1{4v_{\rm u}}\)\\[6pt]
\(F_{\rm enc}=1/2\) &
\(\displaystyle\varepsilon_{\rm enc}^{\rm MP}
=\frac{1-(4v_{\rm enc})^{-1}}{1-\mu}\)\\[6pt]
Unencoded depolarising boundary & \(\varepsilon_{\rm dep}^{\rm EB}=3/4\)
\end{tabular}
\end{ruledtabular}
\end{table}

\begin{proof}
Subtracting the rescaled fidelities gives
\begin{equation*}
 a_{\rm enc}-a_{\rm u}
 =v_{\rm enc}-v_{\rm u}
 +\varepsilon[v_{\rm u}-(1-\mu)v_{\rm enc}].
\end{equation*}
Its slope is positive except when the curves coincide. Setting
this difference to zero, or either line to \(1/4\), gives the
values of Table~\ref{tab:teleportation-thresholds}. An interval where only encoding exceeds the classical bound exists exactly when its line is still above that level as the
unencoded line reaches it:
\begin{equation*}
 a_{\rm enc}(\varepsilon_{\rm u}^{\rm MP})>\frac14.
\end{equation*}
Substitution gives Eq.~\eqref{eq:classical-benchmark-condition-master}.
\end{proof}

The complete channel model in Sec.~\ref{sec:baseline-calibration}
is one sufficient route to these curves. The branch probabilities
and fidelities in Appendix~\ref{subsec:branch-resolved-data} also
give their scalar predictions without establishing that channel
model. The proposition uses only the curves, not a conclusion
about when either channel becomes entanglement breaking.

Solving the same condition for the collective weight gives
\begin{equation*}
 \mu>\mu_{\rm crit}
 :=\frac{v_{\rm u}-v_{\rm enc}}{v_{\rm enc}(4v_{\rm u}-1)}.
\end{equation*}
If \(v_{\rm enc}=v_{\rm u}>0\) and \(\mu>0\), encoding improves
fidelity for every \(\varepsilon>0\). If \(\mu=0\), an encoding
with \(v_{\rm enc}\leq v_{\rm u}\) has no fidelity advantage.
The curves coincide when both equalities hold. If
\(v_{\rm enc}>v_{\rm u}\), the encoded protocol already starts
with higher fidelity. A protocol with \(v_\alpha\leq1/4\)
never exceeds the classical bound under these curves.
When \(\varepsilon_{\rm enc}^{\rm MP}>1\), its fidelity stays
above the bound throughout the physical interval, including
\(\varepsilon=1\). If \(\varepsilon_{\rm enc}^{\rm MP}=1\),
equality holds at that endpoint, so it is excluded from the
interval of strict advantage.

The same algebra applies in logical dimension \(d\) whenever the
two curves have the form in Eq.~\eqref{eq:general-weyl-fidelities}.
In the classical-threshold formulas and the expression for
\(\mu_{\rm crit}\), replace \(4\) by \(d+1\). The classical
fidelity level becomes \(2/(d+1)\), and the baseline condition is
\(1/(d+1)<v_{\rm enc}\leq v_{\rm u}\leq1\). The crossover
is unchanged. The cases \(\mu=1\), identical curves and thresholds
outside \([0,1]\) follow the same rules. A baseline at or below
the classical bound never exceeds it under these curves.
For an ideal unencoded baseline, the interval exists precisely when
\(v_{\rm enc}>1/(1+d\mu)\). For qutrits, this is also the
entanglement-comparison condition because ideal unencoded
transmission reaches both boundaries at \(\varepsilon=3/4\).

These crossings concern fidelity. To decide when the unencoded
protocol must destroy entanglement, we now use its channel description.

\subsubsection{When unencoded transmission becomes entanglement breaking}
\label{subsec:depolarising-entanglement-breaking}

The boundary channel and the classical bound for retained trials
both follow from the same Haar averaging identity. Writing
\(\rho_\psi=\ketbra{\psi}\), it is
\begin{equation}
 d\int d\psi\,\Tr(A\rho_\psi)\rho_\psi
 =\frac{A+\Tr(A)I_d}{d+1}.
 \label{eq:haar-measure-prepare-identity}
\end{equation}
It follows by contracting the second moment
\(\int\rho_\psi^{\otimes2}d\psi=(I+\mathsf S)/[d(d+1)]\),
where \(\mathsf S\) exchanges the two copies.

\begin{proposition}[Qutrit depolarising boundary]
\label{prop:qutrit-depolarising-entanglement-breaking}
For \(0\leq\varepsilon\leq1\),
\(\mathcal D_\varepsilon^{(3)}=(1-\varepsilon)\id+\varepsilon\Delta_3\)
is entanglement breaking exactly when \(\varepsilon\geq3/4\).
\end{proposition}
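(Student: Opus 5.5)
The plan is to prove both directions from the Choi state and the Haar identity in Eq.~\eqref{eq:haar-measure-prepare-identity}. The Choi state of \(\mathcal D_\varepsilon^{(3)}\) on the maximally entangled state \(\ket{\Phi_3}\) is the isotropic state
\begin{equation*}
 \rho_\varepsilon=(1-\varepsilon)\ketbra{\Phi_3}+\varepsilon\frac{I_9}{9},
\end{equation*}
and the channel is entanglement breaking exactly when this state is separable~\cite{HorodeckiShorRuskai2003}. Its fully entangled fraction is \(f=\langle\Phi_3|\rho_\varepsilon|\Phi_3\rangle=1-\tfrac{8}{9}\varepsilon\).

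For the necessity direction (\(\varepsilon<3/4\) implies not EB), I will use the standard separability criterion for isotropic states. Every product state satisfies \(\langle\Phi_3|\alpha\otimes\beta|\Phi_3\rangle\leq1/3\): indeed \(\langle\Phi_3|\alpha\otimes\beta|\Phi_3\rangle=\Tr(\alpha\beta^{\mathsf T})/3\leq1/3\). By convexity, every separable state then has \(f\leq1/3\). Since \(1-\tfrac89\varepsilon>1/3\) exactly when \(\varepsilon<3/4\), the Choi state is entangled in this range. As an alternative, I could use average fidelity \(F=(1-\varepsilon)+\varepsilon/3\), which exceeds \(1/2\) only for \(\varepsilon<3/4\). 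That route does not reach all of \([0,3/4)\), so I prefer the fidelity-fraction argument.

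For sufficiency (\(\varepsilon\geq3/4\) implies EB), I will exhibit \(\mathcal D^{(3)}_{3/4}\) as a measure-and-prepare channel. Equation~\eqref{eq:haar-measure-prepare-identity} with \(d=3\) shows that the map
\begin{equation*}
 A\mapsto 3\int d\psi\,\Tr(A\rho_\psi)\rho_\psi=\frac{A+\Tr(A)I_3}{4}
\end{equation*}
is measure-and-prepare. Here \(\{3\rho_\psi\,d\psi\}\) is a POVM, since the same identity with \(A=I\) gives \(I\), and preparation of \(\rho_\psi\) follows each outcome. The right-hand side equals \(\tfrac14A+\tfrac34\Tr(A)I_3/3=\mathcal D^{(3)}_{3/4}(A)\).

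For \(\varepsilon>3/4\), I will write \(\mathcal D^{(3)}_\varepsilon\) as a convex combination of \(\mathcal D^{(3)}_{3/4}\) and \(\Delta_3\):
\begin{equation*}
 \mathcal D^{(3)}_\varepsilon=t\,\mathcal D^{(3)}_{3/4}+(1-t)\Delta_3,\qquad t=4(1-\varepsilon)\in[0,1].
\end{equation*}
Both \(\mathcal D^{(3)}_{3/4}\) and \(\Delta_3\) are EB, and EB channels form a convex set. Equivalently, the composition \(\mathcal D^{(3)}_\varepsilon=\mathcal D^{(3)}_{\xi}\circ\mathcal D^{(3)}_{3/4}\) holds with \(1-\varepsilon=(1-\xi)/4\), and any channel applied after an EB channel is still EB.

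The main obstacle is justifying that the continuous POVM gives a genuine EB channel. I will handle this by noting that a Haar average can be replaced by a finite weighted 2-design, such as a qutrit SIC or a complete set of mutually unbiased bases, which reproduces the same second-moment identity. Alternatively, I can rely on the characterisation of EB channels by separable Choi states, since Choi states of measure-and-prepare maps are integrals of product states and are therefore separable.
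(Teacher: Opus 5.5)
Your proof is correct, and the sufficiency half matches the paper's. The paper also uses the Haar identity \eqref{eq:haar-measure-prepare-identity} to exhibit $\mathcal D^{(3)}_{3/4}$ as a measure-and-prepare channel, with the same convex weights $4(1-\varepsilon)$ and $4\varepsilon-3$. Your remark about replacing the continuous POVM by a finite weighted 2-design addresses a point the paper passes over. For example, the twelve vectors of four mutually unbiased qutrit bases with weight $1/4$ give exactly $(A+\Tr(A)I_3)/4$.

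The necessity half takes a different route. The paper argues that the average fidelity $1-2\varepsilon/3$ exceeds the measure-and-prepare bound $1/2$. You instead bound the fully entangled fraction of the Choi state by $1/3$ for separable states. The two criteria are equivalent through $F_{\rm avg}=(d f+1)/(d+1)$. This maps $f=1-8\varepsilon/9$ to $1-2\varepsilon/3$, and the condition $f\le 1/3$ to $F_{\rm avg}\le 1/2$.

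Your aside that the fidelity route ``does not reach all of $[0,3/4)$'' is therefore mistaken. The inequality $1-2\varepsilon/3>1/2$ holds precisely for $\varepsilon<3/4$, and that is exactly the paper's argument.

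Your route needs fewer prerequisites. For necessity you only need the definition of entanglement breaking applied to $\ket{\Phi_3}$, plus the one-line bound $\Tr(\alpha\beta^{\mathsf T})\le 1$; the full Choi-state characterisation you cite is not required. The paper's route goes through the measure-and-prepare fidelity bound of Eq.~\eqref{eq:haar-heralded-classical-bound}. It therefore implicitly uses the nontrivial fact that every entanglement-breaking channel has measure-and-prepare form. In exchange, the paper reuses the classical fidelity benchmark it needs anyway for the teleportation comparison.
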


\begin{proof}
Below \(3/4\), its average fidelity
\(1-2\varepsilon/3\) exceeds the qutrit measure-and-prepare
bound \(1/2\), so it cannot be entanglement breaking.
At the boundary, Eq.~\eqref{eq:haar-measure-prepare-identity}
with \(d=3\) gives
\begin{equation}
 3\int d\psi\,\langle\psi|A|\psi\rangle\ketbra{\psi}
 =\frac{A+\Tr(A)I_3}{4}
 =\mathcal D_{3/4}^{(3)}(A).
 \label{eq:depolarising-boundary-measure-prepare}
\end{equation}
The operators \(3\ketbra{\psi}\,d\psi\) form a measurement,
and the output is the corresponding state \(\ketbra{\psi}\).
Thus the boundary channel measures and prepares. For
\(3/4\leq\varepsilon\leq1\),
\begin{equation*}
 \mathcal D_\varepsilon^{(3)}
 =4(1-\varepsilon)\mathcal D_{3/4}^{(3)}
 +(4\varepsilon-3)\Delta_3.
\end{equation*}
This is a convex combination of entanglement-breaking channels.
\end{proof}

Under the complete channel model, subsequent baseline errors
cannot restore entanglement. The unencoded protocol is thus
entanglement breaking for \(\varepsilon\geq3/4\), and possibly
earlier. For the entanglement comparison, evaluate the encoded line at
this boundary, rather than at the unencoded classical threshold.
The encoded protocol still exceeds \(1/2\) there precisely when
\begin{equation*}
 v_{\rm enc}(1+3\mu)>1
 \quad\Longleftrightarrow\quad
 F_{\rm enc}^{(0)}>\frac{1+\mu}{1+3\mu}.
\end{equation*}
This yields the interval in
Proposition~\ref{prop:encoded-teleportation-entanglement}.
For \(\mu=1\), it is \([3/4,1]\) whenever
\(F_{\rm enc}^{(0)}>1/2\). A fidelity at or below \(1/2\)
does not by itself imply that a channel is entanglement breaking.

\subsubsection{A numerical comparison of the two intervals}

For Fig.~\ref{fig:experimental-teleportation},
\(v_{\rm u}=19/20\), \(v_{\rm enc}=4/5\), and \(\mu=1/5\).
The curves are
\begin{equation}
\begin{aligned}
 F_{\rm u}&=\frac{29}{30}-\frac{19}{30}\varepsilon,\\
 F_{\rm enc}&=\frac{13}{15}-\frac{32}{75}\varepsilon.
\end{aligned}
 \label{eq:reference-fidelity-curves}
\end{equation}
Table~\ref{tab:reference-teleportation-thresholds} orders the four
relevant noise strengths. On \([14/19,55/64)\), only the encoded
fidelity exceeds \(1/2\). This interval has width
\(149/1216\simeq0.122533\). The entanglement comparison holds
on the smaller interval \([3/4,55/64)\). At its lower endpoint,
\(F_{\rm enc}(3/4)=41/75\), a margin of \(7/150\) above
\(1/2\). For these baselines, an interval where only encoding exceeds the
classical fidelity bound exists exactly when
\(\mu>15/224\simeq0.0669643\). The sufficient condition in
Proposition~\ref{prop:encoded-teleportation-entanglement} for the
entanglement comparison is instead \(\mu>1/12\). These values
are chosen examples, not device data.

\begin{table}[tbp]
\caption{Noise strengths for the example in
Fig.~\ref{fig:experimental-teleportation}. The entanglement-breaking
statement uses the complete channel model.}
\label{tab:reference-teleportation-thresholds}
\begin{ruledtabular}
\begin{tabular}{c >{\raggedright\arraybackslash}p{0.76\columnwidth}}
\(\varepsilon\) & Meaning\\ \hline
\(15/31\) & The two fidelities cross\\
\(14/19\) & Unencoded fidelity reaches \(1/2\)\\
\(3/4\) & Unencoded channel is guaranteed to be entanglement breaking\\
\(55/64\) & Encoded fidelity reaches \(1/2\)
\end{tabular}
\end{ruledtabular}
\end{table}

\subsection{Averaging fidelity after trials are discarded}
\label{subsec:branch-resolved-data}

\subsubsection{How discarding trials changes the noise weights}

A branch contributes in proportion to how often it is sampled and
how often its trials are retained. If branch \(j\) is sampled
with probability \(\pi_j\), retains a fraction \(h_j\), and has
fidelity \(F_j\) among its retained trials, then
\begin{equation*}
 F^{\rm acc}=\frac{\sum_j\pi_jh_jF_j}{\sum_j\pi_jh_j}.
\end{equation*}
For example, equal sampling of two branches with retention
probabilities \(1\) and \(1/2\) gives retained weights \(2/3\)
and \(1/3\). Inputs are reweighted within each branch in the
same way. Thus \(F_j\) averages the actual retained data, rather
than giving every input equal weight after normalisation.

To include branches that retain no trials, use unnormalised
outputs. Label the branches \(j\in\{{\rm id},{\rm c},{\rm i}\}\),
with probabilities \(\pi_j\) from
Table~\ref{tab:weyl-branch-actions}, chosen independently of the
input. For protocol \(\alpha\), let \(\mathcal S_\alpha^j\)
be the completely positive, trace-nonincreasing map for outputs
retained in branch \(j\), averaged over the original uniform
sampling of its labels.
For \(\rho_\psi=\ketbra{\psi}\), define
\begin{equation*}
 h_\alpha^j=\int d\psi\,\Tr[\mathcal S_\alpha^j(\rho_\psi)],
 \qquad
 q_\alpha^j=\int d\psi\,
 \langle\psi|\mathcal S_\alpha^j(\rho_\psi)|\psi\rangle.
\end{equation*}
Here \(d\psi\) is normalised Haar measure on pure logical inputs.
The first quantity is the probability of retaining a trial in that
branch, averaged over inputs. The second averages the overlap with
the unnormalised output. The probability that a trial is retained and the
average fidelity over all retained trials are
\begin{equation}
\begin{aligned}
 P_\alpha^{\rm acc}&=\sum_j\pi_jh_\alpha^j,\\
 F_\alpha^{\rm acc}
 &=\frac{\sum_j\pi_jq_\alpha^j}{P_\alpha^{\rm acc}},
 \qquad P_\alpha^{\rm acc}>0.
\end{aligned}
 \label{eq:branch-resolved-observables}
\end{equation}
Positivity gives \(0\leq q_\alpha^j\leq h_\alpha^j\).
If \(h_\alpha^j=0\), the branch contributes zero without requiring
a definition of its fidelity. Otherwise
\(F_\alpha^j=q_\alpha^j/h_\alpha^j\), recovering the weighted
average above.

The main-text curves follow if all three branches have the same
positive probability of retaining a trial within each protocol, and
their fidelities have the values in
Table~\ref{tab:accepted-branch-fidelities}. These scalar conditions
do not establish the factorisation into channels used in the
entanglement comparison.

\begin{table}[tbp]
\caption{Branch fidelities sufficient to recover the main-text
curves when trials from every branch are equally likely to be retained.
The qutrit protocol has \(d=3\).}
\label{tab:accepted-branch-fidelities}
\begin{ruledtabular}
\begin{tabular}{l c c}
Branch & Unencoded & Encoded\\ \hline
No randomisation & \(F_{\rm u}^{(0)}\) & \(F_{\rm enc}^{(0)}\)\\
Shared label & \(1/d\) & \(F_{\rm enc}^{(0)}\)\\
Independent labels & \(1/d\) & \(1/d\)
\end{tabular}
\end{ruledtabular}
\end{table}

\subsubsection{The classical fidelity bound for retained trials}

Discarding trials does not raise the classical fidelity bound when one
copy of a uniformly sampled pure input is provided per trial and
fidelity is averaged over all retained trials. This remains true
even when the probability of retaining a trial depends on the input.
For a measure-and-prepare protocol, the map for retained outputs is
\(\mathcal S_{\rm cl}(\rho)=\sum_y\Tr(M_y\rho)\sigma_y\),
where \(M_y\geq0\), \(\sum_yM_y\leq I_d\), and each
\(\sigma_y\) is a density operator. The first Haar moment gives
a retained probability \(\sum_y\Tr M_y/d\). Set \(A=M_y\) in
Eq.~\eqref{eq:haar-measure-prepare-identity} and take the overlap
with \(\sigma_y\). Each numerator contribution is
\([\Tr M_y+\Tr(M_y\sigma_y)]/[d(d+1)]\). Provided the
retained probability is nonzero,
\begin{equation}
 F_{\rm MP}^{\rm acc}
 =\frac1{d+1}\left(1+
 \frac{\sum_y\Tr(M_y\sigma_y)}{\sum_y\Tr M_y}\right)
 \leq\frac{2}{d+1}.
 \label{eq:haar-heralded-classical-bound}
\end{equation}
Every prepared state obeys \(0\leq\sigma_y\leq I_d\), so the
ratio is at most one, regardless of which outcomes are discarded.
A rank-one
projective measurement followed by preparation of the detected
basis state attains it. In particular, the qutrit bound is \(1/2\).

If a trial is retained with the same probability \(h>0\) for every
input, then \(\mathcal S/h\) is a CPTP channel. Input-dependent
normalisation generally gives a nonlinear map, although the average
fidelity over retained trials remains well defined. For a
measure-and-prepare map it obeys the bound above. A different input
ensemble, or averaging separately normalised input fidelities,
requires its own derivation of the classical bound.

\subsection{Implementing and testing the noise model}
\label{subsec:experimental-calibration-protocol}

\subsubsection{Sampling errors with four or five qutrits}

First measure both baseline fidelities and the fractions of trials
retained at \(\varepsilon=0\). Applying a unitary 2-design to one
fixed pure state supplies a finite input ensemble for the Haar
averages~\cite{DankertEtAl2009}.
Then sample the branches and Weyl labels with the probabilities in
Eq.~\eqref{eq:weyl-noise-channel}. Apply \(D_g\) to a \(V\) factor
and \(\overline D_g\) to a \(V^*\) factor. Pad the identity branch
to the same duration and nominal pulse depth.

In the five-qutrit realisation, an isometry
\(J_*:V^*\to\wedge^2V\) represents the conjugate factor by an
antisymmetric pair. Since \(D_g\in SU(3)\),
\begin{equation}
 (D_g\otimes D_g)J_*=J_*\overline D_g.
 \label{eq:weyl-conjugate-pair-implementation}
\end{equation}
To express \(K\) in this realisation, choose
\begin{equation}
 J_*\ket{\bar a}
 =\frac{1}{\sqrt2}\sum_{b,c=0}^{2}
 \epsilon_{abc}\ket b\otimes\ket c,
 \qquad a\in\{0,1,2\},
 \label{eq:dual-isometry-explicit}
\end{equation}
with \(\epsilon_{abc}\) the Levi-Civita tensor and \(\epsilon_{012}=1\).
This is an isometry onto the antisymmetric subspace
\(\wedge^2V\subset V\otimes V\). Applying it to the conjugate factor
of the normalised singlet
\(\ket\Omega=3^{-1/2}\sum_{a=0}^{2}\ket a\otimes\ket{\bar a}\),
we find
\begin{equation}
\begin{aligned}
 (I_V\otimes J_*)\ket\Omega
 &=\frac{1}{\sqrt6}\sum_{a,b,c=0}^{2}
 \epsilon_{abc}\ket a\otimes\ket b\otimes\ket c\\
 &=:\ket{A_3}.
\end{aligned}
 \label{eq:encoded-antisymmetric-singlet}
\end{equation}
The vector \(\ket{A_3}\) is the normalised totally antisymmetric
three-qutrit state. The interaction \(K=3\ketbra\Omega\) therefore becomes
\begin{equation}
\begin{aligned}
 (I_V\otimes J_*)K(I_V\otimes J_*^\dagger)
 &=3\ketbra{A_3}\\
 &=3P_{\wedge^3V}.
\end{aligned}
 \label{eq:encoded-singlet-interaction}
\end{equation}
Here we view \(J_*\) as an isometry into \(V\otimes V\).
This operator vanishes outside \(V\otimes\wedge^2V\) and acts
on one fundamental qutrit together with the two representing the conjugate
factor. It is the three-qutrit interaction described in
Sec.~\ref{sec:heralded-cost}.

The independent branch still samples four labels: three for the
individual qutrits and one shared by the two members of the pair.
The collective branch applies one label to all five qutrits.
Choosing five independent labels would change the noise model and
need not preserve the antisymmetric subspace.

In either realisation, independently chosen labels can move the
state outside the selected noiseless sector. Recovery retains these
trials and replaces that component by \(I_3/3\). Trials with a
signalled loss or another detected failure are discarded and counted
separately. The deliberately applied Weyl errors
remain part of the noise: they are not undone or absorbed into a
change of the receiver's reference frame.

\subsubsection{Checking which trials are retained}

For each setting of \(\varepsilon,\mu\), record how many trials
are attempted and retained for each input, noise branch and sampled
label, together with their fidelity data. This allows one to check
whether discarding trials favours particular inputs or errors.
Use the attempted and retained counts together with the fidelity data
to calculate both quantities in Eq.~\eqref{eq:branch-resolved-observables},
averaging with the original input sampling weights. The retained inputs
need not themselves form a 2-design. The resulting averages test
the fidelity conditions for each branch in
Table~\ref{tab:accepted-branch-fidelities} and the predicted curves.

To test the sampling assumption in
Sec.~\ref{sec:baseline-calibration}, compare the retained
probabilities jointly for each input, branch and label at fixed
\(\varepsilon,\mu\) within each protocol. Separate averages
can hide selection bias. For example, choose two branches equally and
retain their outputs using the qutrit measurement effects
\begin{equation*}
 M_\pm=\tfrac12[I_3\pm t\,\operatorname{diag}(1,-1,0)],
 \qquad 0<t\leq1.
\end{equation*}
The overall retained probability is \(1/2\) for every input,
and the Haar-averaged probability is \(1/2\) in each branch.
Yet within each branch, this probability depends on the input. One branch favours
level \(\ket0\) and the other favours level \(\ket1\), so their
biases cancel in the separate averages. The retained branch weights
still depend on the input.

\subsubsection{Testing the complete channel model}

The entanglement comparison additionally requires a model of the
complete protocol that places a fixed baseline channel after the
ideal effective noise channel. Even if errors are independent of the
branch and label, this order of composition needs a separate justification.
Process tomography can test the proposed channel description.
Baseline fidelity, isotropy tests and randomised benchmarking alone
do not establish it. Randomised and cycle benchmarking characterise
averaged gate errors and stochastic Weyl components
\cite{MagesanEtAl2011,MorvanEtAl2021,GossEtAl2022,GossEtAl2024}. For temporally correlated quantum noise, protected-gate
performance can depend on control history and bath
evolution~\cite{BurgelmanEtAl2025}. A fixed baseline must therefore be
checked over the protocol sequences used in the comparison.

A simple example shows why even the whole fidelity curve is
insufficient to identify the channel. The qutrit reset channel
\begin{equation}
 \mathcal R_\varepsilon(\rho)
 =(1-\varepsilon)\rho
 +\varepsilon\Tr(\rho)\ketbra{0}
 \label{eq:reset-channel-fidelity-example}
\end{equation}
has average fidelity \(1-2\varepsilon/3\), exactly the
ideal depolarising curve: the fixed reset output contributes
an average overlap \(1/3\). However, it is not entanglement
breaking for any \(\varepsilon<1\). Store one half of a maximally
entangled qubit pair in levels \(\ket{1},\ket{2}\). Projecting the output onto
those levels recovers the entangled state with probability
\(1-\varepsilon>0\), since the reset output \(\ket{0}\) is
orthogonal to them. A local projection cannot create entanglement
from a separable state. This proves that the output before the
projection remains entangled, even where the average fidelity is
at or below \(1/2\).

The baseline refers to the complete teleportation protocol, rather
than directly to noise on its resource. For example, a resource
error \(\mathcal J\) after depolarising transmission gives
\(\mathsf T_d(\mathcal J\circ\mathcal D_\xi^{(d)})
=\mathsf T_d(\mathcal J)\circ\mathcal D_\xi^{(d)}\).
The corresponding baseline is \(\mathsf T_d(\mathcal J)\).
A Weyl channel twirl generally produces a stochastic Weyl channel,
not a depolarising channel. A Haar or suitable unitary 2-design
conjugation twirl makes a channel isotropic while preserving average
fidelity~\cite{Nielsen2002,DankertEtAl2009,MorvanEtAl2021}.

Errors in the applied gates can be included in the baseline only
when this complete channel model remains valid. If errors, loss,
leakage from the qutrit levels or detection depend on branch or
label, use the measured branch maps and
Eq.~\eqref{eq:branch-resolved-observables}. An entanglement comparison
then needs a channel description appropriate to those effects.

\subsection{Physical resources and transmission costs}
\label{app:resource-accounting}

\subsubsection{Transmission loss and the cost of obtaining an output}

Assume that each transmitted physical qutrit arrives independently
with probability \(0<\eta\leq1\). Assume also that every loss is detected
without false signals and that decoding requires all transmitted
qutrits, with all of them sent on each attempt. To keep the fidelity
on retained trials unchanged, loss
must be independent of the input, noise branch,
Weyl labels and processing errors. If the protocol without loss is
described by a CPTP channel \(\Lambda_n\), independent loss in
transmission of its \(n\) qutrits gives the erasure channel
\begin{equation}
 \mathcal T_n(\rho)
 =\eta^n\Lambda_n(\rho)
 +(1-\eta^n)\Tr(\rho)\ketbra e,
 \label{eq:heralded-erasure-channels}
\end{equation}
where \(\ket e\) is orthogonal to the logical output space and
\(\Lambda_n\) is the channel on trials with no erasure.
A full attempt succeeds with probability \(\eta^n\). With
independent repeated attempts, the mean number of attempts until an
output arrives, including the successful attempt, is \(1/\eta^n\).
Each attempt sends all \(n\)
qutrits, giving the mean cost \(n/\eta^n\) in
Table~\ref{tab:loss-transmission-costs}. Equivalently, the expected
number of outputs per physical transmission is \(\eta^n/n\).
Under these assumptions loss reduces this rate while leaving the
fidelity of received outputs unchanged.

\begin{table}[tbp]
\caption{Costs from transmission loss when all \(n\) qutrits must arrive.
The unencoded protocol has \(n=1\), and the two encoded
realisations have \(n=4\) and \(n=5\).}
\label{tab:loss-transmission-costs}
\begin{ruledtabular}
\begin{tabular}{>{\raggedright\arraybackslash}p{0.69\columnwidth}c}
Quantity & Value\\ \hline
Arrival probability per logical attempt & \(\eta^n\)\\
Expected physical transmissions per received output & \(n/\eta^n\)\\
Encoded/unencoded arrival ratio per logical attempt & \(\eta^{n-1}\)\\
Encoded/unencoded ratio of outputs per physical transmission & \(\eta^{n-1}/n\)
\end{tabular}
\end{ruledtabular}
\end{table}

For different transmittances of the two types, the encoded arrival
probability is \(\eta_V^r\eta_{V^*}^s\). Representing every
\(V^*\) factor by two ordinary qutrits gives \(\eta_V^{r+2s}\).
Preparation, Bell-measurement and detection efficiencies must also
be included. When these are independent stages, their probabilities
multiply the arrival probability. Otherwise measure the overall
probability of obtaining a retained output. Reducing the cost of
requiring every qutrit to arrive calls for a loss-tolerant encoding,
multiplexing or another method for increasing the transmission rate~\cite{IkutaEtAl2011}.

\subsubsection{Entanglement and communication resources}

Only Bob's share of the entangled resource is encoded. Before
transmission noise is applied, the resource state is, up to a local
isometry and tensor ordering,
\begin{equation}
 \sigma_{A:UW}=\Phi_{AW}\otimes\tau_U,
 \label{eq:encoded-resource-gauge-product}
\end{equation}
where \(\Phi_{AW}\) is a maximally entangled state of local
dimension \(d\). Attaching and discarding \(\tau_U\) are local operations
on Bob's side. Applying monotonicity in both directions shows that
distillable entanglement and entanglement of formation both remain
\(\log_2d\), for a pure or mixed gauge state. The Bell measurement
still has \(d^2\) outcomes, which can be communicated using two
classical dits.
This counts the entanglement supplied to the protocol. Noise during
distribution can reduce the entanglement that reaches the receiver.

Table~\ref{tab:resource-ledger-master} gives the qutrit comparison.
The selected block has dimension \(6\times3=18\) inside the
81-dimensional physical space. These dimensions are not gate
counts. Efficient Schur transforms are known for both pure and
mixed tensor spaces~\cite{BaconChuangHarrow2006,Nguyen2023MixedSchur}.
Those constructions use general circuit gates. This work does not
supply an encoding or recovery circuit for the chosen physical
interactions, or a gate count for either operation.
For each realisation, both the baseline fidelity and the probability
of retaining a trial must be measured. A fidelity advantage alone
does not establish a communication-rate advantage at fixed resources.

\begin{table*}[t]
\caption{Resources when only Bob's share is encoded and transmitted.
Entanglement is counted before transmission noise. Arrival
probabilities use the independent loss assumptions above.}
\label{tab:resource-ledger-master}
\begin{ruledtabular}
\begin{tabular}{>{\raggedright\arraybackslash}p{0.20\textwidth}
>{\raggedright\arraybackslash}p{0.23\textwidth}
>{\raggedright\arraybackslash}p{0.46\textwidth}}
Resource & Unencoded protocol & Encoded protocol\\ \hline
Transmitted qutrits & one & four with both types directly available, or five using an antisymmetric pair\\
Distillable entanglement and entanglement of formation & \(\log_2 3\) & \(\log_2 3\)\\
Bell outcomes and messages & nine outcomes, two trits & nine outcomes, two trits, with logical correction before recovery\\
Baseline fidelity & \(F_{\rm u}^{(0)}\) & \(F_{\rm enc}^{(0)}\)\\
Arrival probability & \(\eta\) & \(\eta^4\) or \(\eta^5\)\\
Control interactions & ordinary teleportation controls & exchanges and \(K=3P_\Omega\). In the five-qutrit realisation, \(K\) acts on three physical qutrits\\
Additional preparation and processing & none from an encoding & gauge preparation, encoding, logical correction and recovery. Their synthesis costs are not supplied
\end{tabular}
\end{ruledtabular}
\end{table*}

\subsubsection{Alternative logical qutrits}
\label{app:compact-blocks}

The decomposition in Eq.~\eqref{eq:H31-logical-qutrit-decomposition}
also contains \(W_{3,1}^{0,1}\), a second logical qutrit.
Its block has dimension \(3\times3=9\), compared with 18 for the
block containing \(W_{3,1}^{2,0}\), but both choices use the same four physical
factors. Proposition~\ref{prop:analytic-31-semiuniversality}
proves independent \(SU(3)\) control on these two spaces.
We select \(W_{3,1}^{2,0}\) because it is the target of the pulse
sequence in Sec.~\ref{subsec:explicit-sector-control}.

Four ordinary qutrits also contain noiseless qutrits:
\(V^{\otimes4}\cong U_{4,0}\oplus3U_{2,1}
\oplus2U_{0,2}\oplus3U_{1,0}\).
This decomposition alone does not imply the independent control
proved here for systems containing both types.

\subsection{Other noise models and bounds on fidelity loss}
\label{subsec:collective-extensions}

The Weyl mixture isolates the effects of shared and independent
labels. It is not a microscopic model of arbitrary relaxation,
local dephasing, leakage, unequal rotations, spatial or temporal
correlations, or crosstalk. Some other collective processes still
preserve the logical state exactly. Additional logical noise
requires a separate bound or a measured channel model.

\subsubsection{Other collective processes that preserve the logical state}

Lemma~\ref{lem:collective-factorization-teleportation} applies to
any channel whose Kraus operators act only on the gauge factor.
This includes channels obtained by averaging collective rotations
\(R_{r,s}(U)\) over any distribution \(\nu\). For an unencoded qutrit, set
\(M_\nu=\int d\nu(U)|\Tr U|^2\). Applying a sampled rotation
with probability \(\varepsilon\), and otherwise doing nothing,
gives average fidelity \(1-\varepsilon(9-M_\nu)/12\), by
\(F_{\rm avg}(U)=(|\Tr U|^2+3)/12\)~\cite{Nielsen2002}.
This scalar value does not determine composition with an arbitrary
baseline channel. A Haar or unitary 2-design conjugation twirl
makes the channel isotropic, with rescaled fidelity
\(1-\varepsilon(9-M_\nu)/8\). After the baseline, the rescaled
fidelities are then, with the same assumed order of channels,
\(v_{\rm u}[1-\varepsilon(9-M_\nu)/8]\) and \(v_{\rm enc}\)
\cite{DankertEtAl2009}. Without that additional twirl, the composed
channels should be measured directly. This argument applies to
other specified random unitary ensembles without interpreting
\(\mu\) as a microscopic noise parameter.

Continuous collective dynamics provide another example. For
anti-Hermitian \(T_a\in\mathfrak{su}(3)\),
\begin{equation}
 J_a=dR_{r,s}(T_a)
 =\sum_{i=1}^r T_a^{(i)}+
 \sum_{j=1}^s\overline{T_a}^{(\bar j)}.
 \label{eq:matched-markov-generator}
\end{equation}
Each \(J_a\) acts on the gauge factor alone in every sector.
If the Hamiltonian and jump operators of a Lindblad generator
preserve each sector and act only on its gauge factor, the
resulting channels satisfy the lemma
\cite{TicozziViola2008}. Engineered noncommuting collective dynamics
have been demonstrated experimentally~\cite{ViolaEtAl2001}.

\subsubsection{Comparing measured decay rates}
\label{app:markovian}

A device may instead be described by fitted exponential decays of
the rescaled fidelities,
\begin{equation}
 a_{\rm u}(t)=v_{\rm u}e^{-\Gamma_{\rm u}t},
 \qquad a_{\rm enc}(t)=v_{\rm enc}e^{-\Gamma_{\rm L}t}.
 \label{eq:calibrated-markov-decays}
\end{equation}
The rescaling is the same as in
Eq.~\eqref{eq:rescaled-fidelity-coordinate}, and
\(\Gamma_{\rm L}\) describes residual logical noise. It vanishes
in the exact collective model. Assume
\(0\leq\Gamma_{\rm L}<\Gamma_{\rm u}\),
\(v_{\rm enc}\leq v_{\rm u}\), and that both initial fidelities exceed
\(2/(d+1)\). Solving for the crossing of the curves and for
equality with the classical bound gives
\begin{equation*}
\begin{aligned}
 t_\times&=\frac{\ln(v_{\rm u}/v_{\rm enc})}
 {\Gamma_{\rm u}-\Gamma_{\rm L}},\\
 t_{\rm u}^{\rm MP}&=\frac{\ln[(d+1)v_{\rm u}]}{\Gamma_{\rm u}},
 \qquad
 t_{\rm enc}^{\rm MP}=\frac{\ln[(d+1)v_{\rm enc}]}{\Gamma_{\rm L}}.
\end{aligned}
\end{equation*}
The last expression assumes \(\Gamma_{\rm L}>0\). At
\(\Gamma_{\rm L}=0\), the encoded threshold is infinite.
Only the encoded protocol exceeds the bound on
\([t_{\rm u}^{\rm MP},t_{\rm enc}^{\rm MP})\) when these endpoints
satisfy \(t_{\rm u}^{\rm MP}<t_{\rm enc}^{\rm MP}\).
These are fidelity comparisons. The fitted rates are
not identified with the parameters of the finite Weyl model.

\subsubsection{A fidelity bound from distance to ideal transmission}
\label{app:weak-breaking}

Let \(\mathcal S_\zeta\) be the completely positive,
trace-nonincreasing map for the transmission outputs retained in the
presence of additional noise of strength \(\zeta\). Assume that
\(h_\zeta=\Tr[\mathcal S_\zeta(\operatorname{Enc}(\rho))]>0\)
is independent of the logical input \(\rho\). The normalised
logical channel and its distance from ideal transmission are
\begin{equation*}
 \Lambda_\zeta=\frac{\operatorname{Rec}\circ\mathcal S_\zeta
 \circ\operatorname{Enc}}{h_\zeta},
 \qquad
 \varepsilon_\diamond=\frac12
 \|\Lambda_\zeta-\id_d\|_\diamond.
\end{equation*}
Input-independent normalisation makes \(\Lambda_\zeta\) CPTP.
The distance \(\varepsilon_\diamond\) measures the total deviation
of the recovered transmission from the identity channel. The reference
fidelity \(F_{\rm enc}^{(0)}\) in the bounds below is the calibrated
baseline fidelity of Sec.~\ref{sec:baseline-calibration}.
Assume, as before, that a fixed CPTP baseline channel
\(\mathcal E_{\rm enc}^{(0)}\) follows the effective teleportation
channel \(\mathsf T_d(\Lambda_\zeta)\). The Weyl average in
teleportation and subsequent composition with a channel cannot
increase diamond distance. Hence
\begin{equation*}
 \tfrac12\|\mathcal E_{\rm enc}^{(0)}\circ
 \mathsf T_d(\Lambda_\zeta)-\mathcal E_{\rm enc}^{(0)}\|_\diamond
 \leq\varepsilon_\diamond.
\end{equation*}
For each pure input, the overlaps of the two output states with
that input differ by at most \(\varepsilon_\diamond\). Averaging gives
\begin{equation}
 F_{\rm enc}^{(\zeta)}\geq
 F_{\rm enc}^{(0)}-\varepsilon_\diamond.
 \label{eq:weak-breaking-general-baseline}
\end{equation}
This bound needs no isotropy assumption on the baseline.

A depolarising baseline gives a sharper bound. Suppose
\(\mathcal E_{\rm enc}^{(0)}(A)
=v_{\rm enc}A+(1-v_{\rm enc})\Tr(A)I_d/d\), with
\(0\leq v_{\rm enc}\leq1\). The same distance argument without
baseline errors gives
\(F_{\rm avg}(\Lambda_\zeta)\geq1-\varepsilon_\diamond\).
Haar invariance implies
\(F_{\rm avg}(\mathsf T_d(\Lambda_\zeta))
=F_{\rm avg}(\Lambda_\zeta)\), so
\begin{equation}
\begin{aligned}
 F_{\rm enc}^{(\zeta)}
 &=v_{\rm enc}F_{\rm avg}(\Lambda_\zeta)
   +\frac{1-v_{\rm enc}}d\\
 &\geq F_{\rm enc}^{(0)}-v_{\rm enc}\varepsilon_\diamond.
\end{aligned}
 \label{eq:weak-breaking-fidelity}
\end{equation}
The nonnegative \(v_{\rm enc}\) assumption holds whenever the
baseline exceeds the classical fidelity bound. Comparing the
applicable lower bound with the observed unencoded fidelity \(F_{\rm u}^{\rm obs}\)
or with \(2/(d+1)\)
gives sufficient conditions for a fidelity advantage or for
exceeding the classical fidelity bound, respectively. Both protocols
must be evaluated at the same physical noise parameters.
If the unencoded protocol follows the finite Weyl curve, substitute
its earlier expression for \(F_{\rm u}^{\rm obs}\).
These conditions are sufficient, not necessary. The size of
\(\varepsilon_\diamond\) depends on how the perturbation acts
after encoding and recovery. No universal quadratic scaling with
the additional noise strength is assumed~\cite{BaconLidarWhaley1999}.

% \vspace{0.2cm}

% \section*{Data and code availability}

% No experimental or third-party datasets were used. The data generated
% here include exact Lie-algebra dimension checks
% (Table~\ref{tab:small-system-ranks}) and optimised pulse parameters
% (Table~\ref{tab:explicit-pulse-areas-31}).
% The exact calculations are described in
% Appendix~\ref{subsec:exact-rank-procedure}.

% For Sec.~\ref{subsec:explicit-sector-control} and
% Appendix~\ref{subsec:explicit-pulses-31}, the self-contained package
% \nolinkurl{section_III_C_reproducibility} contains Python code for
% constructing the exact interaction matrices and logical bases, searching
% for pulse sequences, and independently verifying them. It includes all
% 24 parameters of Table~\ref{tab:explicit-pulse-areas-31} in machine-readable
% form, the optimisation records, and saved verification results on the
% 81-dimensional physical space and at 90-digit precision on the logical
% blocks, without phase adjustment. Installation instructions, a description
% of the model and methods, and regression tests are included.
% The figures use the analytical expressions and parameter values given
% in the paper. Public deposition of the supporting code and data is
% pending. The repository link or DOI remains to be added.

% AUTHOR TODO: Deposit the supporting code and data and replace the pending
% sentence with the actual repository/DOI citation before submission.
% The section_III_C_reproducibility package covers Section III C and Table VI;
% its exact rank checks concern (3,1), not all registers in Table V.
% Include the separate rank-check and figure-generation packages in the release.

\let\originalbibfont\bibfont
\renewcommand{\bibfont}{\originalbibfont\setlength{\baselineskip}{10.3pt}}
\bibliography{references}

\end{document}